\def\maketag@@@#1{\hbox{\m@th\normalfont\normalsize#1}}
\newtheorem{theorem}{Theorem}
\newtheorem{lemma}{Lemma}
\newtheorem{remark}{Remark}
\newcommand{\beq}{\begin{equation}}
	\newcommand{\eeq}{\end{equation}}
\newcommand{\paren}[1]{\left(#1\right)}
\newcommand{\field}[1]{\ensuremath{\mathbb{#1}}}
\newcommand{\R}{\ensuremath{\field{R}}} 
\def\BibTeX{{\rm B\kern-.05em{\sc i\kern-.025em b}\kern-.08em
		T\kern-.1667em\lower.7ex\hbox{E}\kern-.125emX}}
\begin{document}
	\title{Optimal Privacy-Aware Stochastic Sampling}
	\author{Chuanghong Weng, Ehsan Nekouei \vspace{-2em} \thanks{
			C. Weng and E. Nekouei are with the Department of Electrical Engineering, City University of Hong Kong (e-mail: cweng7-c@my.cityu.edu.hk; enekouei@cityu.edu.hk). 
			
			The work was partially supported by the Research Grants Council of Hong Kong under Project CityU 21208921, a grant from Chow Sang Sang Group Research Fund sponsored by Chow Sang Sang Holdings International Limited.
	}}
	\maketitle
	\begin{abstract}
		This paper presents a stochastic sampling framework for privacy-aware data sharing, where a sensor observes a process correlated with private information. A sampler determines whether to retain or discard sensor observations, balancing the tradeoff between data utility and privacy. Retained samples are shared with an adversary who may attempt to infer the private process, with privacy leakage quantified using mutual information. The sampler design is formulated as an optimization problem with two objectives: $\left(\romannumeral1\right)$ minimizing the reconstruction error of the observed process using the sampler's output, $\left(\romannumeral2\right)$ reducing the privacy leakages. For a general class of processes, we show that the optimal reconstruction policy is deterministic and derive the optimality conditions for the sampling policy using a dynamic decomposition method, which enables the sampler to  control the adversary's belief about private inputs. For linear Gaussian processes, we propose a simplified design by restricting the sampling policy to a specific collection, providing analytical expressions for the reconstruction error, belief state, and sampling objectives based on conditional means and covariances. Additionally, we develop a numerical optimization algorithm to optimize the sampling and reconstruction policies, wherein the policy gradient theorem for the optimal sampling design is derived based on the implicit function theorem. Simulations demonstrate the effectiveness of the proposed method in achieving accurate state reconstruction, privacy protection, and data size reduction.
	\end{abstract}
	
	\begin{IEEEkeywords}
		Stochastic sampler, information-theoretic privacy, mutual information, optimization
	\end{IEEEkeywords}
	
	\section{Introduction} \label{sec:introduction}
	\subsection{Motivation}
	The increasing proliferation of the Internet of Things (IoT) has revolutionized data collection, processing, and utilization across various domains, including healthcare, smart cities, and industrial automation. However, this pervasive data collection introduces not only data storage challenges but also significant privacy concerns, as the information gathered by IoT device sensors often contains sensitive and private details about individuals or environments. 
	
	Moreover, IoT devices are constrained by limited resources, including battery life and computational capacity, necessitating efficient communication strategies to conserve energy and reduce data transmission. To address these constraints, IoT devices are expected to transmit system information only when necessary, thereby mitigating privacy risks, reducing energy consumption, and minimizing storage requirements.
	
	Despite these requirements, most existing sampling and transmission strategies for IoT privacy protection are tailored to scenarios where the system state is private but needs to be shared with a remote third party. However, these methods often fail to address critical challenges in cases where the third party may be malicious and capable of inferring sensitive information. Additionally, system variables may be divided into two categories—public and private—that evolve through different processes \cite{li2018information, jia2017privacy}, further limiting the applicability of these strategies. Another significant challenge arises from adversaries who might continuously refine their estimates of past private states using newly acquired observations over time. As a result, effective privacy measures must consider not only the leakage of the current state but also the reduced uncertainty about past private states. However, most existing approaches measure privacy leakage solely based on the error covariance of current state estimator, overlooking the substantial risks posed by the disclosure of past private states.
	
	To address these limitations, we propose a privacy-aware sampling mechanism based on an information-theoretic privacy metric. This mechanism is designed to achieve two key objectives:  $\left(\romannumeral1\right)$ to prevent private information from being reliably inferred from the sampled data, and $\left(\romannumeral2\right)$ to preserve the utility of the sampled data while using fewer samples.
	\subsection{Contributions}
	\begin{figure}
		\centering
		\includegraphics[width=0.45\textwidth]{./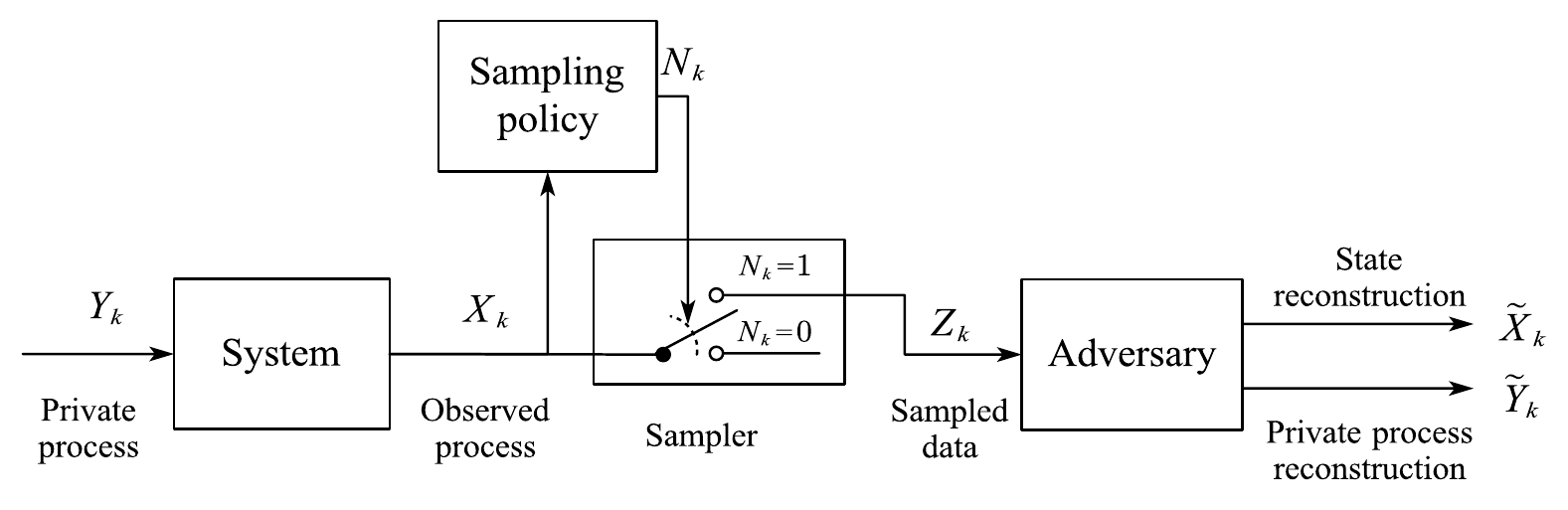}
		\caption{The privacy-aware stochastic sampling setup.}
		\label{Fig.EstInfSys}
	\end{figure}
	This paper addresses the problem of privacy-aware stochastic sampling in the context of the system setup depicted in Fig. \ref{Fig.EstInfSys}. Specifically, a sensor observes a process $X_k$ that is influenced by an underlying private process $Y_k$. Observations of $X_k$ may be shared with an untrusted third party (referred to as the adversary), which poses a risk of the private process $Y_k$ being inferred. To mitigate this risk, we explore the optimal design of a stochastic sampling mechanism that selectively determines whether to retain or discard samples of $X_k$. Moreover, we propose a reconstruction policy to recover $X_k$ from the retained samples.
	
	To quantify privacy, we employ mutual information between the private process $Y_k$ and the sampler's output as the privacy metric. We frame the co-design of the privacy-aware sampling and reconstruction policies as an optimization problem, where the objective is to minimize a weighted combination of the mutual information (privacy leakage) and the reconstruction error of $X_k$.
	
	The primary contributions of this paper are summarized as follows:
		\begin{enumerate}
			\item Optimal Sampling Policy via Dynamic Programming Decomposition:
			We derive the backward optimality equation for the sampling policy using a dynamic programming decomposition approach (Theorem \ref{Th.OPTEQU}). This enables the optimal sampling policy to efficiently regulate the adversary's belief about the private process.
			\item Simplified Sampling Policy for Linear Gaussian Processes:
			For systems governed by linear Gaussian processes, we present a simplified sampling design. This method facilitates recursive computation of the optimal reconstruction policy using the conditional mean and covariance matrix. Additionally, we derive a closed-form expression for the privacy metric (mutual information), significantly reducing computational complexity. Furthermore, we demonstrate that the optimal sampling policy controls the adversary's conditional covariance of the private process (Theorem \ref{Th.OPTEQULIN}).
			\item Policy Gradient Algorithm:
			To jointly optimize the sampling and reconstruction policies, we develop a policy gradient algorithm based on the implicit function theorem (Algorithm \ref{Alg.ETESTOPT}).
			\item Simulation Experiments:
			We validate the proposed co-design framework through simulation experiments. The results confirm the framework's effectiveness in enhancing privacy protection while simultaneously reducing the data size.
	\end{enumerate}
	\subsection{Related Work}
	Privacy protection in dynamic systems has been addressed using perturbed additive noise techniques, such as the differential privacy framework proposed in \cite{dwork2006calibrating}. In \cite{le2013differentially}, differential privacy was applied to design privacy-aware Kalman filters and approximate privacy-aware stable filters. Bayesian differential privacy for linear systems was introduced in \cite{sugiura2021bayesian}, where sufficient conditions for privacy via additive noise were derived. The interplay between input observability and differential privacy was explored in \cite{kawano2020design}, which also proposed a privacy-preserving controller design. Additionally, a differentially private sampling approach was proposed in \cite{le2019differentially} to protect the state of a linear system.
	
	Another line of research focuses on information-theoretic metrics for privacy in dynamic systems. In \cite{tanaka2017directed} and \cite{nekouei2019information}, the co-design of controllers and privacy-preserving filters was studied for LQG systems to minimize quadratic costs and privacy leakage measured by directed information. These works showed that the optimal private filter combines Kalman filtering with additive Gaussian noise, while the control law remains linear. The authors in \cite{molloy2023smoother} addressed state trajectory obfuscation in partially observable Markov processes using the conditional entropy of states given measurements and control signals. Mutual information has also been employed as a privacy measure: the authors in \cite{li2018information} proposed privacy-preserving smart meter mechanisms using rechargeable batteries, while the authors in \cite{erdemir2020privacy} introduced actor-critic algorithms to optimize privacy-aware data release policies. The relationship between differential privacy and information-theoretic approaches was analyzed in \cite{unsal2023information}.
	
	While previous work on privacy-aware state estimation and closed-loop control \cite{tanaka2017directed, molloy2023smoother, li2018information, erdemir2020privacy} have largely focused on noise injection and state obfuscation, this paper explores a stochastic sampling approach to information-theoretic privacy. Our method not only enhances privacy but also reduces data transmission and storage requirements. Furthermore, we compare our approach with additive noise methods and demonstrate competitive utility-privacy trade-offs, while achieving greater efficiency in data transmission.
	
	We note that deterministic transmission and sampling strategies for secure communication have also been widely studied, especially for linear systems. In \cite{leong2018transmission}, a deterministic transmission policy was designed to balance estimation accuracy at a remote estimator and privacy leakage in presence of an adversary, and numerical algorithms were provided to obtain the transmission policy. A related work \cite{wang2022transmission} optimized deterministic transmission policies for encrypted and plaintext data, leveraging monotonic scheduling sequences for linear systems. The authors in \cite{leong2019information} used directed information to measure communication security and formulated optimization problems to enhance security while reducing estimation errors.
	
	However, the deterministic data transmission approaches in \cite{leong2018transmission, wang2022transmission, leong2019information, huang2021encryption} are not suitable for scenarios where private and public processes evolve through different processes, as in smart building \cite{li2018information, jia2017privacy}. Moreover, when the remote estimator also acts as a malicious adversary, their assumptions and solutions become inapplicable. To address these challenges, this work investigates stochastic sampling and reconstruction designs for nonlinear stochastic systems with public states and private inputs. We consider a setting where the estimator may infer private inputs from transmitted data, which is suitable for a wide range of scenarios such as privacy-aware sampling and cloud-based feedback control. For general nonlinear systems, we show that the optimal sampling strategy dynamically regulates the adversary’s belief through a dynamic programming decomposition framework. As for linear systems, we derive simplified optimality equations with analytical loss functions and demonstrate that the adversary's estimation covariance of the past private trajectory is regulated by the sampler. Additionally, we propose a stochastic gradient algorithm for co-design optimization, which is validated through simulations. 
	\subsection{Outline}
	The remainder of this paper is organized as follows. Section \ref{Sec:Prob} introduces the system model, provides motivating examples, and formulates the optimization problem for privacy-aware sampling design. Section \ref{Sec:GenStru} derives the optimality equations and structural properties of the state reconstruction policy and the sampler for general nonlinear stochastic systems. Section \ref{Sec:LinStru} presents a simplified design tailored for linear systems. Section \ref{Sec:Alg} develops a policy gradient algorithm to solve the optimization problem for general systems. Simulation results demonstrating the effectiveness of the privacy-aware sampler are provided in Section \ref{Sec:Sim}, followed by conclusions in Section \ref{Sec:Con}.
	\subsection{Notation}
	We use uppercase letters, \emph{e.g.,} $X, Y$, to denote random variables, and their realizations are represented by lowercase letters, \emph{e.g.,} $x, y$. The shorthand $X^K$ represents the sequence $\left[X_0, X_1, \cdots, X_K \right]$. The symbols $p\paren{X}$ and $P\paren{X}$ denote probability density and probability mass functions, respectively. Similarly, $p\paren{\left. X \right| Y}$ and $P\paren{\left. X \right| Y}$ denote the conditional probability density and conditional probability mass functions, respectively. For Gaussian distributions, we use the shorthand $p\paren{X} = \mathcal{N}\paren{X; \tilde{X}, \Sigma}$, which expands to $p\paren{X} \!=\! \frac{1}{\sqrt{\left| 2\pi \Sigma \right|}} \exp\left[ -\frac{1}{2} \paren{X - \tilde{X}}^{\top} \Sigma^{-1} \paren{X - \tilde{X}} \right]$, where $\left| \cdot \right|$ represents the determinant, and $\paren{X - \tilde{X}}^{\top}$ denotes the transpose of $\paren{X - \tilde{X}}$. Furthermore, the entropy of a random variable $X$ is denoted as $H\paren{X} = -\int_{x} p\paren{x} \log p\paren{x} \, dx$. The conditional entropy of $X$ given $Y$ is expressed as $H\paren{\left. X \right| Y} = -\int_{x, y} p\paren{x, y} \log p\paren{\left. x \right| y} \, dx \, dy$. The mutual information between $X$ and $Y$ is given by $I\paren{X; Y} = H\paren{X} - h\paren{\left. X \right| Y}$. Finally, the Kullback–Leibler divergence (KL-divergence) between distributions $p\paren{\cdot}$ and $q\paren{\cdot}$ is defined as $D_{KL}\left[p\paren{X} \| q\paren{X}\right] = \int_{x} p\paren{x} \log \frac{p\paren{x}}{q\paren{x}} \, dx$.
	\section{Problem Formulation}\label{Sec:Prob}
	\subsection{System Model}
	We consider a stochastic system with the state $X_k \in \mathcal{X}\subseteq\R^{n_x}$ and the private input $Y_k\in \mathcal{Y} \subseteq\R^{n_y}$. The system state $X_k$ evolves according to
	\begin{align}\label{Eq.Dynamics-state}
			X_{k+1}\sim p(\left. \cdot\right |X_k,Y_k).
	\end{align}
	The private input $\left\{Y_k\right\}_k$ is a first-order Markov process with the stochastic kernel $p\paren{\left. Y_{k+1} \right|Y_k}$. Using this model, we can study the privacy-aware sampler for both nonlinear systems, 
	\begin{align}
		X_{k+1}=g\left( X_k,Y_k,W_k \right)
	\end{align}
	and linear systems
	\begin{align}
		X_{k+1}=AX_k+BY_k+W_k
	\end{align}
	with the Gaussian or non-Gaussian noise, $W_k$.
	
	We assume that the system state will be resampled and shared with an adversary, as shown in Fig. \ref{Fig.EstInfSys}, where the adversary might infer the private input $Y_k$ based on the sampled data $Z_k$. We next study the optimal design of the stochastic sampler under a privacy requirement.
	\subsection{Motivating Examples}
	In this subsection, we consider two motivating applications of the privacy-aware sampler.
	\subsubsection{Heating, Ventilation, and Air Conditioning (HVAC) System} Consider the HVAC system depicted in Fig. \ref{Fig.NetworkedControl}, where the indoor $CO_2$ concentration, denoted as $X_k$, , is influenced by the time occupants spend inside and outside the room, represented as $Y_k$. To regulate the $CO_2$ level $X_k$, this information is transmitted to an untrusted remote controller, which computes the ventilation speed accordingly \cite{jia2017privacy}. Motivated by this, the sampler would be designed to meet the following objectives: $\left(\romannumeral1\right)$ reduce communication and computation overhead, $\left(\romannumeral2\right)$ maintain a low reconstruction error for $CO_2$ data, and, and $\left(\romannumeral3\right)$ ensure the adversary cannot accurately infer occupancy status from the transmitted data. The central research problem is the design of an effective sampling mechanism to achieve these objectives.
		\begin{figure}[H]
			\centering
			\includegraphics[width=2.4in]{./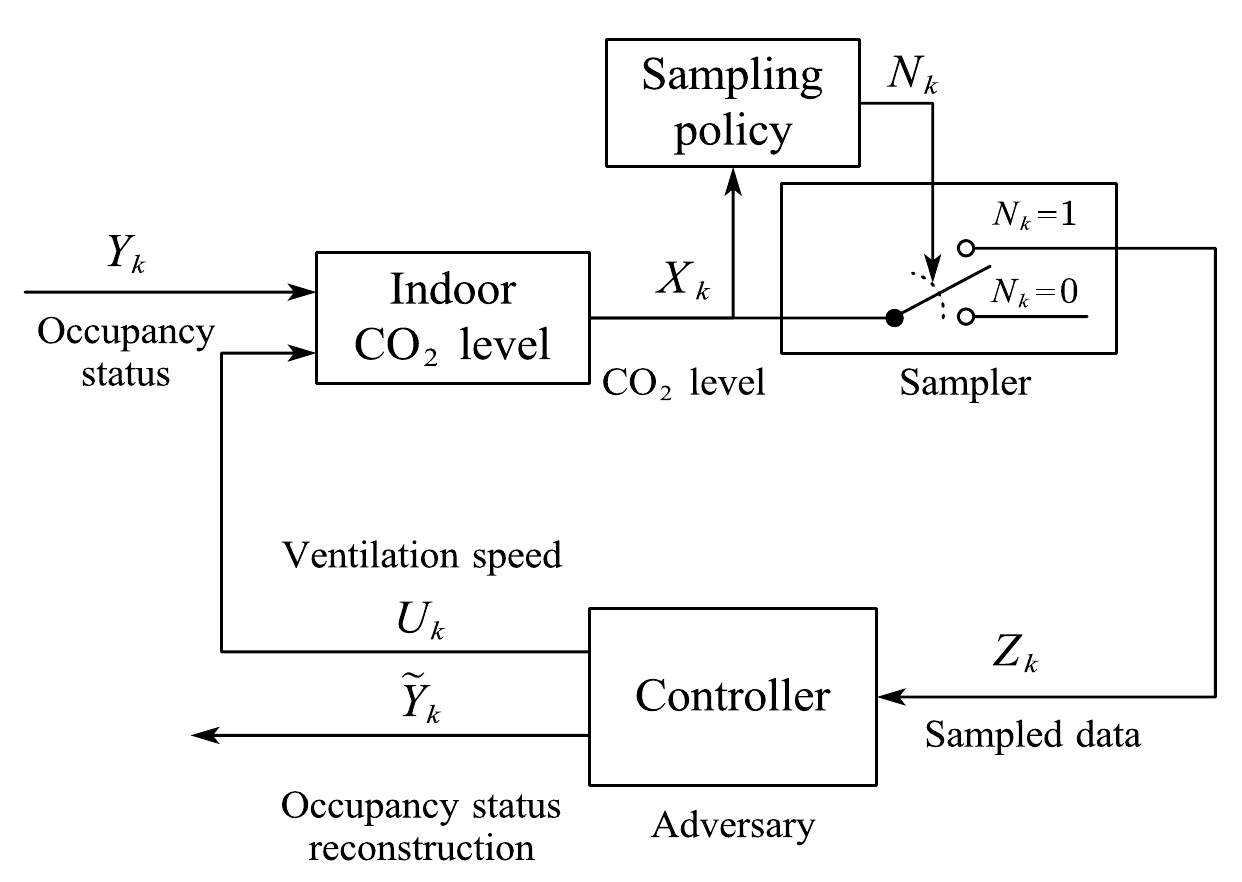}
			\caption{The privacy-aware networked control system. }
			\label{Fig.NetworkedControl}
	\end{figure}
	\subsubsection{Privacy-Aware Causal Compression for Mobility Datasets}
		In location-based services, users' GPS trajectories, denoted as $X^K$, are frequently collected for applications such as traffic analysis and urban planning. These datasets are typically large in size, and there is a need to reduce their volume for efficient transmission and storage. Additionally, such datasets may expose sensitive information, as the exact sequences of locations visited, represented as $Y^K$, can often be uniquely identified \cite{wu2016did}. To mitigate these concerns, a sampling mechanism can be employed before the publication of mobility datasets, with the following objectives: $\left(\romannumeral1\right)$ reduce the data size and storage requirements, $\left(\romannumeral2\right)$ maintain an acceptable level of reconstruction accuracy for the GPS trajectories, and $\left(\romannumeral3\right)$ prevent adversaries from accurately inferring the sequences of visited locations.
	\subsection{Optimal Privacy-Aware Sampler Design}\label{Sec:OPDefinition}
	Consider a sampler with access to the system state $X_k$, as illustrated in Fig. \ref{Fig.EstInfSys}. The sampler makes a decision $N_k \in \{0,1\}$ based on its information set, where $N_k = 1$ indicates that $X_k$ is retained, and $N_k = 0$ means the sample is discarded. The information set of the sampling policy is denoted by $\mathcal{I}_k = \{X^{k}, N^{k-1}\}$, where $X^k$ and $N^{k-1}$ represent the sequences $\{X_0, X_1, \dots, X_k\}$ and $\{N_0, N_1, \dots, N_{k-1}\}$, respectively. 
	
	The output of the sampler is defined as:
	\begin{align}
		Z_k = 
		\begin{cases} 
			X_k, & \text{if } N_k = 1, \\ 
			\emptyset, & \text{if } N_k = 0,
		\end{cases}
		\nonumber
	\end{align}
	where $Z_k = \emptyset$ indicates that $X_k$ has been discarded. We denote the set of discarded samples as $M_k$. For simplicity, we refer to $M_k$ as the \emph{local memory} of the sampler since it contains information not shared with the adversary. Consequently, the information set of the sampling policy can be rewritten as $\mathcal{I}_k = \{Z^{k-1}, X_k, M_k\}$. The sampling policy is represented by $\pi_k(N_k | \mathcal{I}_k)$, which is the conditional probability mass function of $N_k$ given $\mathcal{I}_k$. The action $N_k$ is randomly drawn according to this policy.
	
	Let $\tilde{X}_k$ denote the optimal reconstruction of $X_k$. Under the given sampling policy and observed sequence $Z^k$, the reconstruction satisfies $\tilde{X}_k = Z_k$ if $N_k = 1$. On the other hand, when $N_k = 0$, the adversary reconstructs $X_k$ based on its information set $\tilde{\mathcal{I}}_k = \{Z^k, \tilde{X}^{k-1}\}$, where $\tilde{X}^{k-1}$ denotes the sequence $\{\tilde{X}_0, \tilde{X}_1, \dots, \tilde{X}_{k-1}\}$. The adversary employs a reconstruction policy denoted by $\tilde{\pi}_k(\tilde{X}_k | \tilde{\mathcal{I}}_k) = \tilde{\pi}_k(\tilde{X}_k | Z^k, \tilde{X}^{k-1})$. The reconstruction error is captured by the distortion measure $\textit{l}_D(X_k, \tilde{X}_k)$, where $\textit{l}_D(X_k, \tilde{X}_k) = 0$ if $\tilde{X}_k = X_k$.
	
	With access to $Z^K$, the adversary may attempt to infer the private information $Y^K$, as illustrated in Fig.~\ref{Fig.EstInfSys}. To quantify the information flow from $Y^K$ to $Z^K$, we use the mutual information:
		\begin{equation}
			I(Z^K; Y^K) = \mathsf{E}\left[ \log \frac{p(Y^K | Z^K)}{p(Y^K)} \right].
		\end{equation}
		According to \cite{cover1999elements}, if $I(Z^K; Y^K) = 0$, then $Z^K$ and $Y^K$ are independent. Specifically, if the sampler consistently sends no information to the remote reconstructor, the mutual information becomes zero, as no information about $Y^K$ is revealed. 
	
	On the other hand, since $Y^K \to X^K \to Z^K$ forms a Markov chain, it follows that $I(Z^K; Y^K) \leq I(X^K; Y^K)$. If the sampler always transmits $X_k$, i.e., $Z^K = X^K$, then $I(Z^K; Y^K) = I(X^K; Y^K)$, representing the maximum privacy leakage. Consequently, the mutual information captures the correlation between the transmitted data and the private data. To reduce privacy leakage, the mutual information should be minimized.
	
	Consequently, we formulate the design of the optimal privacy-aware stochastic sampling and reconstruction policies as follows:
		\begin{equation}\label{Eq.OP}
			\min_{\{\pi, \tilde{\pi}\}} L(\pi, \tilde{\pi}) = \min_{\{\pi, \tilde{\pi}\}} \sum_{k=0}^{K} \mathsf{E}[\textit{l}_D(X_k, \tilde{X}_k)] + \lambda I(Z^K; Y^K),
		\end{equation}
		where $\textit{l}_D(X_k, \tilde{X}_k)$ measures the reconstruction error, $I(Z^K; Y^K)$ quantifies the privacy leakage, and $\lambda > 0$ is a penalty constant. The optimization problem in \eqref{Eq.OP} involves two decision variables: the sampling policy $\pi = \{\pi_k\}_{k=0}^K$ and the reconstruction policy $\tilde{\pi} = \{\tilde{\pi}_k\}_{k=0}^K$, which jointly minimize the tradeoff between reconstruction error and privacy leakage.
	\section{Structural Properties of the Optimal Privacy-aware Sampling Design}\label{Sec:GenStru}
	In this section, we first discuss the structure of the optimal reconstruction policy. We then derive the Bellman equations for the optimal stochastic sampler via dynamic programming decomposition. We finally draw a closed-loop control block diagram to interpret that the optimal sampling policy controls the adversary's belief about the private process $Y_k$.
	\subsection{Optimal Reconstruction Policy}
	\begin{figure*}
		\begin{align} \label{Eq.VF}
				\setcounter{equation}{8}
				V_k^{\star}\left( b_k \right) =\min_{\mathcal{A} _k} \tilde{l}_D\left( \mathcal{A} _k,b_k,\tilde{\pi}_{k}^{\star}\left( Z_k=\emptyset,Z^{k-1} \right) \right)  +\lambda l_I\left( \mathcal{A} _k,b_k \right) +\mathsf{E}\left[ V_{k+1}^{\star}\left( \varPhi \left( b_k,\mathcal{A} _k,Z_k \right) \right) |Z^{k-1} \right] ,
		\end{align} 
		\hrule
		\begin{align}\label{Eq.BSUP}
			\setcounter{equation}{11}
			b_{k+1}\left( x_{k+1},m_{k+1},y^{k+1} \right) =\varPhi \left( b_k,\mathcal{A} _k,Z_k \right) =\left\{ \!\! \begin{array}{c}
				\frac{p\left( x_{k+1} \middle| x_k,y_k \right) p\left( y_{k+1} \middle| y_k \right) b_k\left( x_k,m_k,y^k|Z^{k-1} \right) a_k\left( N_k=0 \middle| x_k,m_k \right)}{\int{\int{\int{a_k\left( N_k=0 \middle| x_k,m_k \right) b_k\left( x_k,m_k,y^k|Z^{k-1} \right)}dx_kdm_kdy^k}}},\quad Z_k=\emptyset \\
				\frac{p\left( x_{k+1} \middle| X_k,y_k \right) p\left( y_{k+1} \middle| y_k \right) b_k\left( X_k,m_k,y^k|Z^{k-1} \right) a_k\left( N_k=0 \middle| X_k,m_k \right)}{\int{\int{a_k\left( N_k=0 \middle| X_k,m_k \right) b_k\left( X_k,m_k,y^k|Z^{k-1} \right) dm_kdy^k}}},\quad Z_k=\!X_k\\
			\end{array} \right. \!,
		\end{align}  
		\hrule
		\vspace{-1em}
	\end{figure*}
	We next present the the optimal reconstruction of $X_k$.
	\begin{lemma} \label{Lm.OPEst}
		Given the sampling policy $\pi=\left\{\pi_k\right\}^K_{k=0}$ and the the sampler's output $Z^{k}$, the optimal reconstruction policy is
		\begin{align}\label{Eq.OPTEst}
			\setcounter{equation}{5}
			\tilde{X}_k^{\star}=\tilde{\pi}^{\star}_k\left(Z^{k}\right) = \arg \min_{\tilde{\pi}_k\left(Z^{k}\right)} \mathsf{E}\left[ \left. l_D\left( X_k,\tilde{\pi}_k\left(Z^{k}\right) \right) \right|Z^{k}\right] , 
		\end{align}
		$k=0,1,\cdots ,K.$ If the distortion measure is the squared reconstruction error, \emph{i.e.}, $l_D\left( X_k,\tilde{\pi}_k\left( Z^k \right) \right) =\left( X_k-\tilde{\pi}_k\left( Z^k \right) \right) ^{\top}\left( X_k-\tilde{\pi}_k\left( Z^k \right) \right) $, then the optimal reconstruction is the conditional expectation, $\mathsf{E}\left[ \left. X_k \right|Z^{k} \right] $.
	\end{lemma}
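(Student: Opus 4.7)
The plan is to exploit the fact that the reconstruction policy $\tilde{\pi}$ enters the objective $L(\pi,\tilde{\pi})$ only through the distortion term. The mutual information $I(Z^K;Y^K)$ is a functional of the joint distribution of $(Z^K,Y^K)$, which is determined entirely by the system dynamics in \eqref{Eq.Dynamics-state}, the Markov kernel of $Y_k$, and the sampling policy $\pi$; it is independent of how the adversary then reconstructs $X_k$. Therefore, for any fixed $\pi$, minimizing $L(\pi,\tilde{\pi})$ over $\tilde{\pi}$ is equivalent to minimizing $\sum_{k=0}^{K}\mathsf{E}[l_D(X_k,\tilde{X}_k)]$ alone.

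Next, I would argue that the optimal reconstruction $\tilde{X}_k^{\star}$ is a function of $Z^k$ only, even though the information set $\tilde{\mathcal{I}}_k=\{Z^k,\tilde{X}^{k-1}\}$ nominally also contains the past reconstructions. This follows by a simple backward induction: each $\tilde{X}_j^{\star}$ with $j<k$ is itself a deterministic function of $Z^{j}\subseteq Z^k$, so $\sigma(\tilde{\mathcal{I}}_k)=\sigma(Z^k)$ under the optimal policy, and without loss of generality we may write $\tilde{\pi}_k(\tilde X_k\mid Z^k)$.

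The per-stage distortion terms are now decoupled across $k$: using the tower property,
\begin{align*}
\mathsf{E}[l_D(X_k,\tilde{\pi}_k(Z^k))] = \mathsf{E}\!\left[\mathsf{E}[l_D(X_k,\tilde{\pi}_k(Z^k))\mid Z^k]\right],
\end{align*}
so the outer expectation is minimized by choosing $\tilde{\pi}_k^{\star}(Z^k)$ to minimize the inner conditional expectation $\mathsf{E}[l_D(X_k,\tilde{\pi}_k(Z^k))\mid Z^k]$ pointwise in $Z^k$. This yields the first claim of the lemma. For the squared-error distortion, substituting $l_D(X_k,\tilde X_k)=(X_k-\tilde X_k)^{\top}(X_k-\tilde X_k)$, expanding the quadratic, and differentiating with respect to $\tilde X_k$ inside the conditional expectation gives the first-order condition $\tilde X_k=\mathsf{E}[X_k\mid Z^k]$; the Hessian $2I$ confirms the minimum. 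The case $N_k=1$ is automatically covered, since then $Z_k=X_k$ and the conditional distribution of $X_k$ given $Z^k$ is a point mass at $Z_k$, recovering $\tilde X_k^{\star}=Z_k$ as stated in the problem setup.

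The only real subtlety — and the step I would treat most carefully — is the reduction of $\tilde{\mathcal{I}}_k$ to $Z^k$; without this reduction, one would face a circular minimization where each $\tilde{\pi}_k$ sees reconstructions generated by other $\tilde{\pi}_j$'s that are themselves being optimized. The backward-induction argument above breaks this circularity and justifies the pointwise, per-stage minimization that makes the lemma essentially a textbook Bayes-estimator argument.
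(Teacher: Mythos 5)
Your proposal is correct and follows essentially the same route as the paper's proof: both observe that the mutual information term is unaffected by the reconstruction policy, reduce the problem to a per-stage greedy minimization of the conditional distortion (justifying that the past reconstructions $\tilde{X}^{k-1}$ carry no additional useful information since $\tilde{X}_k$ does not influence the system's evolution), and then invoke the standard MMSE argument to obtain $\mathsf{E}\left[\left. X_k \right| Z^k\right]$ for the squared-error case. Your explicit treatment of the information-set reduction and the first-order condition merely fills in details the paper delegates to a one-line remark and a citation.
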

	\begin{proof}
		See Appendix \ref{App.Lm.OPEst}.
	\end{proof}
	According to Lemma \ref{Lm.OPEst}, the optimal reconstruction is solved via a one-step optimization problem. It is important to highlight that the optimal reconstruction policy is dependent on the utilized sampling policy. This is due to the fact that, when $N_k=0$, the expectation in \eqref{Lm.OPEst} is computed using $$p\!\left(\! \left. X_k \right|\!Z^k \!\right) \!=\!\frac{\int{\!\!\pi _k\!\left(\! \left. N_k=0 \right| \!X_k,m_k,Z^{k-1}\! \right)\! p\!\left(\! \left. X_k,m_k \right| \! Z^{k-1}\! \right) \!dm_k}}{P\left( N_k=0 \right)},$$
	which depends on the sampling policy $\pi$. Thus, we can adjust the sampling policy to control the reconstruction error level. 
	\subsection{Optimal Privacy-Aware Sampler}\label{Sec:OPET}
	In this subsection, we study the optimal sampler in a dynamic decomposition manner. We first show the optimization problem \eqref{Eq.OP} is equivalent to an auxiliary optimization problem in which the decision variable is a policy collection upon the sampled data. Then, we study the structural properties of the optimal sampler based on its optimality equations. 
	
	Consider the following policy collection,
	\begin{align}
		\mathcal{A}_k\left(Z^{k-1}\right)=\left\{{a_k\paren{\left. n_k\right|x_k,m_k}}, \forall n_k, x_k, m_k\right\}, 
	\end{align}
	where $a_k\paren{\left. N_k\right|X_k,M_k}$ is a conditional mass distribution of $N_k$ dependent on $X_k$ and $M_k$. We next consider the following auxiliary optimization problem for $\mathcal{A}=\left\{\mathcal{A}_k\left( Z^{k-1} \right)\right\}^K_{k=0}$,
	\begin{equation}\label{Eq.OP2}
		\setcounter{equation}{8}
		\min_{\mathcal{A}} L\!\left(\mathcal{A}\right)\!=\!\min_{\mathcal{A}} \sum_{k=0}^{K}\!\mathsf{E} \left[\textit{l}_{D}\!\left(X_k,\tilde{\pi}^{\star}_k\!\left(Z^{k}\right)\right)\!\right]+\lambda I\! \left(Z^{K};\!Y^{K}\right)\!,\!
	\end{equation}
	where given $\paren{Z^{k-1}, X_k, M_k}$, we first select the policy collection $\mathcal{A}_k$ based on $Z^{k-1}$, and then select $a_k\paren{\left. N_k\right|X_k,M_k}$ based on $\paren{X_k, M_k}$, and finally generate $N_k$ by sampling it from $a_k\paren{N_k \left| X_k,M_k \right.}\in \mathcal{A}_k$.  Without loss of generality, we can compute the optimal policy using the auxiliary problem.
	\begin{lemma} \label{Lm.Equivalent}
		The original and auxiliary optimization problems, \eqref{Eq.OP} and \eqref{Eq.OP2} are equivalent.
	\end{lemma}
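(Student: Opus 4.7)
The plan is to exhibit a bijection between admissible sampling policies $\pi$ for \eqref{Eq.OP} and admissible policy collections $\mathcal{A}$ for \eqref{Eq.OP2} that leaves the joint law of all quantities appearing in the objective invariant, so the two objective functions coincide on corresponding pairs. Since the auxiliary formulation is nothing more than a reparametrization of the original one, this is essentially a bookkeeping argument, but it needs to be done carefully to be sure that the decomposition of the decision into ``first look at $Z^{k-1}$, then at $(X_k,M_k)$'' produces no additional or missing degrees of freedom.

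First, I would construct the forward map: given $\pi=\{\pi_k(N_k\mid Z^{k-1},X_k,M_k)\}_{k=0}^{K}$, define, for each realization $z^{k-1}$, the collection
\begin{align*}
\mathcal{A}_k(z^{k-1}) = \bigl\{ a_k(n_k\mid x_k,m_k) \triangleq \pi_k(n_k\mid z^{k-1},x_k,m_k) : \forall\, n_k,x_k,m_k \bigr\}.
\end{align*}
Conversely, given $\mathcal{A}=\{\mathcal{A}_k(Z^{k-1})\}_{k=0}^{K}$, recover a sampling policy by $\pi_k(N_k\mid Z^{k-1},X_k,M_k) = a_k(N_k\mid X_k,M_k)$ with $a_k\in\mathcal{A}_k(Z^{k-1})$. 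These maps are mutual inverses up to modifications on $z^{k-1}$-sets of probability zero, which do not affect any expectation.

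Next, I would show by induction on $k$ that the joint distribution $p(X^k,Y^k,Z^k,N^k,M^k)$ is identical under $\pi$ and under the matched $\mathcal{A}$. The base case is immediate since the law of $(X_0,Y_0)$ does not depend on the policy. For the induction step, the dynamics $p(X_{k+1}\mid X_k,Y_k)$ and $p(Y_{k+1}\mid Y_k)$ are policy-independent, the conditional law of $N_k$ given $(Z^{k-1},X_k,M_k)$ equals the same conditional pmf under both parametrizations by construction, and $Z_k$ and $M_k$ are deterministic functions of $(N_k,X_k,M_{k-1})$, hence they transfer identically. Consequently the joint law of $(X^K,Y^K,Z^K)$ is the same under $\pi$ and under $\mathcal{A}$.

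Finally, both terms in the objective of \eqref{Eq.OP}, the distortion $\sum_{k}\mathsf{E}[l_D(X_k,\tilde{\pi}_k^{\star}(Z^k))]$ and the leakage $I(Z^K;Y^K)$, are functionals purely of $p(X^K,Y^K,Z^K)$; note that by Lemma \ref{Lm.OPEst} the optimal reconstruction $\tilde{\pi}_k^{\star}$ is itself determined by this joint law, so plugging it into \eqref{Eq.OP2} reproduces the same optimized inner objective. Thus $L(\pi,\tilde{\pi}^{\star}) = L(\mathcal{A})$ whenever $\pi$ and $\mathcal{A}$ correspond under the bijection, and the two minima are equal. The only mild subtlety, which I expect to be the main obstacle, is specifying $\mathcal{A}_k(z^{k-1})$ on events of zero $Z^{k-1}$-probability; I would handle this by noting that the induced distributions agree almost surely, which is all that the objective requires.
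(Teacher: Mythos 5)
Your proposal is correct and follows essentially the same route as the paper: the identical two-way correspondence $a_k(n_k\mid x_k,m_k)=\pi_k(n_k\mid x_k,m_k,Z^{k-1})$, the observation that corresponding policies induce the same joint law and hence the same objective value, and the conclusion that the two minima coincide (the paper phrases this last step as a short contradiction argument, yours as a direct equality of infima, which is equivalent). Your added induction on the joint distribution and the remark about null sets of $Z^{k-1}$ merely make explicit details the paper leaves implicit.
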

	\begin{proof}
		See Appendix \ref{App.Lm.Equivalent}.
	\end{proof}
	According to Lemma \ref{Lm.Equivalent}, the optimal policy derived from the auxiliary optimization problem \eqref{Eq.OP2} is also optimal for \eqref{Eq.OP}. In the next theorem, we derive its Bellman optimality equations.
	\begin{theorem} \label{Th.OPTEQU}
		Let $V^\star _k\left(\cdot\right)$ denote the optimal cost-to-go function of the optimization problem \eqref{Eq.OP2} at time $k$. Then $V^\star _k\left(\cdot\right)$ can be recursively computed using \eqref{Eq.VF}, where
		\begin{align}\label{Eq.DisLoss}
			\setcounter{equation}{9}
			&\tilde{l}_D\!\left( \mathcal{A} _k,b_k, \tilde{\pi}_{k}^{\star}\left( Z_k\!=\!\emptyset,Z^{k-1} \right) \right) \nonumber \\=&\mathsf{E}\left[ l_D\left( X_k,\tilde{\pi}_{k}^{\star}\left( Z_k=\emptyset ,Z^{k-1} \right) \right) \middle| Z^{k-1} \right] , 
		\end{align}
		\begin{align}\label{Eq.InfoLoss}
			l_I\left( \mathcal{A} _k,b_k \right) =\mathsf{E}\left[ \log \frac{p\left( Y^k \middle| Z^k \right)}{p\left( Y^k \middle| Z^{k-1} \right)} \middle| Z^{k-1} \right] ,
		\end{align}
		$b_k$ is the belief state, $b_k\left( x_k,m_k,y^{k} \right) =p\left( x_k,m_k,y^{k}|Z^{k-1} \right) $, and  $ V_{K+1}^{\star}\left(\cdot\right)$ is zero. The belief state $b_{k}$ can be recursively computed via \eqref{Eq.BSUP}, and $b_0$ is the initial joint probability density function of $X_0$ and $Y_0$, \emph{i.e.}, $b_0=p_{x_0, y_0}$. 
	\end{theorem}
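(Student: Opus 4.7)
The plan is to prove Theorem \ref{Th.OPTEQU} in three stages: first, rewrite the objective $L(\mathcal{A})$ in \eqref{Eq.OP2} as a time-additive cost driven by the belief state $b_k$; second, verify that $(b_k,\mathcal{A}_k)$ is a sufficient statistic for the one-step cost and that $b_k$ admits the recursive update \eqref{Eq.BSUP}; third, apply standard backward dynamic programming to obtain \eqref{Eq.VF}.

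I would begin by decomposing $L(\mathcal{A})$ into per-stage contributions. For the reconstruction-error term, Lemma \ref{Lm.OPEst} fixes the optimal reconstruction $\tilde{\pi}_k^\star$ given the sampling policy, so the tower property yields $\mathsf{E}[l_D(X_k,\tilde{\pi}_k^\star(Z^k))] = \mathsf{E}[\tilde{l}_D(\mathcal{A}_k,b_k,\tilde{\pi}_k^\star)]$, since on $\{N_k=1\}$ the reconstruction is exact and contributes nothing. For the privacy term, I would apply the chain rule of mutual information, $I(Z^K;Y^K) = \sum_{k=0}^K I(Z_k; Y^K \mid Z^{k-1})$, and then reduce each summand to $I(Z_k; Y^k \mid Z^{k-1})$ by invoking the conditional independence $Z_k \perp Y_{k+1:K} \mid Y^k, Z^{k-1}$. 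This independence holds because $X_k$ and $M_k$ are measurable with respect to $(X_0, Y^{k-1}, W^{k-1}, N^{k-1})$ together with the sampler's internal randomness at time $k$, all of which are independent of $Y_{k+1:K}$ given $Y^k$ by the Markov kernel $p(y_{k+1}\mid y_k)$. The outcome is $I(Z^K;Y^K) = \sum_{k=0}^K \mathsf{E}[l_I(\mathcal{A}_k,b_k)]$, giving $L(\mathcal{A}) = \mathsf{E}\bigl[\sum_{k=0}^K \tilde{l}_D(\mathcal{A}_k,b_k,\tilde{\pi}_k^\star) + \lambda\, l_I(\mathcal{A}_k,b_k)\bigr]$.

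Next I would establish that $b_k(x_k,m_k,y^k) = p(x_k,m_k,y^k \mid Z^{k-1})$, paired with the action $\mathcal{A}_k$, is a sufficient statistic: the per-stage costs $\tilde{l}_D$ and $l_I$ depend on the history only through $(b_k,\mathcal{A}_k)$ by direct inspection of \eqref{Eq.DisLoss}--\eqref{Eq.InfoLoss}, and $b_{k+1}$ is a deterministic function of $(b_k,\mathcal{A}_k,Z_k)$ by Bayes' rule. The two branches of \eqref{Eq.BSUP} correspond to the two possible realizations of $Z_k$: if $Z_k = \emptyset$, the conditioning event is $\{N_k=0\}$, so one reweights $b_k$ by $a_k(N_k=0\mid x_k,m_k)$, normalizes over $(x_k,m_k,y^k)$, and propagates through $p(x_{k+1}\mid x_k,y_k)$ and $p(y_{k+1}\mid y_k)$; if $Z_k = X_k$, the state is revealed exactly, so only $(m_k,y^k)$ is marginalized before the same propagation. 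With the objective additive in time and $b_k$ a sufficient statistic, the Bellman equation \eqref{Eq.VF} follows by the principle of optimality applied via backward induction, initialized with $V_{K+1}^\star \equiv 0$ and $b_0 = p_{x_0,y_0}$.

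The main obstacle I expect is the careful justification of the conditional independence used to collapse $I(Z_k;Y^K\mid Z^{k-1})$ to $I(Z_k;Y^k\mid Z^{k-1})$, since the local memory $M_k$ and the internal randomization of $\pi$ must be traced through the full dependency graph of the sampling process. Once this reduction is secured, the belief-state update is a routine application of Bayes' rule, and the resulting Bellman equation follows by standard DP arguments.
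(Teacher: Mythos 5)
Your proposal is correct and follows essentially the same route as the paper: the paper's Appendix proof likewise first expands $I\left(Z^K;Y^K\right)$ via the chain rule and the Markov chain $Y_{k+1}^{K}\to\left(Z^{k-1},Y^k\right)\to Z_k$ (its Lemma \ref{Lm.OptProblem}), then shows the per-stage costs and the Bayes update of $b_k$ depend only on $\left(\mathcal{A}_k,b_k\right)$ (its Lemma \ref{Lm.beliefLoss}), and finally concludes by backward induction from $V_{K+1}^{\star}\equiv 0$. Your explicit sufficient-statistic framing and the tracing of the conditional independence through the sampler's memory and internal randomization match the paper's argument in substance.
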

	\begin{proof}
		See Appendix \ref{App.Th.OPTEQU}.
	\end{proof}
	According to Theorem \ref{Th.OPTEQU}, the optimal solution of  \eqref{Eq.VF} is a function of the belief state $b_k$, \emph{i.e.,} $\mathcal{A}^{\star}_k\left(b_k\right)$, since the optimality equation \eqref{Eq.VF} depends on $b_k$. Based on Lemma \ref{Lm.Equivalent} and Theorem \ref{Th.OPTEQU}, the optimal sampling policy can be obtained via first updating the belief state based on the shared information $Z^{k-1}$, then optimizing the policy collection $\mathcal{A}_k\left(b_k\right)$ in \eqref{Eq.VF}, and finally selecting the conditional mass distribution $a_k\paren{\left. N_k\right|X_k,M_k}$ according to the sampler's local information $X_k$ and $M_k$. We summarize this dynamic decomposition method in the next theorem.
	\begin{theorem}\label{Th.ETPolicy}
		Let $\mathcal{A}^{\star}_k\left(b_k\right)$ denote the solution to \eqref{Eq.VF}. Then, given $\mathcal{I} _k = \left\{Z^{k-1},X_k,M_k\right\}$, the optimal sampling policy at time $k$ is $\pi_k^{\star}\left(\tilde{X}_k\left|X_k,M_k,b_k\right.\right)=a_k^{\star}\left(\tilde{X}_k\left|X_k,M_k\right.\right)\in\mathcal{A}^{\star}_k\paren{b_k}$, where $b_k$ is the belief state associated with $Z^{k-1}$. 
	\end{theorem}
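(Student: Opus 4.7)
The plan is to derive Theorem~\ref{Th.ETPolicy} as an essentially immediate corollary of the two results already established, namely Lemma~\ref{Lm.Equivalent} and Theorem~\ref{Th.OPTEQU}. The argument proceeds by (i) passing from the original problem to the auxiliary one, (ii) showing that the minimizer of the Bellman equation depends on the history $Z^{k-1}$ only through the belief state $b_k$, and (iii) extracting from the optimal policy collection $\mathcal{A}^{\star}_k(b_k)$ the specific conditional distribution selected by the sampler using its local information $(X_k,M_k)$.

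First, by Lemma~\ref{Lm.Equivalent} any minimizer of the auxiliary problem \eqref{Eq.OP2} also minimizes \eqref{Eq.OP}. Therefore it suffices to characterize the minimizer of \eqref{Eq.OP2}, which by Theorem~\ref{Th.OPTEQU} can be obtained via the Bellman recursion \eqref{Eq.VF}. Inspecting \eqref{Eq.DisLoss}, \eqref{Eq.InfoLoss}, and the belief update \eqref{Eq.BSUP}, every ingredient of the stage cost and every transition in \eqref{Eq.VF} is a functional of the pair $(b_k,\mathcal{A}_k)$ (with $b_k$ summarizing the conditioning on $Z^{k-1}$). Consequently a straightforward backward induction on $k=K,K-1,\ldots,0$ shows that $V^{\star}_k$ depends on $Z^{k-1}$ only through $b_k$, and hence the argmin $\mathcal{A}^{\star}_k$ in \eqref{Eq.VF} is a function of $b_k$ alone. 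This establishes $b_k$ as a sufficient statistic for the optimization of $\mathcal{A}_k$.

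With this in hand, the sampling policy is reconstructed as follows. Given $\mathcal{I}_k=\{Z^{k-1},X_k,M_k\}$, the sampler first computes $b_k$ recursively from $Z^{k-1}$ via \eqref{Eq.BSUP}; second, it obtains the optimal collection $\mathcal{A}^{\star}_k(b_k)$ by solving \eqref{Eq.VF}; third, it draws $N_k$ from the element $a^{\star}_k(\,\cdot\,|X_k,M_k)\in\mathcal{A}^{\star}_k(b_k)$ indexed by its local pair $(X_k,M_k)$. By construction of the auxiliary problem, the marginal action law produced in this way coincides with the conditional distribution of $N_k$ given $\mathcal{I}_k$, so the resulting sampling policy is $\pi_k^{\star}(N_k\mid X_k,M_k,b_k)=a^{\star}_k(N_k\mid X_k,M_k)$, as claimed.

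The main obstacle in a fully rigorous write-up is the sufficient-statistic step: one must verify that even though the original sampler is allowed to depend on the full history $Z^{k-1}$, no loss of optimality is incurred by summarizing that history through $b_k$. This reduces to checking that both the per-stage loss and the belief transition in \eqref{Eq.VF}–\eqref{Eq.BSUP} factor through $b_k$, which is exactly what Theorem~\ref{Th.OPTEQU} provides; the remainder of the proof is a standard measurable-selection/dynamic-programming argument and needs no new computation.
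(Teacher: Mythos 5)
Your proposal is correct and follows essentially the same route as the paper, which presents this theorem as a direct consequence of Lemma \ref{Lm.Equivalent} and Theorem \ref{Th.OPTEQU}: update the belief from $Z^{k-1}$, solve \eqref{Eq.VF} for $\mathcal{A}^{\star}_k(b_k)$, and select $a^{\star}_k(\cdot\,|X_k,M_k)$ using the local information. The sufficient-statistic step you flag is exactly what the backward induction in the paper's proof of Theorem \ref{Th.OPTEQU} already establishes, so no additional argument is needed.
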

	\begin{figure*}
		\begin{align}\label{Eq.ETExp}
			\setcounter{equation}{15}
			\pi _k\left( \left. N_k \right|X_k,Z^{k-1} \right) =\left\{ \begin{array}{c}
				\exp \left( -\frac{1}{2}\left( X_k-g_k\left( Z^{k-1} \right) \right) ^{\top}f_{k}^{-1}\left( Z^{k-1} \right) \left( X_k-g_k\left( Z^{k-1} \right) \right) \right) ,\quad N_k=0\\
				1-\exp \left( -\frac{1}{2}\left( X_k-g_k\left( Z^{k-1} \right) \right) ^{\top}f_{k}^{-1}\left( Z^{k-1} \right) \left( X_k-g_k\left( Z^{k-1} \right) \right) \right) ,N_k=1\\
			\end{array} \right. .
		\end{align}
		\hrule
		\vspace{-1em}
	\end{figure*}
	We show the structure of the optimal sampler in Fig. \ref{Fig.EVStru} based on Theorem \ref{Th.ETPolicy}. As shown in this figure, the update of belief state is controlled by the sampling policy via the belief state update equation  $b_{k+1}=\varPhi\left(b_k,\mathcal{A}_k,Z_k\right)$. Since $b_{k+1}$ is the adversary's belief of the private information $Y^{k+1}$ given the sampler's output $Z^k$, the privacy-aware sampling design is a closed-loop control problem where the sampler controls the belief of the adversary about the private information.
	\begin{figure}[H]
		\centering
		\includegraphics[width=2.4in]{./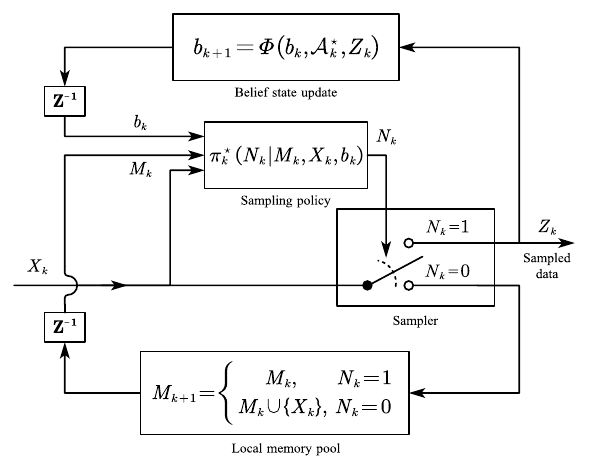}
		\caption{The structure of optimal sampler. }
		\label{Fig.EVStru}
	\end{figure}
	\section{Application to Linear Gaussian Systems}\label{Sec:LinStru}
	In this section, we address the design of a privacy-aware sampler when $X_k$ and $Y_k$ represent the states of a linear Gaussian system. We propose a low-complexity solution to achieve an optimal trade-off between the reconstruction error of $X_k$ and the privacy leakage associated with $Y_k$. Specifically, consider the following system dynamics:
	\begin{align}\label{Eq.LinSysGau}
		\setcounter{equation}{12}
		\left[ \begin{array}{c}
			X_{k+1}\\
			Y_{k+1}
		\end{array} \right] = A \left[ \begin{array}{c}
			X_k\\
			Y_k
		\end{array} \right] + \left[ \begin{array}{c}
			W_{x,k}\\
			W_{y,k}
		\end{array} \right],
	\end{align}
	where $X_k \in \mathbb{R}^{n_x}$ denotes the observable state of the system, which can be sampled, and $Y_k \in \mathbb{R}^{n_y}$ represents the private state of the system. The noise terms $W_{x,k}$ and $W_{y,k}$ are independent and identically distributed (\emph{i.i.d.}) Gaussian random vectors with zero mean and a covariance matrix given by
	\small
	\begin{align}\label{Eq.LinGauSys}
		Q_{k,j} = \mathsf{E} \left[\! \left[ \begin{array}{c}
			\!W_{x,k}\!\\
			\!W_{y,k}\!
		\end{array} \right] 
		\left[ \begin{array}{c}
			\!W_{x,j}\!\\
			\!W_{y,j}\!
		\end{array} \right]^\top \!\right] = 
		\left[ \begin{matrix}
			Q^{xx} & Q^{xy}\\
			Q^{yx} & Q^{yy}
		\end{matrix} \right] \delta[k-j], \!
	\end{align}
	\normalsize
	where $\delta[k-j]$ is the Kronecker delta function, such that $\delta[k-j] = 1$ if $k = j$ and $\delta[k-j] = 0$ otherwise. For notational convenience, we use $Q$ to denote $Q_{k,j}$ when $k = j$. 
	
	Additionally, the initial states $X_0$ and $Y_0$ are Gaussian random variables with constant mean values $\tilde{x}_{0|-1}$ and $\tilde{y}_{0|-1}$, respectively. The covariance matrix of the joint distribution of $X_0$ and $Y_0$ is denoted as $P_{0|-1}$. If the evolution of $Y_{k+1}$ is independent of $X_k$, this model reduces to a special case of the general system dynamics given by Equation~\eqref{Eq.Dynamics-state}.
	
	The optimality equation \eqref{Eq.VF} presented in the previous section offers a theoretical framework for designing the optimal privacy-aware sampler for the linear system in \eqref{Eq.LinSysGau}. However, the implementation of this optimization framework is challenging, even for the simple linear system, due to the computational complexity involved in evaluating mutual information and updating the belief state. We next show that, by leveraging the structure of the linear Gaussian system, both the loss computation and the belief state updates can be significantly simplified. For convenience, we consider the following optimization problem similar to \eqref{Eq.OP},
		\begin{align}\label{Eq.GauObj}
			\setcounter{equation}{14}
			\min_{\left\{\pi, \tilde{\pi}\right\}}  \sum_{k=0}^{K}\mathsf{E}\left[\left(\!X_k-\tilde{X}_k\!\right)^{\top}\!\left(\!X_k-\tilde{X}_k\!\right)\!\right]+\lambda I\!\left( \! Z^K;Y^K \!\right).
	\end{align}
	\subsection{Stochastic Sampling Policy Design}
	Since the sampling policy is dependent on the system states $X^k$ as well as its past decisions $N^{k-1}$, the conditional distributions of the states of the system given the sampler's output may not be Gaussian. To solve this problem, we next propose a stochastic sampling policy which ensures that these distributions remains Gaussian. 
	
	To this end, let $\epsilon_k\!\sim\! \mathbf{U}\!\left[ 0,1 \right]$ be a random variable sampled from the uniform distribution, $f_k\!\left(Z^{k-1}\right)$ be a positive semi-definite symmetric matrix function and $g_k\!\left(Z^{k-1}\right)$ be a mapping from $Z^{k-1}$ to $\mathrm{R}^{n_x}$. The sampler decides to discard the sample, \emph{i.e.}, $N_k=0$, if $\epsilon_k \leq \exp \! \left( -\frac{1}{2}\left( X_k-g_k\left( Z^{k-1} \right) \right) ^{\top}\!f_{k}^{-1}\left( Z^{k-1} \right) \left( X_k-g_k\left( Z^{k-1} \right) \right) \right)$. Otherwise, $N_k=1$, and $X_k$ will be kept. This stochastic policy is expressed in the form of \eqref{Eq.ETExp}.
	\begin{remark}
		In the context of event-triggered control \cite{han2015stochastic, demirel2018tradeoffs}, stochastic triggering policies similar to \eqref{Eq.ETExp} have been proposed to enable the application of Kalman filtering. However, unlike \cite{han2015stochastic, demirel2018tradeoffs}, we design a privacy-aware sampling policy in the form of \eqref{Eq.ETExp} by solving the privacy-utility trade-off optimization problem \eqref{Eq.GauObj}.
	\end{remark}
	\subsection{Optimal Reconstruction Policy}
	In this subsection, we show that the optimal reconstruction policy can be analytically derived based on the conditional distributions of system states. To this end, we first derive recursive update equations for these distributions.
	\begin{lemma}\label{Lm.GaussianBelief}
		The following statements are true under the stochastic sampling policy in \eqref{Eq.ETExp}.
		\begin{itemize}
			\item The belief state $p\left(\left. X_k, Y^{k} \right|Z^{k-1}\right)$ is a conditional Gaussian distribution with the mean $\left(\tilde{x}_{k|k-1}, \tilde{y}^{k|k-1}\right)$ and covariance matrix $P_{k|k-1}$.
			\item The conditional distribution $p\!\left(\!\left. X_k, \!Y^k \right|\!Z^{k}\!\right)$ is Gaussian, with the mean $\!\left(\!\tilde{x}_{k|k}, \!\tilde{y}^{k|k}\!\right)\!$ and covariance matrix $P_{k|k}$.
			\item $\left(\!\tilde{x}_{k|k-1}, \tilde{y}^{k|k-1}\!\right)$ and $\tilde{P}_{k|k-1}$ are recursively computed via,
			\begin{align}
				\left[ \begin{array}{c}
					\tilde{x}_{k|k-1}\\
					\tilde{y}^{k|k-1}\\
				\end{array} \right] =A_k\left[ \begin{array}{c}
					\tilde{x}_{k-1|k-1}\\
					\tilde{y}^{k-1|k-1}\\
				\end{array} \right] , \nonumber\\
				P_{k|k-1}=A_kP_{k-1|k-1}A_{k}^{\top}+\Sigma _k, \nonumber
			\end{align}
			where
			\begin{align}
				\!\!A_k=\!\!\left[ \begin{matrix}
					A&		\!\!\!\!0_{n\times n_y\left(k-1\right)}\\
					0_{kn_y\times n_x}&		\!\!\!\!I_{kn_y\times kn_y}\\
				\end{matrix} \right] \!,  \Sigma _k=\!\!\left[ \begin{matrix}
					Q&	\!\!	0_{n\times kn_y}\\
					0_{kn_y\times n}&	\!\!	0_{kn_y\times kn_y}\\
				\end{matrix} \!\right] \!. \nonumber
			\end{align}
			Given the initial state distribution of $X_0$ and $Y_0$, \emph{i.e.}, $p\left(X_0,Y_0\right)$, $\tilde{x}_{0|-1}$ and $\tilde{y}_{0|-1}$ are the mean of $X_0$ and $Y_0$, respectively. $P_{0|-1}$ is the covariance matrix of $p\left(X_0,Y_0\right)$.
			\item $\tilde{x}_{k|k}$, $\tilde{y}^{k|k}$ and $\tilde{P}_{k|k}$ are computed as follows.
			If $N_k$ is zero, then we have
			\begin{align}
				\!\!\left[\!\! \begin{array}{c}
					\tilde{x}_{k|k}\\
					\tilde{y}^{k|k}\\
				\end{array} \!\!\right] \!=\!\left( D_k+I \right) ^{-1}&\!\left(\! D_k\!\left[\!\! \begin{array}{c}
					g_k\left( Z^{k-1} \right)\\
					0_{n_y\left( k+1 \right) \times 1}\\
				\end{array} \!\!\right] \!+\!\left[\!\! \begin{array}{c}
					\tilde{x}_{k|k-1}\\
					\tilde{y}^{k|k-1}\\
				\end{array} \!\!\right] \right) \!,\nonumber\\
				P_{k|k}=&\left( D_k+I \right) ^{-1}P_{k|k-1},\nonumber
			\end{align}
			where
			\begin{align}
				D_k=P_{k|k-1}\left[ \begin{matrix}
					f_{k}^{-1}\left( Z^{k-1} \right)&		0_{n_x\times n_y\left( k+1 \right)}\\
					0_{n_y\left( k+1 \right) \times n_x}&		0_{n_y\left( k+1 \right) \times n_y\left( k+1 \right)}\\
				\end{matrix} \right] . \nonumber
			\end{align}
			If $N_k$ is one, then we have $\tilde{x}_{k|k}=Z_k$ and
			\begin{align}
				\tilde{y}^{k|k}=\tilde{y}^{k|k-1}&+P_{k|k-1}^{xy}\left( P_{k|k-1}^{xx} \right) ^{-1}\left( Z_k-\tilde{x}_{k|k-1} \right) ,\nonumber\\
				P_{k|k}=&\left[ \begin{matrix}
					0_{n_x\times n_x}&		0_{n_x\times n_y\left( k+1 \right)}\\
					0_{n_y\left( k+1 \right) \times n_x}&		P_{k|k}^{yy}\\
				\end{matrix} \right] ,\nonumber
			\end{align}
			where
			\begin{align}
				P_{k|k}^{yy}=P_{k|k-1}^{yy}-P_{k|k-1}^{yx}\left( P_{k|k-1}^{xx} \right) ^{-1}P_{k|k-1}^{xy}. \nonumber
			\end{align}
		\end{itemize}
	\end{lemma}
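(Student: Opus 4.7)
The plan is to prove all four bullets simultaneously by induction on $k$, establishing jointly that $p(X_k,Y^k|Z^{k-1})$ and $p(X_k,Y^k|Z^k)$ are Gaussian with the stated moments and recursions. The base case $k=0$ is immediate from the assumption that $(X_0,Y_0)$ is jointly Gaussian with mean $(\tilde x_{0|-1},\tilde y_{0|-1})$ and covariance $P_{0|-1}$.

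For the prediction step (bullet 3), I would view $(X_k,Y^k)$ as an augmented state and append $Y_{k+1}$ to the private trajectory. The dynamics \eqref{Eq.LinSysGau} then read
\begin{align*}
\begin{bmatrix}X_{k+1}\\ Y^{k+1}\end{bmatrix} = A_k \begin{bmatrix}X_k\\ Y^k\end{bmatrix} + \bar W_k,
\end{align*}
where the lower $Y^k$ block is propagated by the identity and carries zero process noise, giving exactly the $A_k,\Sigma_k$ stated. Since $\bar W_k$ is Gaussian and independent of $(X_k,Y^k,Z^{k-1})$, the standard linear-Gaussian propagation applied to the inductive hypothesis yields the claimed $(\tilde x_{k+1|k},\tilde y^{k+1|k})$ and $P_{k+1|k}$.

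The update step (bullet 4) splits into two cases. When $N_k=1$, we have $Z_k=X_k$ exactly; conditioning the Gaussian $p(X_k,Y^k|Z^{k-1})$ on its $X_k$-component yields a degenerate mass at $\tilde x_{k|k}=Z_k$ for $X_k$ and, by the standard Gaussian conditioning (Schur complement) formulas, the displayed $\tilde y^{k|k}$ and $P_{k|k}^{yy}$ for $Y^k$. The more interesting case is $N_k=0$. Under the policy \eqref{Eq.ETExp}, the conditional mass of the event $\{N_k=0\}$ given $(X_k,Y^k)$ equals $\exp\left(-\tfrac12(X_k-g_k)^\top f_k^{-1}(X_k-g_k)\right)$, which has Gaussian shape and depends only on $X_k$. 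Bayes' rule gives
\begin{align*}
p(X_k,Y^k|Z_k=\emptyset,Z^{k-1}) \propto p(X_k,Y^k|Z^{k-1}) \cdot \exp\!\left(-\tfrac12(X_k-g_k)^\top f_k^{-1}(X_k-g_k)\right),
\end{align*}
which is a product of two Gaussian-shaped densities in the augmented variable and is therefore Gaussian. Completing the square in the exponent, the posterior precision is $P_{k|k-1}^{-1}+F_k$, where $F_k$ is the block-diagonal matrix with $f_k^{-1}$ in the $X_k$-block and zeros in the $Y^k$-block; left-multiplying by $P_{k|k-1}$ and inverting produces the compact form $P_{k|k}=(D_k+I)^{-1}P_{k|k-1}$ with $D_k=P_{k|k-1}F_k$. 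The mean formula follows by matching the linear terms in the same completion-of-squares identity.

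The main obstacle I anticipate is not any single calculation but the bookkeeping: the augmented vector $Y^k$ grows with $k$, so $A_k$, $\Sigma_k$, $F_k$, and $D_k$ all have block structure of increasing size, and care is needed to recover the stated compact form $(D_k+I)^{-1}P_{k|k-1}$ directly, without implicitly requiring $P_{k|k-1}$ or $f_k$ to be invertible (the $N_k=1$ case in particular produces singular $P_{k|k}$, so the induction must be formulated in terms of mean and covariance rather than precision). Once this block algebra is handled cleanly, the inductive step closes and the lemma follows.
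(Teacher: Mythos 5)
Your proposal is correct and follows essentially the same route as the paper's proof: induction on $k$, with the prediction step handled by linear--Gaussian propagation of the augmented state $(X_k,Y^k)$ under $A_k,\Sigma_k$, the $N_k=1$ update by standard Gaussian conditioning (Schur complement), and the $N_k=0$ update by observing that the retention probability in \eqref{Eq.ETExp} is an unnormalized Gaussian in $X_k$ alone, so Bayes' rule yields a normalized product of Gaussians with precision update $P_{k|k-1}^{-1}+F_k$ and hence $P_{k|k}=(D_k+I)^{-1}P_{k|k-1}$. Your remark about formulating the induction in terms of means and covariances (rather than precisions) to accommodate the singular $P_{k|k}$ after a transmission is a sensible refinement of the same argument, not a departure from it.
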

	\begin{proof}
		See Appendix \ref{App.Lm.GaussianBelief}.
	\end{proof}
	According to Lemma \ref{Lm.GaussianBelief}, given $Z^{k}$, the adversary's belief about $X_k$ and $Y^k$ are fully captured by the conditional mean $\tilde{x}_{k|k}$, $\tilde{y}^{k|k}$ and covariance matrix $P_{k|k}$. We next derive the optimal reconstruction policy based on Lemma \ref{Lm.GaussianBelief}.
	\begin{lemma}\label{Lm.OptLinState}
		Given the sampling policy collection \eqref{Eq.ETExp} with $f_k$ and $g_k$, and the sampler's output $Z^{k}$,	the optimal reconstruction of $X_k$ is given by $\tilde{X}_k=\tilde{x}_{k|k}$ as stated in Lemma \ref{Lm.GaussianBelief}.
	\end{lemma}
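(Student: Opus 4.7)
The plan is to derive the optimal reconstruction as a direct corollary of the two previously established results: Lemma \ref{Lm.OPEst}, which characterizes the optimal reconstruction under squared-error distortion, and Lemma \ref{Lm.GaussianBelief}, which characterizes the conditional belief under the stochastic policy \eqref{Eq.ETExp}. The argument proceeds in two short steps.

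First, I would invoke Lemma \ref{Lm.OPEst}. Since the distortion measure in the linear Gaussian objective \eqref{Eq.GauObj} is the squared reconstruction error $(X_k-\tilde X_k)^{\top}(X_k-\tilde X_k)$, Lemma \ref{Lm.OPEst} yields that the optimal reconstruction is the conditional mean $\tilde X_k^{\star} = \mathsf{E}\left[X_k \mid Z^{k}\right]$, for every admissible sampling policy. Thus the task reduces to evaluating this conditional mean under the particular sampling policy \eqref{Eq.ETExp}.

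Second, I would invoke Lemma \ref{Lm.GaussianBelief}, which shows that under \eqref{Eq.ETExp} the joint posterior $p\paren{X_k, Y^k \mid Z^k}$ is Gaussian with mean vector $\paren{\tilde{x}_{k|k}, \tilde{y}^{k|k}}$ and covariance $P_{k|k}$. Marginalizing over $Y^k$ immediately gives that $X_k \mid Z^k$ is Gaussian with mean $\tilde{x}_{k|k}$, so $\mathsf{E}\left[X_k \mid Z^k\right] = \tilde{x}_{k|k}$. Combining the two steps yields $\tilde X_k^{\star} = \tilde{x}_{k|k}$, which is the claim.

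The only subtlety worth double-checking is the case split on $N_k$. When $N_k=1$ we have $Z_k=X_k$, so trivially $\mathsf{E}\left[X_k\mid Z^k\right]=Z_k$, which matches the expression $\tilde{x}_{k|k}=Z_k$ in Lemma \ref{Lm.GaussianBelief}. When $N_k=0$, the nontrivial formula $\tilde{x}_{k|k}=(D_k+I)^{-1}\paren{D_k\,g_k\paren{Z^{k-1}}+\tilde{x}_{k|k-1}}$ (in the $X_k$-block) absorbs the information carried by the silent event through the Gaussian-shaped likelihood in \eqref{Eq.ETExp}. The main obstacle in this whole argument is showing that this silent-event update preserves Gaussianity, but that is already the content of Lemma \ref{Lm.GaussianBelief}. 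Hence no further computation is required, and the present lemma follows as an immediate corollary of Lemmas \ref{Lm.OPEst} and \ref{Lm.GaussianBelief}.
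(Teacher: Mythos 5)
Your proposal is correct and follows essentially the same route as the paper's own proof: apply Lemma \ref{Lm.OPEst} to reduce the problem to computing $\mathsf{E}\left[X_k \mid Z^k\right]$, then read off that conditional mean as $\tilde{x}_{k|k}$ from the Gaussianity established in Lemma \ref{Lm.GaussianBelief}. The additional case split on $N_k$ that you verify is already subsumed in Lemma \ref{Lm.GaussianBelief} and does not change the argument.
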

	\begin{proof}
		According to Lemma \ref{Lm.OPEst}, the optimal reconstruction policy is the conditional expectation $\mathsf{E}\left[ \left. X_k \right|Z^{k} \right] $. Since the conditional distribution $p\left(\left. X_k, Y^{k} \right|Z^{k}\right)$ is Gaussian with the mean $\tilde{x}_{k|k}$ as shown in Lemma \ref{Lm.GaussianBelief}, the optimal reconstruction of $X_k$ is $\tilde{x}_{k|k}$.
	\end{proof}
	So far, we have shown that the belief state of sampler is Gaussian, and can be updated in a recursive manner. These properties will play essential roles in reducing the optimization complexities of the privacy-aware sampler. 
	\begin{figure*}
		\begin{align}\label{Eq.OneStepLoss}
			\setcounter{equation}{17}
			l_k\!\left( f_k,g_k,Z^{k-1} \right) =&P\left( N_k=0 \middle| Z^{k-1} \right) f_k\left( Z^{k-1} \right) \left( f_k\left( Z^{k-1} \right) +P_{k|k-1}^{xx} \right) ^{-1}P_{k|k-1}^{xx} +\lambda \log \sqrt{\left| P_{k|k-1}^{yy} \right|} \nonumber
			\\
			&-\lambda \left( 1-P\left( N_k=0 \middle| Z^{k-1} \right) \right) \left( \log \sqrt{\left|  \left( P_{k|k-1}^{yy}-P_{k|k-1}^{yx}\left( P_{k|k-1}^{xx} \right) ^{-1}P_{k|k-1}^{xy} \right) \right|} \right) \nonumber
			\\
			&-\lambda P\left( N_k=0 \middle| Z^{k-1} \right) \left( \log \sqrt{\left|  \left( P_{k|k-1}^{yy}-P_{k|k-1}^{yx}\left( f_k\left( Z^{k-1} \right) +P_{k|k-1}^{xx} \right) ^{-1}P_{k|k-1}^{xy} \right) \right|} \right),
		\end{align}
		\hrule
		\begin{align}
			\!\!\!P\!\left(\! N_k\!=\!0 \middle| Z^{k-1} \!\right) \!=\!\!\!\sqrt{\!\frac{\left| f_k\!\left( Z^{k-1} \right) \right|}{\left|\! \left(\! f_k\!\left( Z^{k-1} \right) \!+\!P_{k|k-1}^{xx} \right) \!\right|\!}}\exp\! \left( \!\!-\frac{1}{2}\!\left(\! g_k\!\left( Z^{k-1} \right) \!-\!\tilde{x}_{k|k-1} \!\right) ^{\top}\!\!\!\left(\! f_k\!\left( Z^{k-1} \right)\! +\!P_{k|k-1}^{xx} \!\right) ^{-1}\!\!\left(\! g_k\!\left(\! Z^{k-1} \right) \!-\!\tilde{x}_{k|k-1} \!\right) \!\!\right) \!, \!\!
		\end{align}
		\hrule
		\begin{align} \label{Eq.LinOptEq}
			V_{k}^{\star}\left( \tilde{x}_{k|k-1},P_{k|k-1} \right) =\min_{f_k, g_k} l_k\left( f_k, g_k,\tilde{x}_{k|k-1},P_{k|k-1} \right) +\mathsf{E}\left[ V_{k+1}^{\star}\left( \tilde{x}_{k+1|k},P_{k+1|k} \right) \middle| Z^{k-1} \right].
		\end{align}
		\hrule
		\vspace{-1em}
	\end{figure*}
	\subsection{Optimal Privacy-Aware Sampler}
	In this subsection, we study the structural properties of the optimal sampling policy. To this end, we first show that the optimization objective function \eqref{Eq.GauObj} can be analytically expressed with the conditional mean and covariance matrix shown in Lemma \ref{Lm.GaussianBelief}. 
	\begin{lemma}\label{Lm.GaussianOpt}
		The optimization problem \eqref{Eq.GauObj} can be written as,
		\begin{align}\label{Eq.OPLINEAR}
			\setcounter{equation}{16}
			\min_{\left\{ f_k, g_k \right\} _{k=0}^{K}}\!\! L\left( f,g \right) =\!\min_{\left\{ f_k, g_k \right\} _{k=0}^{K}}\! \mathsf{E}\!\left[ \sum_{k=0}^K{l_k\!\left( f_k, g_k,Z^{k-1} \right)} \!\right] ,\!
		\end{align}
		with the one-step loss $l_k\left( f_k,g_k,Z^{k-1} \right)$ in \eqref{Eq.OneStepLoss}.
	\end{lemma}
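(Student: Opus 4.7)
The plan is to rewrite the two components of \eqref{Eq.GauObj}---the accumulated reconstruction error and the mutual information---as a time sum of one-step losses conditioned on $Z^{k-1}$, each admitting a closed form under the Gaussian belief structure of Lemma~\ref{Lm.GaussianBelief}, and then to identify the resulting summand with $l_k(f_k,g_k,Z^{k-1})$ in \eqref{Eq.OneStepLoss}.

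First, I would invoke Lemma~\ref{Lm.OptLinState} to replace $\tilde X_k$ by $\tilde x_{k|k}$ and apply the tower law by conditioning on $(Z^{k-1},N_k)$. When $N_k=1$ the reconstruction equals $X_k$ and the contribution vanishes; when $N_k=0$ the conditional mean-square error equals the trace of the $(x,x)$-block of $P_{k|k}$. A direct block-triangular inversion of $D_k+I$ in Lemma~\ref{Lm.GaussianBelief} reduces this block to $f_k(Z^{k-1})\bigl(f_k(Z^{k-1})+P^{xx}_{k|k-1}\bigr)^{-1}P^{xx}_{k|k-1}$, which, weighted by $P(N_k=0\mid Z^{k-1})$, is precisely the first term of \eqref{Eq.OneStepLoss}.

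For the mutual information I would use the telescope
\[
I(Z^K;Y^K)=\sum_{k=0}^{K}\bigl[h(Y^k\mid Z^{k-1})-h(Y^k\mid Z^k)\bigr],
\]
which I expect to be the main obstacle. The delicate point is reducing the $h(Y^K\mid Z^j)$ terms produced by the usual mutual-information chain rule to conditional entropies of the truncated trajectory $Y^k$. This rests on the conditional independence $Y_{k+1}^K\perp Z^k\mid Y^k$, which in turn follows because $Z^k$ is measurable with respect to $(X^k,M^k)$ and the dynamics in \eqref{Eq.LinSysGau} imply $(X^k,M^k)\perp Y_{k+1}^K\mid Y^k$: future private inputs do not affect past states or past sampling decisions. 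Once the telescope is established, the $k$-th summand equals $\mathsf{E}[l_I(\mathcal{A}_k,b_k)\mid Z^{k-1}]$ from \eqref{Eq.InfoLoss}. Applying the Gaussian differential-entropy formula and inserting the two posterior covariances of $Y^k$ from Lemma~\ref{Lm.GaussianBelief}---namely $P^{yy}_{k|k-1}-P^{yx}_{k|k-1}(P^{xx}_{k|k-1})^{-1}P^{xy}_{k|k-1}$ when $N_k=1$, and the analogue with $f_k(Z^{k-1})+P^{xx}_{k|k-1}$ replacing $P^{xx}_{k|k-1}$ when $N_k=0$ (obtained from the same block inversion)---produces the three $\log\sqrt{|\cdot|}$ terms of \eqref{Eq.OneStepLoss} with the correct weights $1$, $1-P(N_k=0\mid Z^{k-1})$, and $P(N_k=0\mid Z^{k-1})$.

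Finally, I would verify the displayed closed form for $P(N_k=0\mid Z^{k-1})$ by marginalising the triggering law \eqref{Eq.ETExp} against the Gaussian marginal $p(X_k\mid Z^{k-1})=\mathcal{N}(\tilde x_{k|k-1},P^{xx}_{k|k-1})$. Completing the square in the combined quadratic form of the prior and of the triggering kernel yields a Gaussian integral whose normalising constant contributes the $\sqrt{|f_k(Z^{k-1})|/|f_k(Z^{k-1})+P^{xx}_{k|k-1}|}$ prefactor, while the residual quadratic in $g_k(Z^{k-1})-\tilde x_{k|k-1}$ with kernel $(f_k(Z^{k-1})+P^{xx}_{k|k-1})^{-1}$ gives the exponential factor. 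Combining the three steps and taking the outer expectation over $Z^{k-1}$ yields \eqref{Eq.OPLINEAR}.
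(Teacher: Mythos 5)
Your proposal is correct and takes essentially the same route as the paper's proof: it conditions the squared error on $N_k$ and reads off the $(x,x)$ block of $P_{k|k}$ via the block inversion of $D_k+I$, telescopes the mutual information using the chain rule together with the Markov property $Y_{k+1}^{K}\to(Z^{k-1},Y^k)\to Z_k$ (which is exactly the paper's Lemma~\ref{Lm.OptProblem}), evaluates the resulting conditional entropies with the Gaussian differential-entropy formula and the posterior covariances of Lemma~\ref{Lm.GaussianBelief}, and obtains $P(N_k=0\mid Z^{k-1})$ by marginalising the triggering kernel against $\mathcal{N}(\tilde x_{k|k-1},P^{xx}_{k|k-1})$. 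The only (harmless) divergences are cosmetic: you correctly state the mean-square error as the trace of the $(x,x)$ block where the paper writes the matrix itself, and your conditional-independence justification presumes, as the paper does, that $Y_{k+1}$ does not depend on $X_k$.
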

	\begin{proof}
		See Appendix \ref{App.Lm.GaussianOpt}.
	\end{proof}
	As shown in Lemma \ref{Lm.GaussianOpt}, the one-step loss \eqref{Eq.OneStepLoss} has four terms, where the first term $f_k\!\left(\! Z^{k-1} \!\right)\! \left(\! f_k\!\left(\! Z^{k-1} \!\right) +P_{k|k-1}^{xx} \!\right) ^{-1}P_{k|k-1}^{xx}$ is the conditional average reconstruction error when $N_k$ is zero, the second term $-\log \sqrt{\left| P_{k|k-1}^{yy} \right|}$ is the inherent uncertainties of the adversary about $Y^k$ before receiving $Z_k$, the third one and the last term are the privacy leakages with $N_k=1$ and $N_k=0$, respectively.
	
	Compared with the auxiliary optimization problem \eqref{Eq.OP2}, the optimization problem for linear systems \eqref{Lm.GaussianOpt} is simplified due to two reasons. First, given the sampling policy, we do not need to further optimize the reconstruction policy, since it can be directly obtained based on the recursive equations in Lemma \ref{Lm.GaussianBelief}. Second, the one-step loss function is analytical, which reduces the computation complexities of information loss in \eqref{Eq.InfoLoss}. 
	\begin{figure*}
		\begin{align}\label{Eq.ETPG}
			\setcounter{equation}{23}
			\nabla _{\theta}L\left( \theta, \varphi^\star\left(\theta\right) \right) &=\mathsf{E}\left\{ \sum_{k=0}^K{\left( \mathrm{J}_{\tilde{\pi}_{k,\varphi^\star\left(\theta\right)}}\left( \theta \right)  \right) ^{\top}\nabla _{\tilde{\pi}_{k,\varphi^\star\left(\theta\right)}}l_D\left( X_k,\tilde{\pi}_{k,\varphi^\star\left(\theta\right)} \right)} \right\} \nonumber
			\\
			&+\mathsf{E}\left[ \left( \sum_{k=0}^K{l_D\left( X_k,\tilde{\pi}_{k,\varphi^\star\left(\theta\right)} \right) +\lambda \log \frac{p_{\theta}\left( Y^k \mid Z^k \right)}{p_{\theta}\left( Y^k \mid Z^{k-1} \right)}} \right) \left( \sum_{k=0}^K{\nabla _{\theta}\log \pi _{\theta}\left( N_k \mid X^k,N^{k-1} \right)} \right) \right] ,
		\end{align}
		\hrule
		\begin{align}\label{Eq.JacMatTheta}
			\setcounter{equation}{25}
			\mathrm{J}_{\varphi^\star\left(\theta\right)}\left( \theta \right) =-\left(\! \nabla _{\varphi}^{2}\mathsf{E}\!\left[ \sum_{k=0}^K{l_D\left( X_k,\tilde{\pi}_{k,\varphi^\star\left(\theta\right)} \right)} \right] \right) ^{-1}\!\mathsf{E}\!\left[ \left( \sum_{k=0}^K{\nabla _{\varphi}l_D\left( X_k,\tilde{\pi}_{k,\varphi^\star\left(\theta\right)} \right)}\! \right)\! \otimes \!\left( \sum_{k=0}^K{\nabla _{\theta}\log \pi _{\theta}\left( N_k \middle| X^k,N^{k-1} \right)} \!\right) \!\right] \!.\!
		\end{align}
		\hrule
		\vspace{-1em}
	\end{figure*}
	
	We next derive the dynamic programming decomposition optimization formulation for \eqref{Eq.OPLINEAR}.
	\begin{theorem}\label{Th.OPTEQULIN}
		The optimal sampling policy collection can be obtained by solving the recursive optimality equation  \eqref{Eq.LinOptEq}, where the optimal $f_k, g_k$ are functions of $\tilde{x}_{k|k-1}$ and $P_{k|k-1}$, \emph{i.e.}, $f_k^\star\left(\tilde{x}_{k|k-1}, P_{k|k-1}\right)$ and $g_k^\star\left(\tilde{x}_{k|k-1}, P_{k|k-1}\right)$, and $\tilde{x}_{k|k-1}$ and $P_{k|k-1}$ can be recursively computed using Lemma \ref{Lm.GaussianBelief}.
	\end{theorem}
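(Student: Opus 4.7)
The plan is to reduce the problem to a standard backward dynamic programming recursion on the sufficient statistic $(\tilde{x}_{k|k-1}, P_{k|k-1})$, by combining the Gaussianity preservation result of Lemma \ref{Lm.GaussianBelief} with the analytical one-step loss established in Lemma \ref{Lm.GaussianOpt}. The key observation is that the parametric sampling policy \eqref{Eq.ETExp} preserves Gaussianity, so the entire history $Z^{k-1}$ influences the optimization only through the conditional mean and covariance of $(X_k, Y^k)$ given $Z^{k-1}$.

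First I would verify that $(\tilde{x}_{k|k-1}, P_{k|k-1})$ constitutes a controlled Markov state under the sampling decision $(f_k, g_k)$. From Lemma \ref{Lm.GaussianBelief}, $(\tilde{x}_{k+1|k}, P_{k+1|k})$ is obtained from $(\tilde{x}_{k|k}, P_{k|k})$ via the deterministic linear predictor, and $(\tilde{x}_{k|k}, P_{k|k})$ is in turn a function of $(\tilde{x}_{k|k-1}, P_{k|k-1})$, the choice $(f_k, g_k)$, and the randomness $(N_k, Z_k)$ whose distribution is itself determined by these same quantities. Hence the pair $(\tilde{x}_{k|k-1}, P_{k|k-1})$ evolves as a controlled Markov process. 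Second, I would inspect the analytical one-step loss \eqref{Eq.OneStepLoss} together with $P(N_k=0\mid Z^{k-1})$ and confirm that both depend on $Z^{k-1}$ only through this same statistic and the parameters $(f_k, g_k)$; this step is essentially a mechanical check on the blocks $P_{k|k-1}^{xx}, P_{k|k-1}^{xy}, P_{k|k-1}^{yy}$ and on the Gaussian integrals that produce \eqref{Eq.OneStepLoss}.

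Next I would apply standard backward induction on the total cost in \eqref{Eq.OPLINEAR}. Define $V_{K+1}^{\star}\equiv 0$ and, for $k=K,K-1,\dots,0$, the cost-to-go
\begin{align*}
V_k^{\star}(\tilde{x}_{k|k-1},P_{k|k-1}) = \min_{\{f_j,g_j\}_{j=k}^K} \mathsf{E}\!\left[\sum_{j=k}^{K} l_j(f_j,g_j,Z^{j-1}) \,\Big|\, \tilde{x}_{k|k-1},P_{k|k-1}\right].
\end{align*}
By the Markovian structure established above and Bellman's principle of optimality, one can separate the minimization over the current $(f_k,g_k)$ from the minimization over future controls, yielding exactly \eqref{Eq.LinOptEq}. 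The measurability argument guaranteeing that the minimizer $(f_k^{\star},g_k^{\star})$ can be taken as a function of $(\tilde{x}_{k|k-1},P_{k|k-1})$ follows because the objective and the transition kernel in \eqref{Eq.LinOptEq} depend on $Z^{k-1}$ only through $(\tilde{x}_{k|k-1},P_{k|k-1})$.

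The main obstacle will be handling the expectation $\mathsf{E}[V_{k+1}^{\star}(\tilde{x}_{k+1|k},P_{k+1|k})\mid Z^{k-1}]$ carefully, since the post-measurement statistics $(\tilde{x}_{k|k},P_{k|k})$ take two structurally different forms depending on $N_k$, and the distribution of $N_k$ under the stochastic policy \eqref{Eq.ETExp} must be computed by integrating against the Gaussian prior — this is where the $P(N_k=0\mid Z^{k-1})$ expression comes in and where Lemma \ref{Lm.GaussianBelief} is indispensable. A secondary subtlety is that, conditional on $N_k=1$, the update for $P_{k|k}^{yy}$ is deterministic in $Z^{k-1}$ while $\tilde{y}^{k|k}$ depends on the random sampled $Z_k$; one must verify that the joint law of $(\tilde{x}_{k+1|k},P_{k+1|k})$ given $(\tilde{x}_{k|k-1},P_{k|k-1})$ is fully determined by $(f_k,g_k)$, which is exactly what Lemma \ref{Lm.GaussianBelief} provides. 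With these ingredients in place, \eqref{Eq.LinOptEq} follows by a direct backward-induction argument.
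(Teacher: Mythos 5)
Your proposal is correct and follows essentially the same route as the paper: backward dynamic programming from $V_{K+1}^{\star}\equiv 0$ combined with the observation that, by Lemma \ref{Lm.GaussianOpt} and Lemma \ref{Lm.GaussianBelief}, both the one-step loss and the transition of the belief depend on $Z^{k-1}$ only through the sufficient statistic $\left(\tilde{x}_{k|k-1},P_{k|k-1}\right)$. Your treatment is in fact more explicit than the paper's brief argument, carefully verifying the controlled-Markov structure and the two $N_k$-dependent branches of the update, but no new idea is introduced.
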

	\begin{proof}
		The optimal cost-to-go function can be derived in a backward manner starting from $V_{K+1}^\star\left(Z^K\right)=0$ similar to Theorem \ref{Th.OPTEQU}. Since $Z^{k-1}$ can be compressed as the conditional mean $\tilde{x}_{k|k-1}$ and variance $P_{k|k-1}$ as shown in Lemma \ref{Lm.GaussianOpt}, the optimal cost-to-go function, $f_k$ and $g_k$ are functions of $\tilde{x}_{k|k-1}$ and $P_{k|k-1}$.
	\end{proof}
	According to Theorem \ref{Th.OPTEQULIN}, to obtain the optimal sampling policy, we should first optimize $f_k$ and $g_k$, and then construct the sampling policy \eqref{Eq.ETExp} with the system state $X_k$, which is consistent with the dynamic programming decomposition results for general systems in Theorem \ref{Th.ETPolicy}. 
	
	We illustrate the structure of the optimal sampling policy in the linear case in Fig. \ref{Fig.LinearEVStru}, where the conditional mean $\tilde{x}_{k+1|k}$ and covariance matrix $P_{k+1|k}$ are recursively updated using Lemma \ref{Lm.GaussianBelief}. Compared with the optimal sampling policy in Fig. \ref{Fig.EVStru}, the sampling policy for linear Gaussian systems has a similar structure, except that the local memory pool is removed, since we adopt a special policy collection \eqref{Eq.ETExp} without memory state $M_k$. Since the adversary's reconstruction uncertainty, \emph{i.e.}, the conditional covariance $P_{k|k}^{yy}$, is determined by the sampling policy $f_k^\star$, $g_k^\star$ and its output $Z_k$, the adversary's belief about privacy is controlled by the sampler.
	\begin{figure}[H]
		\centering
		\includegraphics[width=2.8in]{./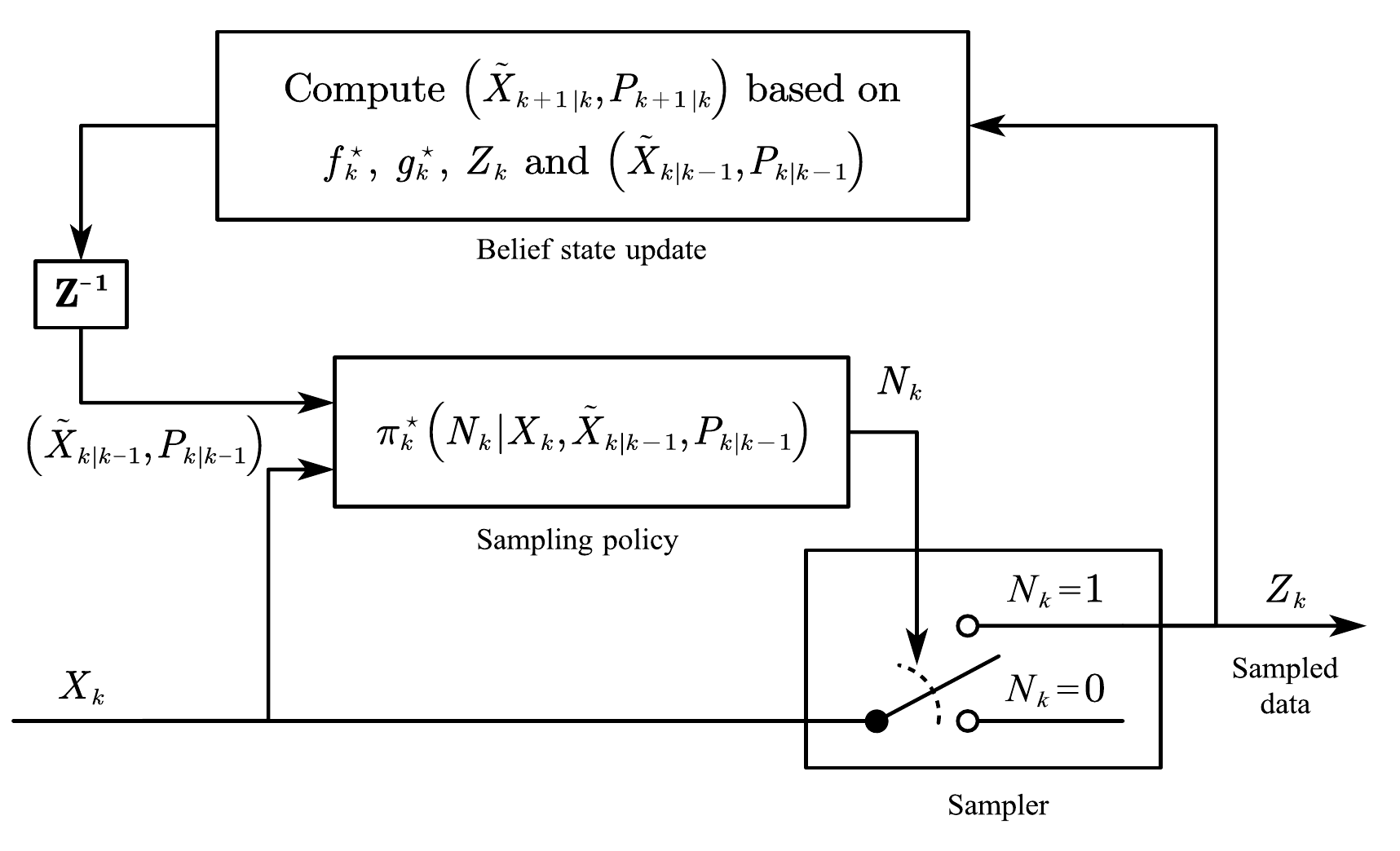}
		\caption{The structure of sampling policy for linear systems. }
		\label{Fig.LinearEVStru}
	\end{figure}
	\section{A Numerical Algorithm for Joint Design of Sampling and Reconstruction Polices}\label{Sec:Alg}
	In this section, we develop a stochastic gradient algorithm for computing the sampling and reconstruction policies. To this end, we first parameterize $\pi$ and $\tilde{\pi}$ with $\varphi$ and $\theta$, respectively. In our numerical algorithm, we consider the following Stackelberg formulation for optimization
	\begin{align} \label{Eq.ETParamOpt}
		\setcounter{equation}{20}
		\!\! \theta ^\star=&\arg\min_{\theta} L\left( \theta ,\varphi^\star\left(\theta\right) \right)\nonumber \\=&\arg\min_{\theta}\! \sum_{k=0}^K{\!\mathsf{E}_{\theta}\!\left[ l_D\left( \!X_k,\tilde{\pi}_{k,\varphi^\star\left(\theta\right)} \right) \right]}\!+\!\lambda I_{\theta}\!\left( Z^K;Y^K \right) \!,\!
	\end{align}
	\begin{align}\label{Eq.EstParamOpt}
		\varphi ^\star\left(\theta\right)=&\arg\min_{\varphi\left(\theta\right)} \tilde{L}\left( \theta ,\varphi\left(\theta\right) \right) \nonumber \\=&\arg\min_{\varphi\left(\theta\right)} \mathsf{E}_{\theta}\left[ \sum_{k=0}^K{l_D\left( X_k,\tilde{\pi}_{k,\varphi\left(\theta\right)} \right)} \right] ,
	\end{align}
	where the sampler acts as the leader and the reconstruction policy acts as the follower. The gradient of the adversary's objective function in \eqref{Eq.EstParamOpt} is given by,
	\begin{align}\label{Eq.GradEst}
		\nabla _{\varphi}\tilde{L}\left( \theta ,\varphi\left(\theta\right) \right) =\mathsf{E}_{\theta}\left[ \sum_{k=0}^K{\nabla _{\varphi}l_D\left( X_k,\tilde{\pi}_{k,\varphi\left(\theta\right)} \right)} \right] ,
	\end{align}
	which is due to the fact that the reconstruction policy does not impact the evolution of system and the sampler with fixed $\theta$.
	
	Since the reconstruction policy acts as the follower, its best response is a function of the sampling policy, \emph{i.e.}, $\varphi^\star\left(\theta\right)$. Thus, the standard policy gradient theorem \cite{bertsekas2019reinforcement} cannot be used to compute the gradient of the objective function in \eqref{Eq.ETParamOpt} with respect to $\theta$. To tackle this challenge, inspired by \cite{fiez2020implicit}, we derive $\nabla _{\theta}L\left( \theta, \varphi^\star\left(\theta\right) \right)$ based on the implicit function theorem and present it in the next lemma.
	\begin{lemma}\label{Lm.PG}
		The gradient of the objective function \eqref{Eq.ETParamOpt} with respect to $\theta$ is given by \eqref{Eq.ETPG}, where 
		\begin{align}
			\setcounter{equation}{24}
			\mathrm{J}_{\tilde{\pi}_{k,\varphi^\star\left(\theta\right)}}\left( \theta \right) =\mathrm{J}_{\tilde{\pi}_{k,\varphi^\star\left(\theta\right)}}\left( \varphi^\star\left(\theta\right) \right) \mathrm{J}_{\varphi^\star\left(\theta\right)}\left( \theta \right),
		\end{align}
		$J$ is the Jacobian matrix, \emph{e.g.}, $\mathrm{J}_{\varphi^\star\left(\theta\right)}\left( \theta \right)$ computed using \eqref{Eq.JacMatTheta}, is the differential of $\varphi^\star\left(\theta\right)$ at $\theta$, $\nabla ^2$ represents the Hessian matrix, $\otimes$ is the outer product, and $\varphi^\star\left(\theta\right)$ satisfies $\nabla _{\varphi}\tilde{L}\left( \theta ,\varphi^\star\left(\theta\right) \right) =0$.
	\end{lemma}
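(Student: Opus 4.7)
The plan is to obtain \eqref{Eq.ETPG} by combining the total-derivative rule with the REINFORCE-style log-derivative trick, and then obtain \eqref{Eq.JacMatTheta} by an implicit differentiation of the follower's first-order condition. Starting from $L(\theta,\varphi)=\mathsf{E}_\theta[\sum_{k}l_D(X_k,\tilde{\pi}_{k,\varphi})]+\lambda I_\theta(Z^K;Y^K)$, I would first treat $\varphi$ as a function of $\theta$ and write
$\nabla_\theta L(\theta,\varphi^\star(\theta))$ by peeling off, term by term, the two channels through which $\theta$ enters: (i) the distribution induced by the sampler $\pi_\theta$, which is handled through the likelihood-ratio identity $\nabla_\theta \mathsf{E}_\theta[f]=\mathsf{E}_\theta[f\sum_{k}\nabla_\theta\log\pi_\theta(N_k\mid X^k,N^{k-1})]$, and (ii) the explicit appearance of $\tilde{\pi}_{k,\varphi^\star(\theta)}$ inside $l_D$, which is handled by the chain rule $\nabla_\theta l_D(X_k,\tilde{\pi}_{k,\varphi^\star(\theta)})=\mathrm{J}_{\tilde{\pi}_{k,\varphi^\star(\theta)}}(\theta)^{\top}\nabla_{\tilde{\pi}_{k,\varphi^\star(\theta)}}l_D$. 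Assembling the two contributions, together with the observation that $\lambda I_\theta(Z^K;Y^K)$ does not depend on $\varphi$, immediately produces the expression in \eqref{Eq.ETPG}.

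Next, to express $\mathrm{J}_{\tilde{\pi}_{k,\varphi^\star(\theta)}}(\theta)$ I would use the chain rule $\mathrm{J}_{\tilde{\pi}_{k,\varphi^\star(\theta)}}(\theta)=\mathrm{J}_{\tilde{\pi}_{k,\varphi^\star(\theta)}}(\varphi^\star(\theta))\,\mathrm{J}_{\varphi^\star(\theta)}(\theta)$, so only $\mathrm{J}_{\varphi^\star(\theta)}(\theta)$ remains. By definition of the follower's best response, $\nabla_{\varphi}\tilde{L}(\theta,\varphi^\star(\theta))\equiv 0$ in $\theta$. Differentiating this identity in $\theta$ and applying the implicit function theorem yields $\mathrm{J}_{\varphi^\star(\theta)}(\theta)=-\bigl(\nabla_{\varphi}^{2}\tilde{L}\bigr)^{-1}\partial_\theta\nabla_{\varphi}\tilde{L}$. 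The Hessian $\nabla_{\varphi}^{2}\tilde{L}=\mathsf{E}_\theta[\sum_{k}\nabla_{\varphi}^{2}l_D]$ follows from \eqref{Eq.GradEst} since $\varphi$ does not affect the trajectory law. For the cross term $\partial_\theta\nabla_{\varphi}\tilde{L}$ I would apply the same likelihood-ratio identity to $\nabla_{\varphi}l_D$, which produces the outer product $\mathsf{E}_\theta\bigl[(\sum_k\nabla_\varphi l_D)\otimes(\sum_k\nabla_\theta\log\pi_\theta)\bigr]$, recovering \eqref{Eq.JacMatTheta}.

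The main technical step is the careful handling of the mutual information under the $\theta$-derivative. Writing $I_\theta(Z^K;Y^K)=\mathsf{E}_\theta\bigl[\sum_{k}\log\frac{p_\theta(Y^k\mid Z^k)}{p_\theta(Y^k\mid Z^{k-1})}\bigr]$, the differentiation produces two groups of terms: the likelihood-ratio group $\mathsf{E}_\theta[(\cdots)\sum_k\nabla_\theta\log\pi_\theta(\cdot)]$, and the ``direct'' group $\mathsf{E}_\theta[\nabla_\theta\log p_\theta(Y^k\mid Z^k)]-\mathsf{E}_\theta[\nabla_\theta\log p_\theta(Y^k\mid Z^{k-1})]$. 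I will show that the direct group vanishes by the standard normalization argument: for any conditional density $p_\theta(\cdot\mid \mathcal{W})$, integrating against its own law gives $\int p_\theta(y\mid w)\nabla_\theta\log p_\theta(y\mid w)\,dy=\nabla_\theta\int p_\theta(y\mid w)\,dy=0$, and taking the outer expectation over $\mathcal{W}$ preserves this. Once this cancellation is established, only the likelihood-ratio term survives, which combines with the analogous treatment of $\mathsf{E}_\theta[\sum_{k}l_D]$ to give exactly the second summand of \eqref{Eq.ETPG}.

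The only subtlety I anticipate is justifying the interchanges of differentiation and expectation (and differentiation under the integral sign for the implicit function step), which requires mild regularity: $\tilde{L}(\theta,\cdot)$ twice continuously differentiable with a nonsingular Hessian at $\varphi^\star(\theta)$, and $\log\pi_\theta$, $\log p_\theta(Y^k\mid Z^k)$, and $l_D$ smooth in their parameters with dominating integrable majorants. Under these standard conditions the implicit function theorem applies, $\varphi^\star(\theta)$ is differentiable, and the formal manipulations above are rigorous; assembling the pieces then yields Lemma \ref{Lm.PG}.
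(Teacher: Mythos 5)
Your proposal is correct and follows essentially the same route as the paper's proof: the likelihood-ratio (score-function) identity for the $\theta$-dependence of the trajectory law, the chain rule plus the implicit function theorem applied to the follower's first-order condition $\nabla_{\varphi}\tilde{L}(\theta,\varphi^{\star}(\theta))=0$ to obtain $\mathrm{J}_{\varphi^{\star}(\theta)}(\theta)$ in \eqref{Eq.JacMatTheta}, and the normalization argument $\int p_{\theta}(y^{k}\mid\cdot)\,dy^{k}=1$ to kill the direct derivatives of $\log p_{\theta}(Y^{k}\mid Z^{k})$ and $\log p_{\theta}(Y^{k}\mid Z^{k-1})$ in the mutual-information term. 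The only addition beyond the paper is your explicit statement of the regularity conditions needed to interchange differentiation and expectation, which the paper leaves implicit.
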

	\begin{proof}
		See Appendix \ref{App.Lm.PG}.
	\end{proof}
	\begin{figure*}
		\center
		\subfigure[]{
			\centering
			\includegraphics[width=2.2in]{./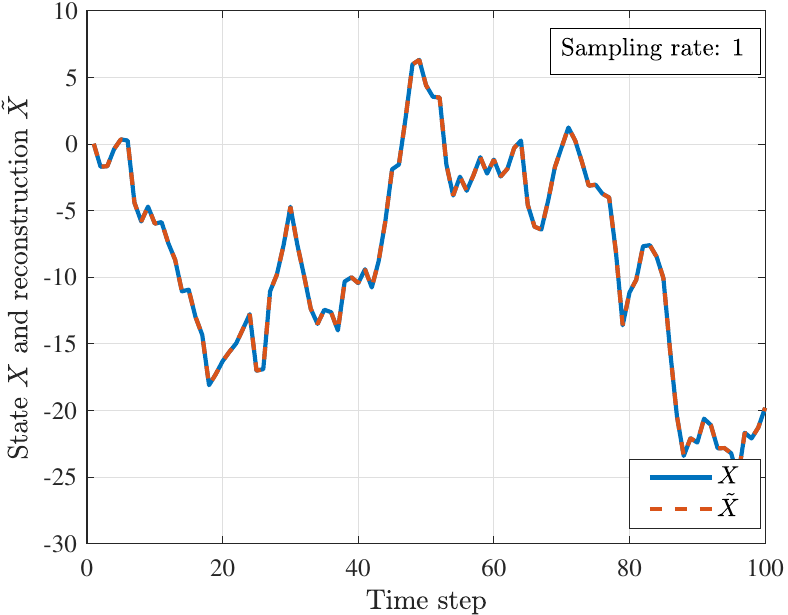}\label{Fig.EstOriX}	
		}	
		\subfigure[]{
			\centering
			\includegraphics[width=2.2in]{./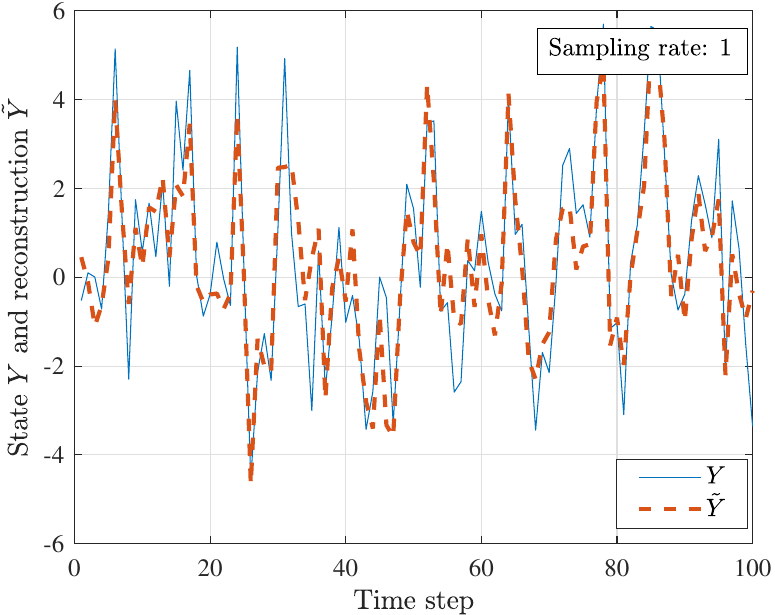}\label{Fig.EstOriY}
		}	
		\captionsetup{singlelinecheck = false, justification=raggedright}
		\caption{A trajectory of X and its reconstruction $(a)$, a trajectory of Y and its reconstruction $(b)$. The sampling rate is equal to one.}\label{Fig.EstOri}
		\vspace{-1em}
	\end{figure*}
	
	With Lemma \ref{Lm.PG}, we can approximately compute the gradient of  \eqref{Eq.ETParamOpt} with respect to $\theta$, once the reconstruction policy is the best response to the sampling policy, \emph{i.e.}, $\nabla _{\varphi}\tilde{L}\left( \theta ,\varphi^\star\left(\theta\right) \right) =0$. 
	\begin{remark}
		For nonlinear stochastic systems, the computation of information loss $\log \frac{p_{\theta}\left( Y^k \middle| Z^k \right)}{p_{\theta}\left( Y^k \middle| Z^{k-1} \right)}$ might be costly. We can apply variational techniques to develop information loss estimators to reduce complexities, \emph{e.g.}, \cite{weng2023optimal,tsur2023neural}.
	\end{remark}
	We develop the nested numerical algorithm \ref{Alg.ETESTOPT} to update the sampling and reconstruction policies based on their gradients \eqref{Eq.GradEst} and \eqref{Eq.ETPG}. Given the reconstruction policy $\tilde{\pi}_{\varphi}$, we first update the sampling policy $\pi_{\theta}$ with stochastic gradient methods that approximately compute the gradient \eqref{Eq.ETPG} with sample trajectories. Then, given $\theta$, we update the reconstruction policy based on the approximate version of \eqref{Eq.GradEst} until $\nabla _{\varphi}\tilde{L}\left( \theta ,\varphi \right) \approx 0$, which approximately ensures that the reconstruction is the best response to the sampler. We finally repeat this procedure until it converges.
	\begin{algorithm}
		\caption{Stochastic Policy Gradient Algorithm for Computing the Sampling and Reconstruction Policies} \label{Alg.ETESTOPT}
		\begin{algorithmic} 
			\State Initialize the sampling policy $\pi_\theta$, the reconstruction policy $\tilde{\pi}_\varphi$ and the positive constant $\alpha$, $\beta$.
			\Repeat
			\State {\bf Sampling policy update step:} 
			\State \quad Given $\theta$ and $\varphi$, generate sample trajectories.
			\State \quad Approximately update $\pi_\theta$ based on \eqref{Eq.ETPG}.
			\State \quad  {\bf Reconstruction policy update step:} 
			\RepeatUntil{}
			\State \qquad Given $\theta$  and $\varphi$, generate sample trajectories.
			\State \qquad approximately update $\tilde{\pi}_\varphi$ based on \eqref{Eq.GradEst}.
			\EndRepeat
			\Until{Convergence}
		\end{algorithmic}
	\end{algorithm}
	\section{Numerical Results}\label{Sec:Sim}
	In this section, we numerically study the performance of the developed sampling approach to privacy. To this end, consider the linear system \eqref{Eq.LinGauSys}, where matrix $A$ is given by 
	$$A=\left[ \begin{matrix}
		0.98&		-0.90\\
		0.00&		0.35\\
	\end{matrix} \right] ,$$
	and the covariance matrix for \emph{i.i.d.} Gaussian noises is
	$$Q=\left[ \begin{matrix}
		1.00&		0.10\\
		0.10&		4.00\\
	\end{matrix} \right]. $$
	The initial state is Gaussian distributed with zero mean and the covariance matrix
	$$Q_0=\left[ \begin{matrix}
		0.50&		0.25\\
		0.25&		0.50\\
	\end{matrix} \right]. $$
	In Fig. \ref{Fig.EstOri}, we illustrate the trajectories of \( X \) and \( Y \) over a horizon of \( K = 100 \), along with their reconstructed trajectories without downsampling. In Fig. \ref{Fig.EstOriX}, the state trajectory \( \{X_k\}_{k=0}^{100} \) is fully shared with the adversary. Given access to \( \{X_k\}_{k=0}^{100} \), the adversary can accurately estimate the private trajectory \( \{Y_k\}_{k=0}^{100} \), as shown in Fig. \ref{Fig.EstOriY}.  To mitigate privacy leakage and reduce storage requirements for state trajectories, we employ stochastic sampling techniques to selectively share or discard state values.  
	
	In the following simulations, we use the mean-squared error as the reconstruction metric for both \( X \) and \( Y \). The sampling rate, defined within the range \([0,1]\), represents the average number of sampled states over the horizon and serves as a measure of compression efficiency.  To achieve efficient compression and reduce storage requirements, the sampling rate should be kept low. If the sampling rate is zero, all state values are discarded, and no information is transmitted to the adversary. Conversely, a sampling rate of one implies that the full trajectory of \( X \) is shared, meaning no compression is applied.  
	
	We first evaluate the performance of stochastic samplers in two different settings: $\left(\romannumeral1\right)$ the optimal privacy-aware sampler and $\left(\romannumeral2\right)$ the open-loop sampler. In the second setting, the decision policy of the stochastic open-loop sampler from \cite{han2015stochastic} is similar to the privacy-aware design in \eqref{Eq.ETExp}, except that $g$ is set to zero and $f$ remains constant throughout the entire horizon. In other words, while the open-loop sampler is stochastic, the distribution of its sampling policy does not depend on the sampler's output. Based on the policy formulation in \eqref{Eq.ETExp}, the open-loop sampler is more likely to share $X_k$ with the adversary if $f$ is small or if the amplitude of $X_k$ is large. We adjust the value of the constant $f$ to explore the behavior of different open-loop samplers. As for the privacy-aware sampler, we optimize both $g$ and $f$ using the policy gradient algorithm.
	\begin{figure}[H]
		\centering
		\includegraphics[width=2.6in]{./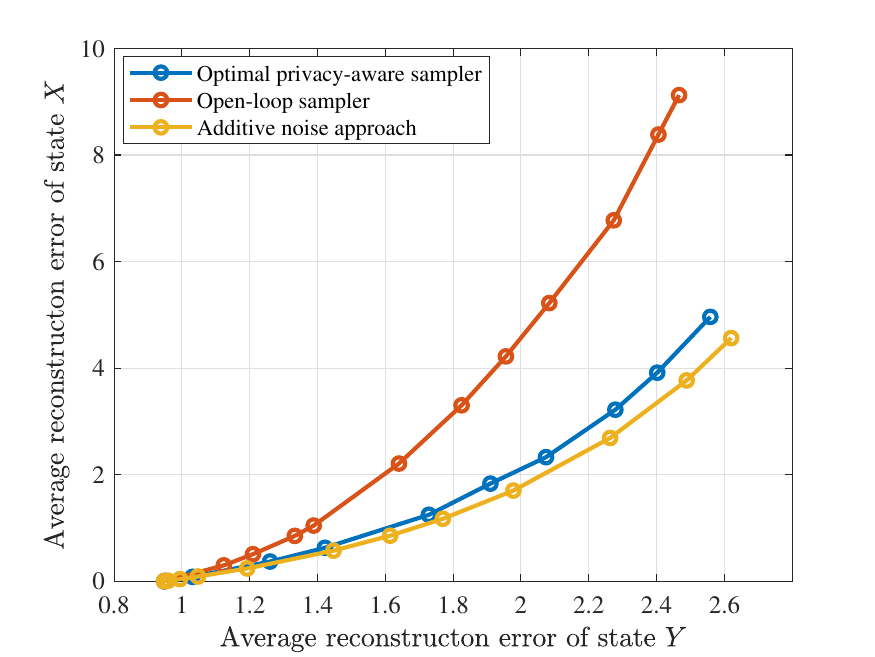}
		\caption{The average reconstruction error of $Y$ versus the average reconstruction error of $X$.}
		\label{Fig.Privacy_Distortion}	
	\end{figure}
	
	To compare the trade-off capabilities of different samplers, we plot the average reconstruction error of $Y$ versus the reconstruction error of $X$ over the entire horizon in Fig. \ref{Fig.Privacy_Distortion}. As shown in Fig. \ref{Fig.Privacy_Distortion}, the reconstruction error of $X$ increases with the adversary's uncertainty about $Y$, \emph{i.e.}, the reconstruction error of $Y$. This trade-off arises because our privacy protection approach is based on reducing the number of shared states. Furthermore, compared to the open-loop design, the optimal privacy-aware sampler achieves a smaller reconstruction error of $X$, demonstrating its superior capability in balancing utility and privacy protection.
	
	We also compare the performance of the privacy-aware sampler with the additive noise approach. In the additive noise approach, the state $X_k$ is perturbed by the Gaussian noise with zero mean before being transmitted, and then the state is estimated at the remote center using a Kalman filter. As the variance of the Gaussian noise increases, the amount of useful information that can be extracted from the perturbed state decreases, leading to a reduction in both utility and privacy leakage. However, the additive noise approach requires the sensor to communicate with the remote center at every time step, significantly increasing both the communication cost and the data storage volume. As illustrated in Fig. \ref{Fig.Privacy_Distortion}, the privacy-aware sampler demonstrates a competitive capability in achieving a trade-off between utility and privacy protection when compared to the additive noise approach. At the same time, the privacy-aware sampler generates fewer samples, thereby reducing communication overhead and storage requirements.
	\begin{figure*}
		\center
		\centering
		\subfigure[]{
			\centering
			\includegraphics[width=2.2in]{./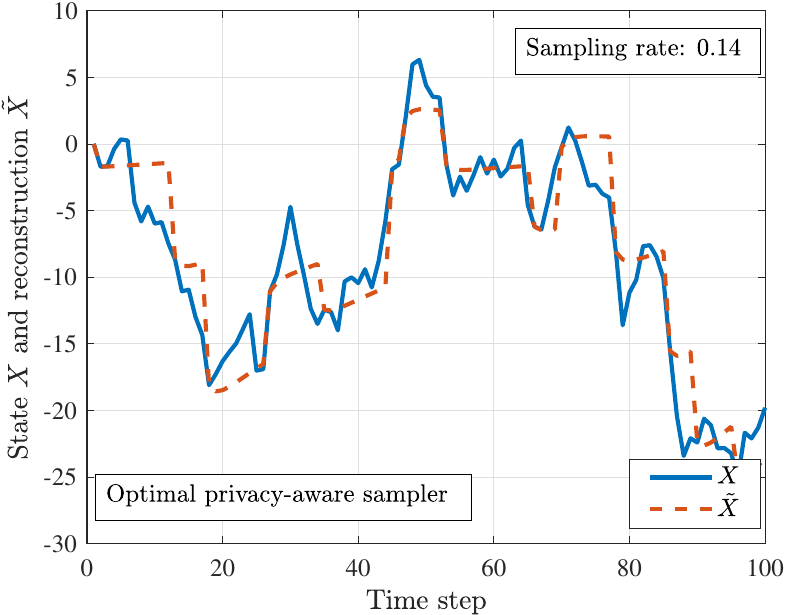}\label{Fig.PrivEstResX}	
		}
		\subfigure[]{
			\centering
			\includegraphics[width=2.2in]{./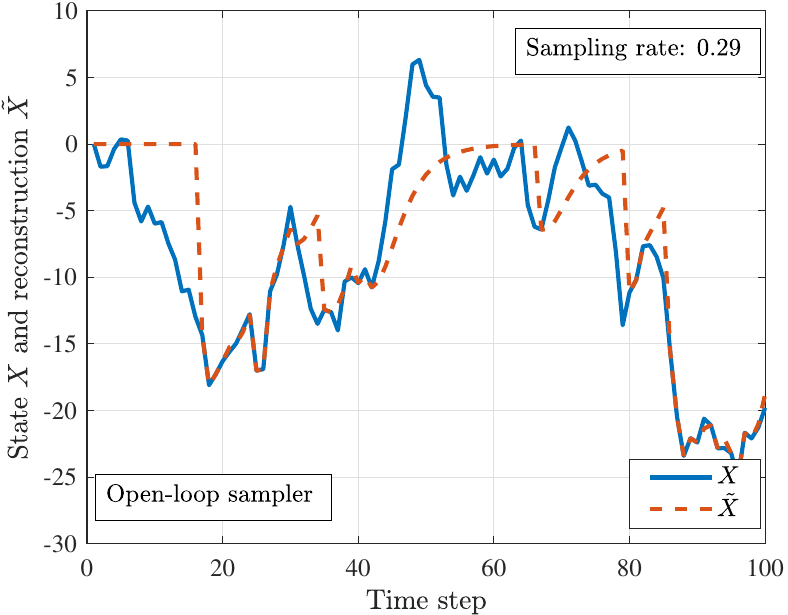}\label{Fig.UniEstResX}
		}	
		\subfigure[]{
			\centering
			\includegraphics[width=2.2in]{./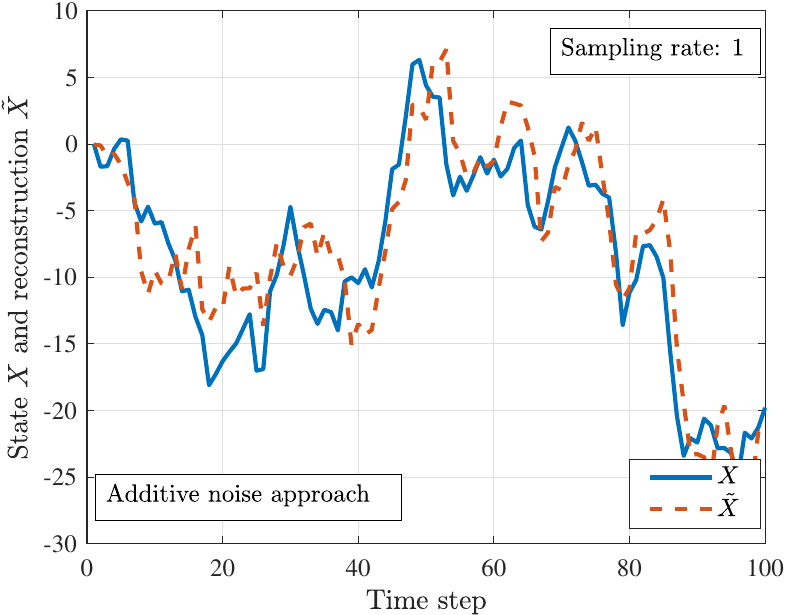}\label{Fig.AddNoiseEstResX}
		}
		
		\centering
		\subfigure[]{
			\centering
			\includegraphics[width=2.2in]{./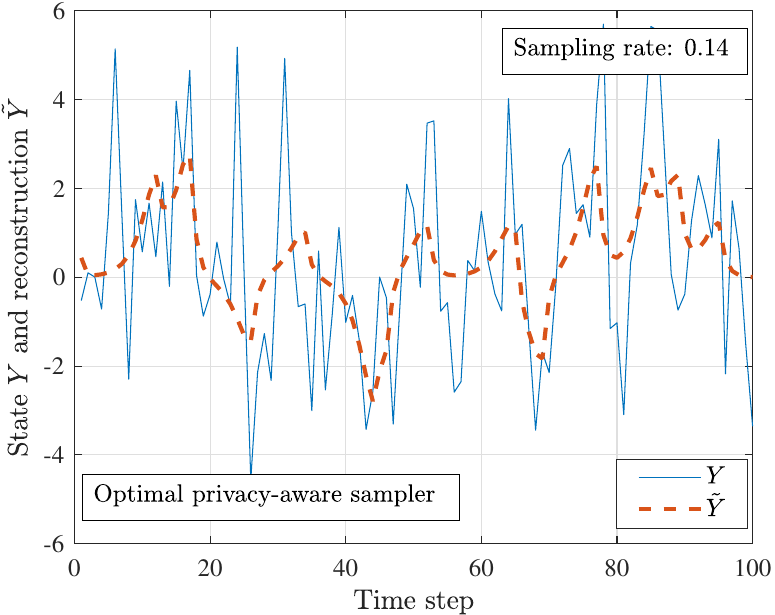}	\label{Fig.PrivEstResY}			
		}
		\subfigure[]{
			\centering
			\includegraphics[width=2.2in]{./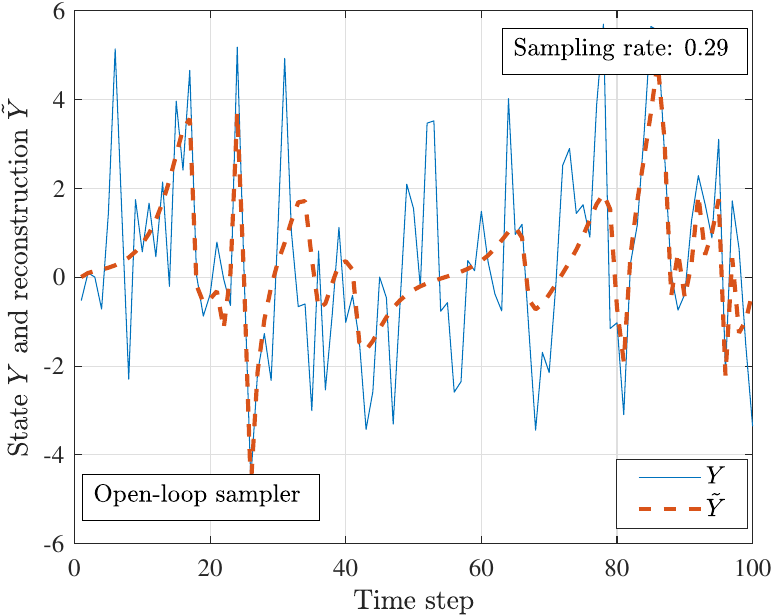}\label{Fig.UniEstResY}
		}
		\subfigure[]{
			\centering
			\includegraphics[width=2.2in]{./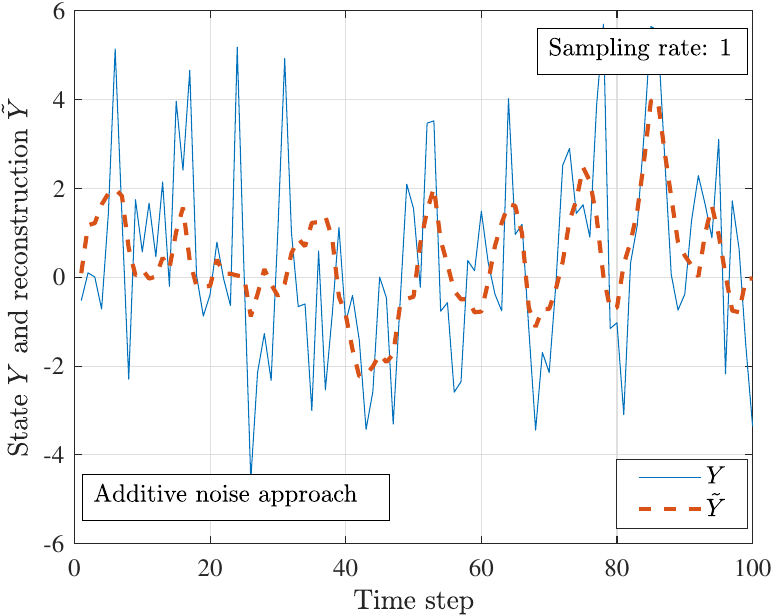}\label{Fig.AddNoiseEstResY}
		}
		\captionsetup{singlelinecheck = false, justification=raggedright}
		\caption{The reconstructed trajectories of $X$ ($a$) and $Y$ ($d$) under the optimal privacy-aware sampler. The reconstructed trajectories of $X$ ($b$) and $Y$ ($e$) under the open-loop sampler. The reconstructed trajectories of $X$ ($c$) and $Y$ ($f$) with the additive noise approach.} 
		\label{Fig.EstCmp}
		\vspace{-1em}
	\end{figure*}
	\begin{figure}[H]
		\centering
		\includegraphics[width=2.6in]{./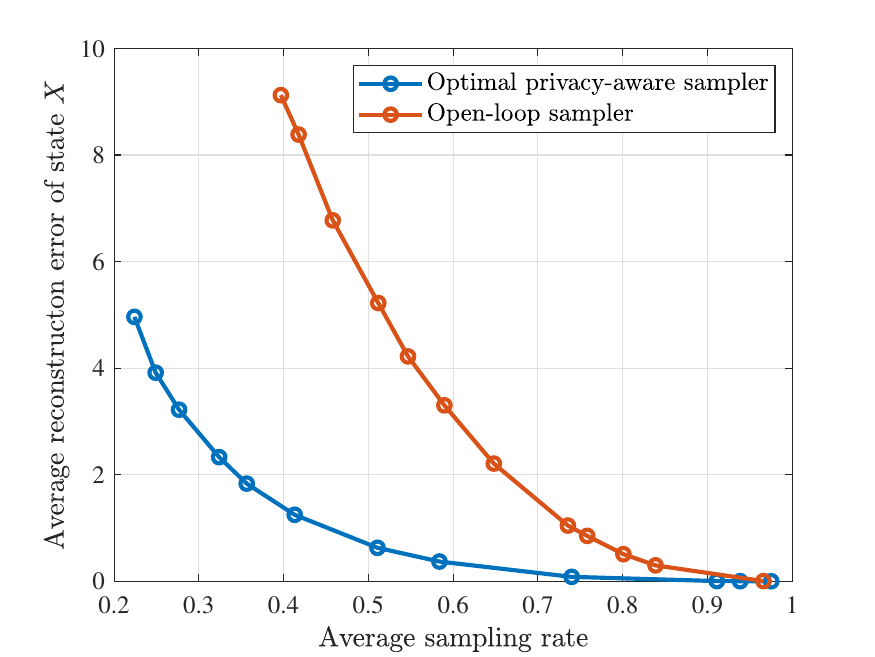}
		\caption{The average reconstruction error of $X$ under different sampling rates. }
		\label{Fig.Penalty_ETRate}	
	\end{figure}
	
	Besides, in Fig. \ref{Fig.EstCmp}, we present the optimal reconstruction trajectories based on the outputs of different samplers and the additive noise approach. The average sampling rates of the optimal privacy-aware sampler and the open-loop sampler are 0.14 and 0.29, respectively.  By comparing Fig. \ref{Fig.PrivEstResX} and Fig. \ref{Fig.UniEstResX}, we observe that the privacy-aware sampler outperforms the open-loop design in state reconstruction, even though the open-loop sampler uses more samples. Furthermore, as shown in Fig. \ref{Fig.PrivEstResY} and Fig. \ref{Fig.UniEstResY}, the adversary can recover the private trajectory more accurately under the open-loop design.  Additionally, by comparing Fig. \ref{Fig.PrivEstResX} with Fig. \ref{Fig.AddNoiseEstResX}, and Fig. \ref{Fig.PrivEstResY} with Fig. \ref{Fig.AddNoiseEstResY}, we observe that the optimal privacy-aware sampler achieves similar performance to the additive noise approach in reconstructing the state $X$ and protecting the privacy $Y$. 
	
	We compare the compression efficiency of different samplers in Fig. \ref{Fig.Penalty_ETRate}, which illustrates the average sampling rate versus the average reconstruction error of $X$. As shown in Fig. \ref{Fig.Penalty_ETRate}, the reconstruction error of $X$ increases as the sampling rate decreases, since state reconstruction becomes less efficient with fewer samples. Moreover, compared to the open-loop sampler, the optimal privacy-aware design achieves the same level of reconstruction accuracy with a smaller number of samples. This indicates that the privacy-aware design is more effective in reducing the state storage size while maintaining reconstruction quality.
	\section{Conclusions}\label{Sec:Con}
	In this paper, we developed a stochastic sampling framework for privacy protection in networked control and data compression. We presented the structural properties of the optimal privacy-aware sampler for both the nonlinear and linear systems via dynamic programming decomposition. We developed algorithms and verified our approach via simulations. The simulation results showed the compression efficiency of our approach, as well as its privacy and utility trade-off. 
	
	\bibliographystyle{IEEEtran}
	\bibliography{reference}

\begin{thebibliography}{10}
\providecommand{\url}[1]{#1}
\csname url@samestyle\endcsname
\providecommand{\newblock}{\relax}
\providecommand{\bibinfo}[2]{#2}
\providecommand{\BIBentrySTDinterwordspacing}{\spaceskip=0pt\relax}
\providecommand{\BIBentryALTinterwordstretchfactor}{4}
\providecommand{\BIBentryALTinterwordspacing}{\spaceskip=\fontdimen2\font plus
\BIBentryALTinterwordstretchfactor\fontdimen3\font minus
  \fontdimen4\font\relax}
\providecommand{\BIBforeignlanguage}[2]{{%
\expandafter\ifx\csname l@#1\endcsname\relax
\typeout{** WARNING: IEEEtran.bst: No hyphenation pattern has been}%
\typeout{** loaded for the language `#1'. Using the pattern for}%
\typeout{** the default language instead.}%
\else
\language=\csname l@#1\endcsname
\fi
#2}}
\providecommand{\BIBdecl}{\relax}
\BIBdecl

\bibitem{li2018information}
S.~Li, A.~Khisti, and A.~Mahajan, ``Information-theoretic privacy for smart
  metering systems with a rechargeable battery,'' \emph{IEEE Transactions on
  Information Theory}, vol.~64, no.~5, pp. 3679--3695, 2018.

\bibitem{jia2017privacy}
R.~Jia, R.~Dong, S.~S. Sastry, and C.~J. Spanos, ``Privacy-enhanced
  architecture for occupancy-based hvac control,'' in \emph{Proceedings of the
  8th international conference on cyber-physical systems}, 2017, pp. 177--186.

\bibitem{dwork2006calibrating}
C.~Dwork, F.~McSherry, K.~Nissim, and A.~Smith, ``Calibrating noise to
  sensitivity in private data analysis,'' in \emph{Theory of Cryptography:
  Third Theory of Cryptography Conference, TCC 2006, New York, NY, USA, March
  4-7, 2006. Proceedings 3}.\hskip 1em plus 0.5em minus 0.4em\relax Springer,
  2006, pp. 265--284.

\bibitem{le2013differentially}
J.~Le~Ny and G.~J. Pappas, ``Differentially private filtering,'' \emph{IEEE
  Transactions on Automatic Control}, vol.~59, no.~2, pp. 341--354, 2013.

\bibitem{sugiura2021bayesian}
G.~Sugiura, K.~Ito, and K.~Kashima, ``Bayesian differential privacy for linear
  dynamical systems,'' \emph{IEEE Control Systems Letters}, vol.~6, pp.
  896--901, 2021.

\bibitem{kawano2020design}
Y.~Kawano and M.~Cao, ``Design of privacy-preserving dynamic controllers,''
  \emph{IEEE Transactions on Automatic Control}, vol.~65, no.~9, pp.
  3863--3878, 2020.

\bibitem{le2019differentially}
J.~Le~Ny and S.~Hirche, ``Differentially private event-triggered sampling,''
  \emph{IFAC-PapersOnLine}, vol.~52, no.~20, pp. 303--308, 2019.

\bibitem{tanaka2017directed}
T.~Tanaka, M.~Skoglund, H.~Sandberg, and K.~H. Johansson, ``Directed
  information and privacy loss in cloud-based control,'' in \emph{2017 American
  control conference (ACC)}.\hskip 1em plus 0.5em minus 0.4em\relax IEEE, 2017,
  pp. 1666--1672.

\bibitem{nekouei2019information}
E.~Nekouei, T.~Tanaka, M.~Skoglund, and K.~H. Johansson,
  ``Information-theoretic approaches to privacy in estimation and control,''
  \emph{Annual Reviews in Control}, vol.~47, pp. 412--422, 2019.

\bibitem{molloy2023smoother}
T.~L. Molloy and G.~N. Nair, ``Smoother entropy for active state trajectory
  estimation and obfuscation in pomdps,'' \emph{IEEE Transactions on Automatic
  Control}, 2023.

\bibitem{erdemir2020privacy}
E.~Erdemir, P.~L. Dragotti, and D.~G{\"u}nd{\"u}z, ``Privacy-aware time-series
  data sharing with deep reinforcement learning,'' \emph{IEEE Transactions on
  Information Forensics and Security}, vol.~16, pp. 389--401, 2020.

\bibitem{unsal2023information}
A.~{\"U}nsal and M.~{\"O}nen, ``Information-theoretic approaches to
  differential privacy,'' \emph{ACM Computing Surveys}, vol.~56, no.~3, pp.
  1--18, 2023.

\bibitem{leong2018transmission}
A.~S. Leong, D.~E. Quevedo, D.~Dolz, and S.~Dey, ``Transmission scheduling for
  remote state estimation over packet dropping links in the presence of an
  eavesdropper,'' \emph{IEEE Transactions on Automatic Control}, vol.~64,
  no.~9, pp. 3732--3739, 2018.

\bibitem{wang2022transmission}
L.~Wang, X.~Cao, H.~Zhang, C.~Sun, and W.~X. Zheng, ``Transmission scheduling
  for privacy-optimal encryption against eavesdropping attacks on remote state
  estimation,'' \emph{Automatica}, vol. 137, p. 110145, 2022.

\bibitem{leong2019information}
A.~S. Leong, D.~E. Quevedo, D.~Dolz, and S.~Dey, ``Information bounds for state
  estimation in the presence of an eavesdropper,'' \emph{IEEE Control Systems
  Letters}, vol.~3, no.~3, pp. 547--552, 2019.

\bibitem{huang2021encryption}
L.~Huang, K.~Ding, A.~S. Leong, D.~E. Quevedo, and L.~Shi, ``Encryption
  scheduling for remote state estimation under an operation constraint,''
  \emph{Automatica}, vol. 127, p. 109537, 2021.

\bibitem{wu2016did}
F.~Wu and Z.~Li, ``Where did you go: Personalized annotation of mobility
  records,'' in \emph{Proceedings of the 25th ACM International on Conference
  on Information and Knowledge Management}, 2016, pp. 589--598.

\bibitem{cover1999elements}
T.~M. Cover, \emph{Elements of information theory}.\hskip 1em plus 0.5em minus
  0.4em\relax John Wiley \& Sons, 1999.

\bibitem{han2015stochastic}
D.~Han, Y.~Mo, J.~Wu, S.~Weerakkody, B.~Sinopoli, and L.~Shi, ``Stochastic
  event-triggered sensor schedule for remote state estimation,'' \emph{IEEE
  Transactions on Automatic Control}, vol.~60, no.~10, pp. 2661--2675, 2015.

\bibitem{demirel2018tradeoffs}
B.~Demirel, A.~S. Leong, V.~Gupta, and D.~E. Quevedo, ``Tradeoffs in stochastic
  event-triggered control,'' \emph{IEEE Transactions on Automatic Control},
  vol.~64, no.~6, pp. 2567--2574, 2018.

\bibitem{bertsekas2019reinforcement}
D.~Bertsekas, \emph{Reinforcement learning and optimal control}.\hskip 1em plus
  0.5em minus 0.4em\relax Athena Scientific, 2019, vol.~1.

\bibitem{fiez2020implicit}
T.~Fiez, B.~Chasnov, and L.~Ratliff, ``Implicit learning dynamics in
  stackelberg games: Equilibria characterization, convergence analysis, and
  empirical study,'' in \emph{International Conference on Machine
  Learning}.\hskip 1em plus 0.5em minus 0.4em\relax PMLR, 2020, pp. 3133--3144.

\bibitem{weng2023optimal}
C.~Weng, E.~Nekouei, and K.~H. Johansson, ``Optimal privacy-aware dynamic
  estimation,'' \emph{arXiv preprint arXiv:2311.05896}, 2023.

\bibitem{tsur2023neural}
D.~Tsur, Z.~Aharoni, Z.~Goldfeld, and H.~Permuter, ``Neural estimation and
  optimization of directed information over continuous spaces,'' \emph{IEEE
  Transactions on Information Theory}, 2023.

\bibitem{Krishnamurthy_2016}
V.~Krishnamurthy, \emph{Partially Observed Markov Decision Processes: From
  Filtering to Controlled Sensing}.\hskip 1em plus 0.5em minus 0.4em\relax
  Cambridge University Press, 2016.

\bibitem{petersen2008matrix}
K.~B. Petersen, M.~S. Pedersen \emph{et~al.}, ``The matrix cookbook,''
  \emph{Technical University of Denmark}, vol.~7, no.~15, p. 510, 2008.

\end{thebibliography}
	\begin{IEEEbiography}[{\includegraphics[width=1in,height=1.2in,clip,keepaspectratio]{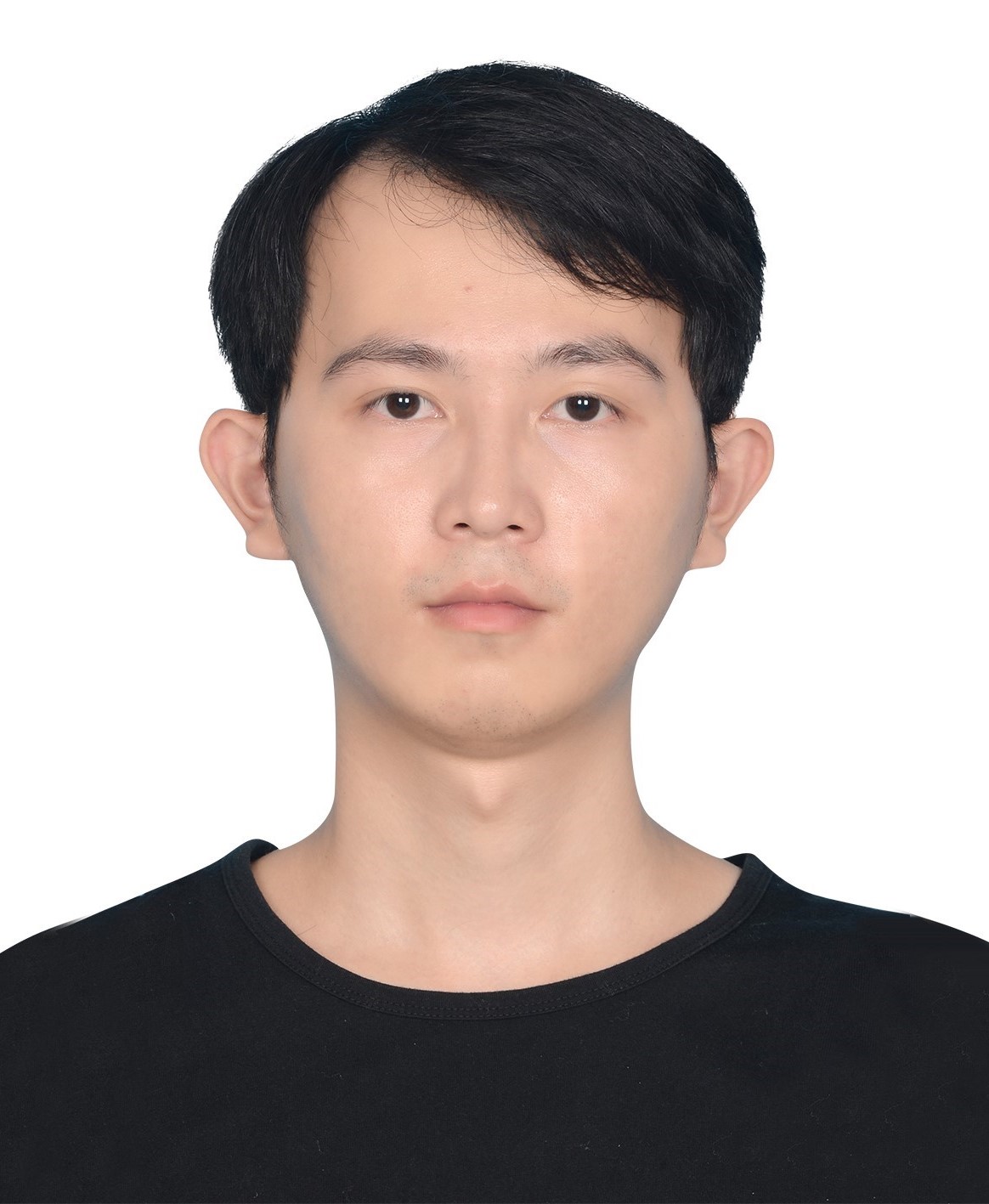}}]{Chuanghong Weng} is currently pursuing the
		Ph.D. degree with the Department of Electrical
		Engineering, City University of Hong Kong, Hong
		Kong. He received the M.S. degree from
		South China University of Technology, Guangzhou, China.
		
		His research interests include privacy in networked control systems and safe decision-makings.
	\end{IEEEbiography}
	\begin{IEEEbiography}[{\includegraphics[width=1in,height=1.2in,clip,keepaspectratio]{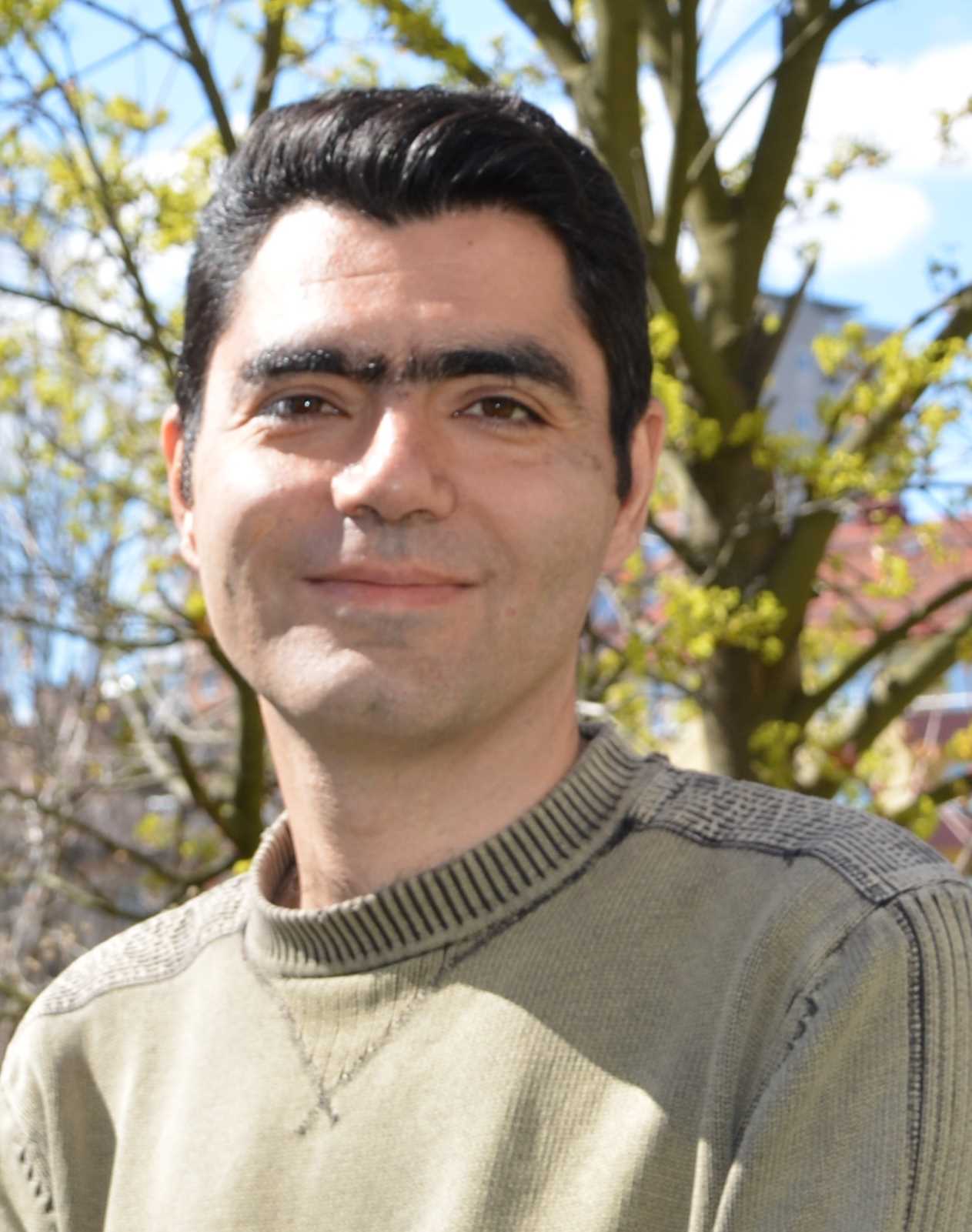}}]{Ehsan Nekouei}  received the
		B.S. degree from the Shahid Bahonar University
		of Kerman, Kerman, Iran, the M.S. degree from
		Tarbiat Modares University, Tehran, Iran, and
		the Ph.D. degree in electrical engineering from
		the University of Melbourne, Melbourne, VIC,
		Australia, in 2003, 2006, and 2013, respectively.
		
		He is currently an Assistant Professor with the
		Department of Electrical Engineering, City University of Hong Kong, Hong Kong. From 2014 to
		2019, he held postdoctoral positions at the KTH
		Royal Institute of Technology, Stockholm, Sweden, and the University of
		Melbourne, Melbourne, Australia.
		His research interests include privacy in networked control systems
		and integrated processing of human decision-making data.
	\end{IEEEbiography}
	\appendices	
	\section{Proof of Lemma \ref{Lm.OPEst}}\label{App.Lm.OPEst}
	Given the sampling policy, the mutual information is fixed. Therefore, optimizing the reconstructor reduces to minimizing the reconstruction loss.
	
	Since the reconstruction $\tilde{X}_k$ will not affect the evolution of $X_k$, the reconstruction history $\tilde{X}^{k-1}$ is redundant for the reconstruction optimization. Therefore, the reconstructor can greedily optimize the reconstruction loss at every time step. If $Z_k = \emptyset$, then the greedy optimization in \eqref{Eq.OPTEst} is optimal. If $Z_k = X_k$, since $l_D$ is non-negative, the optimal reconstruction $\tilde{X}_k^\star$ is equal to $Z_k$, which also be represented using \eqref{Eq.OPTEst}.
	
	According to \cite{Krishnamurthy_2016}, when the loss function $l_D\left( X_k, \tilde{X}_k \right)$ is defined as   $l_D\left( X_k, \tilde{X}_k \right) = \left( X_k - \tilde{X}_k \right)^\top \left( X_k - \tilde{X}_k \right)$, 
		the optimal solution is given by the conditional expectation 
		$\mathsf{E}\left[ \left. X_k \right| Z^{k} \right]$. 
	
	\begin{figure*}
		\begin{align}\label{Eq.AppVk1}
			\setcounter{equation}{27}
			\mathsf{E}\left[ V_{k+1}^{\star}\left( b_{k+1} \right) \middle| Z^{k-1} \right] &=P\left( N_k=0 \middle| Z^{k-1} \right) V_{k+1}^{\star}\left( \varPhi \left( b_k,\mathcal{A} _k, Z_k=\emptyset \right) \right) \nonumber
			\\
			&\quad+\int{P\left( N_k=1 \middle| x_k,Z^{k-1} \right) p\left( x_k \middle| Z^{k-1} \right) V_{k+1}^{\star}\left( \varPhi \left( b_k,\mathcal{A} _k,Z_k=x_k \right) \right) dx_k} \nonumber
			\\
			&=\int{\int{\int{a_k\left( N_k=0 \middle| x_k,m_k \right) b_k\left( x_k,m_k,y^k|Z^{k-1} \right)}V_{k+1}^{\star}\left( \varPhi \left( b_k,\mathcal{A} _k,Z_k=\emptyset \right) \right) dx_kdm_kdy^k}} \nonumber
			\\
			&+\int{\!\!\int{\!\!\int{a_k\left( N_k=1 \middle| x_k,m_k \right) b_k\left( x_k,m_k,y^k|Z^{k-1} \right) V_{k+1}^{\star}\left( \varPhi \left( b_k,\mathcal{A} _k,Z_k=x_k \right) \right) dx_kdm_kdy^k}}}.
		\end{align}
		\hrule
	\end{figure*}
	\section{Proof of Lemma \ref{Lm.Equivalent}}\label{App.Lm.Equivalent}
	Given the sampling policy $\pi = \{\pi_k\}_{k=0}^K$, we construct the policy collection $\mathcal{A} = \{\mathcal{A}_k\left(Z^{k-1}\right)\}_{k=0}^K$ as follows, 
		\small
		\begin{align}\label{Eq.conPiFromA}
			\mathcal{A} _k\!\left(\! Z^{k-1} \!\right) \!=\!\left\{\! \begin{array}{c}
				\!a_k\!\left( n_k \middle| x_k,m_k \right) \!=\!\pi _k\left( n_k \middle| x_k,m_k,Z^{k-1} \right) ,\\
				\forall n_k,x_k,m_k\!\\
			\end{array} \!\right\} \!, \!
		\end{align}  
		\normalsize
		which clusters policies that share the same information $Z^{k-1}$. Consequently, the auxiliary optimization objective \eqref{Eq.OP2} has the same value as the original optimization objective \eqref{Eq.OP}, \emph{i.e.}, $L\left(\mathcal{A}\right) = L\left(\pi\right)$.
	
	Conversely, given the policy collection $\mathcal{A} = \{\mathcal{A}_k\left(Z^{k-1}\right)\}_{k=0}^K$, which is implicitly dependent on $Z^{k-1}$, we define the sampling policy as  
		\begin{align}
			\setcounter{equation}{26}
			\pi_{k}\left(\left. {N_k}\right|{X_k, M_k, Z^{k-1}}\right) = a_k\left(\left. N_k\right|X_k, M_k\right), \nonumber
		\end{align}  
		and for $k=0$,  
		\begin{align}
			\pi_{0}\left(\left. {N_0}\right|{X_0}\right) = a_0\left(\left. N_0\right|X_0\right). \nonumber
		\end{align}  
		Clearly, the original optimization objective \eqref{Eq.OP} and the auxiliary optimization objective \eqref{Eq.OP2} yield the same value, \emph{i.e.}, $L\left(\pi\right) = L\left(\mathcal{A}\right)$.  
	
	This equivalence demonstrates that the optimal policy collection \( \mathcal{A}^\star \) minimizing \eqref{Eq.OP2} can be used to derive the optimal sampling policy \( \pi^\star \) minimizing \eqref{Eq.OP}. We next prove this by contradiction.  
	
	Assume that the minimum value of the objective function \eqref{Eq.OP2} is \( L^\star_1\left(\mathcal{A}_1^\star\right) \), with the corresponding optimal policy collection \( \mathcal{A}_1^\star \). If the policy \( \pi_1^\star \) constructed from \( \mathcal{A}_1^\star \) is not optimal for \eqref{Eq.OP}, then there exists another policy \( \pi_2 \) such that \( L_2\left(\pi_2\right) < L^\star_1\left(\pi_1^\star\right) \).  
	
	According to \eqref{Eq.conPiFromA}, we can construct a policy collection \( \mathcal{A}_2 \) such that \( L_2\left(\mathcal{A}_2\right) = L_2\left(\pi_2\right) \). Given that \( L^\star_1\left(\mathcal{A}_1^\star\right) = L^\star_1\left(\pi_1^\star\right) \), \( L_2\left(\mathcal{A}_2\right) = L_2\left(\pi_2\right) \) and \( L^\star_2\left(\pi_2\right) < L^\star_1\left(\pi_1^\star\right) \), it follows that \( L_2\left(\mathcal{A}_2\right) < L^\star_1\left(\mathcal{A}_1^\star\right) \), contradicting the assumption that \( \mathcal{A}_1^\star \) is the optimal policy collection.  
	\section{Proof of Theorem \ref{Th.OPTEQU}}\label{App.Th.OPTEQU}
	To prove Theorem \ref{Th.OPTEQU}, we first rewrite the auxiliary optimization problem by expanding the mutual information into multiple terms along time.
		\begin{lemma} \label{Lm.OptProblem}
			The privacy-aware sampling policy design can be written as
			\small
			\begin{align}
				\!\!\min_{\mathcal{A}}\! \mathsf{E}\!\!\left[\! \sum_{k=0}^K\!{\mathsf{E}\!\left[ l_D\!\left(\! X_k,\tilde{\pi}_{k}^{\star}\!\left( Z^k \right) \!\right) \!\middle| Z^{k-1} \right] \!\!+\!\lambda \mathsf{E}\!\!\left[ \! \log\! \frac{p\left( Y^k \middle| Z^k \right)}{p\!\left(\! Y^k \middle| Z^{k-1} \right)} \middle| \! Z^{k-1} \! \right]} \!\right]\!. \nonumber
			\end{align}
			\normalsize
		\end{lemma}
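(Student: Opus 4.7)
The plan is to prove the stated equality of cost functions termwise, from which the equivalence of the two minimization problems follows immediately. The work splits naturally into a distortion part and a mutual information part. The distortion part is almost immediate: the tower property of conditional expectation gives $\mathsf{E}[l_D(X_k,\tilde{\pi}_k^\star(Z^k))] = \mathsf{E}[\mathsf{E}[l_D(X_k,\tilde{\pi}_k^\star(Z^k))\mid Z^{k-1}]]$ for every $k$, and the finite sum can be pulled inside the outer expectation.

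The nontrivial content is a per-$k$ decomposition of $I(Z^K;Y^K)$. First I would apply the standard chain rule
$$I(Z^K;Y^K) = \sum_{k=0}^K I(Z_k; Y^K \mid Z^{k-1}),$$
and then argue that each summand collapses to $I(Z_k; Y^k \mid Z^{k-1})$ by invoking the conditional independence $Y_{k+1}^K \perp Z_k \mid Y^k, Z^{k-1}$. Once this reduction is established, the definition of conditional mutual information rewrites each term as $\mathsf{E}\!\left[\log\!\frac{p(Y^k\mid Z^k)}{p(Y^k\mid Z^{k-1})}\right]$, which by a second application of the tower property equals $\mathsf{E}\!\left[\mathsf{E}\!\left[\log\!\frac{p(Y^k\mid Z^k)}{p(Y^k\mid Z^{k-1})}\,\big|\,Z^{k-1}\right]\right]$. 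Combining distortion and information contributions into a single outer expectation then yields the stated form of the objective.

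The main obstacle, and the step where the structure of the model genuinely enters, is verifying the Markov property $Y_{k+1}^K \perp Z_k \mid Y^k, Z^{k-1}$. To prove it, I would proceed in two pieces. First, because $\{Y_k\}$ is a first-order Markov process whose kernel $p(Y_{k+1}\mid Y_k)$ does not involve $X$, the future trajectory $Y_{k+1}^K$ is independent of $X^k$ given $Y^k$. Second, the sampler's decision $N_k$ is drawn from $\pi_k(\cdot\mid Z^{k-1},X_k,M_k)$, and the local memory $M_k$ is a deterministic function of $X^{k-1}$ and $N^{k-1}$, with $N^{k-1}$ itself recoverable from $Z^{k-1}$ (since $N_j=1$ exactly when $Z_j\neq\emptyset$). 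Consequently $Z_k$ is a measurable function of $(X^k, Z^{k-1})$ and fresh sampler randomness, so conditioned on $(X^k, Z^{k-1})$ it is independent of $Y^K$. A standard splitting argument over $X^k$ then combines these two facts to produce the desired conditional independence, which is the keystone of the whole reduction.
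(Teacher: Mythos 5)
Your proposal is correct and follows essentially the same route as the paper's proof: the chain rule for mutual information applied twice, elimination of the cross term $I\left(Z_k;Y_{k+1}^K \middle| Z^{k-1},Y^k\right)$ via the Markov chain $Y_{k+1}^K \to \left(Z^{k-1},Y^k\right) \to Z_k$, and the tower property for both the distortion and information terms. The only difference is that you sketch a justification of that Markov chain from the model primitives (the $X$-free kernel of $\left\{Y_k\right\}$ and the fact that $Z_k$ is generated from $\left(X^k,Z^{k-1}\right)$ and independent sampler randomness), whereas the paper simply asserts it.
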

	\begin{proof}
		See in Appendix \ref{App.Lm.OptProblem}.
	\end{proof}
	We next show the one-step loss in Lemma \ref{Lm.OptProblem} is a function of the policy collection $\mathcal{A}_k$ and the belief state $b_k$.
	\begin{lemma}\label{Lm.beliefLoss}
		The one-step loss can be expressed as
		\begin{align}
			&\mathsf{E}\!\left[ l_D\left( X_k,\tilde{\pi}_{k}^{\star}\left( Z^k \right) \right) \middle| Z^{k-1} \right] +\lambda \mathsf{E}\!\left[ \log \frac{p\left( Y^k \middle| Z^k \right)}{p\left( Y^k \middle| Z^{k-1} \right)} \middle| Z^{k-1}\! \right]\nonumber \\ &= \tilde{l}_D\left( \mathcal{A} _k,b_k,\tilde{\pi}_{k}^{\star}\left( Z_k=\emptyset,Z^{k-1} \right) \right)+\lambda l_I\left( \mathcal{A} _k,b_k \right), \nonumber
		\end{align}
		where $l_D$ and $l_I$ are functions of $\mathcal{A}_k$ and $b_k$, and the belief state $b_{k+1}$ can be updated via \eqref{Eq.BSUP} . 
	\end{lemma}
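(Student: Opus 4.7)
I would establish the three assertions (analytic form of the distortion term, analytic form of the information term, and the recursion for $b_k$) by direct computation, using Bayes' rule together with the Markov structure of the model. The common theme is that every quantity that appears to depend on $Z^{k-1}$ in fact depends on it only through the marginals and likelihoods built from $b_k$ and $\mathcal{A}_k$.

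For the distortion term, I would condition on $N_k$. Lemma~\ref{Lm.OPEst} yields $\tilde{X}_k=X_k$ when $N_k=1$, so that branch contributes zero and only $N_k=0$ survives. Writing
\[
\mathsf{E}\!\left[l_D\!\left(X_k,\tilde{\pi}_k^\star(Z_k=\emptyset,Z^{k-1})\right)\,\middle|\,Z^{k-1}\right]=\int l_D\!\left(x_k,\tilde{\pi}_k^\star(\emptyset,Z^{k-1})\right)p(x_k\mid Z^{k-1})\,dx_k,
\]
and noting that $p(x_k\mid Z^{k-1})=\int b_k(x_k,m_k,y^k)\,dm_k\,dy^k$ is a marginal of $b_k$, while the optimal reconstructor depends on $Z^{k-1}$ only through $p(x_k\mid N_k=0,Z^{k-1})$ (itself an explicit functional of $a_k$ and $b_k$), shows that the whole term equals $\tilde{l}_D(\mathcal{A}_k,b_k,\cdot)$.

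For the information term, I would split the conditional expectation over $Z_k$ into the two branches $z_k=\emptyset$ and $z_k=x_k$, and in each branch apply Bayes' rule to express both the prior $p(y^k\mid Z^{k-1})=\int b_k\,dx_k\,dm_k$ and the posterior $p(y^k\mid z_k,Z^{k-1})$---obtained by multiplying by the likelihood $a_k(N_k\mid x_k,m_k)$ and, when $z_k=x_k$, by the indicator forcing $X_k=z_k$, then renormalizing---as explicit functionals of $b_k$ and $\mathcal{A}_k$. The weight $p(z_k\mid Z^{k-1})$ in the outer expectation is likewise the normalizing constant coming from the same integral, so the ratio and its expectation depend on $Z^{k-1}$ only through $b_k$, giving $l_I(\mathcal{A}_k,b_k)$.

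For the belief recursion, I would first compute $p(x_k,m_k,y^k\mid Z^k)$ by Bayes' rule from the likelihood $a_k(N_k\mid x_k,m_k)$ (with the extra observation constraint when $Z_k=X_k$); normalization then produces the denominators of \eqref{Eq.BSUP}. I would next propagate this distribution forward to $(x_{k+1},m_{k+1},y^{k+1})$ using the state kernel $p(x_{k+1}\mid x_k,y_k)$, the private-input kernel $p(y_{k+1}\mid y_k)$, and the deterministic memory update ($M_{k+1}=M_k\cup\{X_k\}$ if $N_k=0$, $M_{k+1}=M_k$ otherwise), then integrate out $(x_k,m_k,y_k)$. The main bookkeeping obstacle lies in this last step: because the memory update is a deterministic function of $(M_k,X_k,N_k)$, it enters the calculation as a Dirac factor that must be handled carefully to avoid double-counting of $X_k$ inside $M_{k+1}$, and to ensure that the two cases in \eqref{Eq.BSUP} emerge with exactly the marginalizations shown.
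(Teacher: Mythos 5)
Your overall route coincides with the paper's: condition on $N_k$ (equivalently on the two realizations of $Z_k$), use Bayes' rule to rewrite every conditional density in the distortion and information terms as an explicit functional of $a_k$ and $b_k$, and obtain the recursion for $b_{k+1}$ by a Bayes update followed by propagation through $p(x_{k+1}\mid x_k,y_k)$ and $p(y_{k+1}\mid y_k)$ with the deterministic memory update. Your handling of the information term and of the belief recursion (in particular the remark that the memory update enters as a Dirac relabeling rather than a true marginalization, so that $(x_k,m_k)$ survive inside $m_{k+1}$ and $y^k$ survives as a sub-vector of $y^{k+1}$) matches the appendix computations leading to \eqref{Eq.BSUP}.

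One step fails as written, however: the distortion term. After correctly discarding the $N_k=1$ branch via Lemma \ref{Lm.OPEst}, the surviving contribution still carries the indicator of the event $\{N_k=0\}$, namely
\begin{align}
&\mathsf{E}\left[ l_D\left(X_k,\tilde{\pi}_k^{\star}\left(Z_k=\emptyset,Z^{k-1}\right)\right)\mathbf{1}_{\{N_k=0\}}\,\middle|\,Z^{k-1}\right]\nonumber\\
&=\int\!\!\int\!\!\int a_k\left(N_k=0\,\middle|\,x_k,m_k\right) b_k\left(x_k,m_k,y^k\right)\nonumber\\
&\qquad\times l_D\left(x_k,\tilde{\pi}_k^{\star}\left(\emptyset,Z^{k-1}\right)\right)\,dx_k\,dm_k\,dy^k,\nonumber
\end{align}
so the integral is taken against the $a_k$-tilted joint, not against the prior marginal $p(x_k\mid Z^{k-1})$ as in your displayed formula. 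The indicator does not disappear merely because the other branch vanishes: $\mathsf{E}[l_D\,\mathbf{1}_{\{N_k=0\}}\mid Z^{k-1}]\neq\mathsf{E}[l_D\mid Z^{k-1}]$ unless the discard probability is independent of $(x_k,m_k)$. The omission is not cosmetic for this lemma: without the $a_k(N_k=0\mid x_k,m_k)$ weight the distortion term would depend on $\mathcal{A}_k$ only through the reconstructor, and the claimed functional form $\tilde{l}_D(\mathcal{A}_k,b_k,\cdot)$ would lose its explicit dependence on the policy collection — which is precisely the dependence the dynamic program in Theorem \ref{Th.OPTEQU} optimizes over. (The paper's display \eqref{Eq.DisLoss} is itself ambiguous on this point, but the appendix derivation \eqref{Eq.AppOneDistEq}, and the linear-Gaussian specialization where the term carries the prefactor $P(N_k=0\mid Z^{k-1})$, confirm that the weighted version is the correct one.) The fix is local: replace $p(x_k\mid Z^{k-1})$ by $a_k(N_k=0\mid x_k,m_k)\,b_k(x_k,m_k,y^k)$ and integrate over all three variables.
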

	\begin{proof}
		See in Appendix \ref{App.Lm.beliefLoss}.
	\end{proof}
	We next prove the optimality equation in Theorem \ref{Th.OPTEQU} based on Lemma \ref{Lm.OptProblem} and \ref{Lm.beliefLoss}.
	
	At time $K$, with the shared information $Z^{K-1}$, the optimal cost-to-go function can be defined as \eqref{Eq.ValueFuncK}.
		\small
		\begin{align}\label{Eq.ValueFuncK}
			&V_{K}^{\star}\left( Z^{K-1} \right) =\min_{\mathcal{A} _K} \mathsf{E}\left[ l_D\left( X_K,\tilde{\pi}_{K}^{\star}\left( Z^K \right) \right) \middle| Z^{K-1} \right]\nonumber\\ &\!\!\!\!\!+\!\lambda \mathsf{E}\!\left[ \!\log \frac{p\left( Y^K \middle| Z^K \right)}{p\left( Y^K \middle| Z^{K-1} \right)} \middle| Z^{K-1} \right] \!+\! \mathsf{E}\left[\!\left.V_{K+1}^\star\!\left(Z^K\right)\right| Z^{K-1}\right]\!, 
		\end{align}
		\normalsize
		with $V_{K+1}^\star\left(Z^K\right)\overset{\bigtriangleup}{=}0$. 
		According to Lemma \ref{Lm.beliefLoss}, $V_{K}^{\star}\left( Z^{K-1} \right)$ is determined by $b_K$, thus can be expressed as $V_{K}^{\star}\left( b_K \right)$.
	
	We next prove that $V_{k}^{\star}\left( Z^{k-1} \right)$ is only determined by $b_k$ if $V_{k+1}^{\star}\left( Z^{k} \right)$ only depends on $b_{k+1}$, by induction. According to the optimality principle, we can express the optimal cost-to-go function $V_{k}^{\star}\left( Z^{k-1} \right)$ with
	\begin{align}
		&V_{k}^{\star}\left( Z^{k-1} \right) =\min_{\mathcal{A} _k} \mathsf{E}\left[ l_D\left( X_k,\tilde{\pi}_{k}^{\star}\left( Z^k \right) \right) \middle| Z^{k-1} \right] \nonumber\\+&\lambda \mathsf{E}\!\left[\! \log \frac{p\left( Y^k \middle| Z^k \right)}{p\left( Y^k \middle| Z^{k-1} \right)} \middle| Z^{k-1} \!\right] \!+\!\mathsf{E}\!\left[\! V_{k+1}^{\star}\left( b_{k+1} \right) \middle| Z^{k-1} \!\right] . \nonumber
	\end{align}
	where the one-step loss depends on $b_k$ as shown in Lemma \ref{Lm.beliefLoss}.  $\mathsf{E}\left[ V_{k+1}^{\star}\left( b_{k+1} \right) \middle| Z^{k-1} \right]$ also depends on $b_k$ as shown in \eqref{Eq.AppVk1}.
	Therefore, the optimal cost-to-go function can be denoted with $V_{k}^{\star}\left( b_k \right)$.
	
	\section{Proof of Lemma \ref{Lm.GaussianBelief}}\label{App.Lm.GaussianBelief}
	\begin{figure*}
		\begin{align}\label{Eq.AppJointGau}
			\setcounter{equation}{29}
			&\pi _k\left( \left. N_k=0 \right|X_k,Z^{k-1} \right) \nonumber \\ 
			=&\exp \left( -\frac{1}{2}\left( X_k-g_k\left( Z^{k-1} \right) \right) ^{\top}f_{k}^{-1}\left( Z^{k-1} \right) \left( X_k-g_k\left( Z^{k-1} \right) \right) \right)  
			\nonumber \\
			=&\exp\! \left[\! -\frac{1}{2}\left(\! \left[ \begin{array}{c}
				\!X_k\!\\
				\!Y^k\!\\
			\end{array} \right] \!-\!\left[ \begin{array}{c}
				\!g_k\left(Z^{k-1}\right)\!\\
				\!0_{n_y\left( k+1 \right) \times 1}\!\\
			\end{array} \right] \right) ^{\!\!\top}\!\!\left[ \begin{matrix}
				f_{k}^{-1}\left( Z^{k-1} \right)&		0_{n_x\times n_y\left( k+1 \right)}\\
				0_{n_y\left( k+1 \right) \times n_x}&		0_{n_y\left( k+1 \right) \times n_y\left( k+1 \right)}\\
			\end{matrix} \right]\!\! \left( \left[ \begin{array}{c}
				\!X_k\!\\
				\!Y^k\!\\
			\end{array} \right] \!-\!\left[ \begin{array}{c}
				\!g_k\left(Z^{k-1}\right)\!\\
				\!0_{n_y\left( k+1 \right) \times 1}\!\\
			\end{array} \right] \right) \!\right] .
		\end{align}
		\hrule
	\end{figure*}
	\begin{figure*}
		\begin{align} \label{Eq.GauNkis0}
			p\left( \left. X_{k+1},Y^{k+1} \right|Z^{k} \right) &=\int{p\left( X_{k+1},Y^{k+1} \middle| x_k,Y^{k},Z^{k} \right) p\left( x_k,Y^{k} \middle| Z^{k} \right) dx_k} \nonumber
			\\
			&\overset{(a)}{=}\int{p\left( X_{k+1},Y^{k+1} \middle| x_k,Y^{k} \right) p\left( x_k,Y^{k} \middle| N_k=0,Z^{k-1} \right) dx_k} \nonumber
			\\
			&=\int{\mathcal{N} \left( \left[ \begin{array}{c}
					X_{k+1}\\
					Y^{k+1}\\
				\end{array} \right] ;A_{k+1}\left[ \begin{array}{c}
					x_k\\
					Y^{k}\\
				\end{array} \right] ,\varSigma _{k+1} \right) \mathcal{N} \left( \left[ \begin{array}{c}
					x_k\\
					Y^{k}\\
				\end{array} \right] ;\left[ \begin{array}{c}
					\tilde{x}_{k|k}\\
					\tilde{y}^{k|k}\\
				\end{array} \right] ,P_{k|k} \right) dx_k},
		\end{align} 
		\hrule
		\begin{align}\label{Eq.AppProYkP1}
			\setcounter{equation}{35}
			p\left( \left. X_{k+1},Y^{k+1} \right|Z^k \right) &=\int{p\left( X_{k+1},Y^{k+1} \middle| x_k,Y^k,Z^k \right) p\left( x_k,Y^k \middle| Z^k \right) dx_k}\nonumber
			\\
			&=\int{p\left( X_{k+1},Y^{k+1} \middle| x_k,Y^k \right) p\left( x_k,Y^k \middle| Z^k \right) dx_k}\nonumber
			\\
			&=\int{\mathcal{N} \left( \left[ \begin{array}{c}
					X_{k+1}\\
					Y^{k+1}\\
				\end{array} \right] ;A_{k+1}\left[ \begin{array}{c}
					x_k\\
					Y^k\\
				\end{array} \right] ,\varSigma _{k+1} \right) \mathcal{N} \left( \left[ \begin{array}{c}
					x_k\\
					Y^k\\
				\end{array} \right] ;\left[ \begin{array}{c}
					Z_k\\
					\tilde{y}^k\\
				\end{array} \right] ,P_{k|k} \right)dx_k}.
		\end{align}
		\hrule
	\end{figure*}
			
	We next prove Lemma \ref{Lm.GaussianBelief} by induction. Specially, we prove that $p\left( \left. X_k,Y^k \right|Z^{k} \right)$ and $p\left( \left. X_{k+1},Y^{k+1} \right|Z^k \right) $ are Gaussian with the assumption that $p\left( \left. X_k,Y^k \right|Z^{k-1} \right)$ is Gaussian. 
			
	For $k=0$, we have that $p\left(X_0,Y_0\right)$ is a joint Gaussian density distribution, which is one of assumptions about the system model.
			
	We assume $p\left( \left. X_k,Y^k \right|Z^{k-1} \right)$ is Gaussian defined as follows,
				\begin{align}
					p\left( \left. X_k,Y^k \right|Z^{k-1} \right) =\mathcal{N} \left( \left[ \begin{array}{c}
						X_k\\
						Y^k\\
					\end{array} \right] ;\left[ \begin{array}{c}
						\tilde{x}_{k|k-1}\\
						\tilde{y}^{k|k-1}\\
					\end{array} \right] ,P_{k|k-1} \right) , \nonumber
				\end{align}	
				and then prove that $p\left( \left. X_k,Y^k \right|Z^k \right) $ and $p\left( \left. X_{k+1},Y^{k+1} \right|Z^{k+1} \right) $ are also Gaussian. 
			
			If $N_k=0$, then
				\begin{align} \label{Eq.AppUpNk0}
					\setcounter{equation}{28}
					&p\left( \left. X_k,Y^k \right|Z^k \right) \nonumber\\=&p\left( \left. X_k,Y^k \right|N_k=0,Z^{k-1} \right) \nonumber
					\\
					=&\frac{\pi _k\left( \left. N_k=0 \right|X_k,Z^{k-1} \right) p\left( \left. X_k,Y^k \right|Z^{k-1} \right)}{P\left( \left. N_k=0 \right|Z^{k-1} \right)}, 
				\end{align}
				where $\pi _k\left( \left. N_k=0 \right|X_k,Z^{k-1} \right) $ can be expressed as a density function similar to the unnormalized joint Guassian density of $X_k$ and $Y^k$ in \eqref{Eq.AppJointGau}.	As stated in \cite{petersen2008matrix}, the normalized product of two Gaussian density functions is also Gaussian and can be derived analytically using the rules provided in \cite{petersen2008matrix}. Also, with $\int \int p\left( X_k, Y^k \mid Z^k \right) \, dX_k dY^k = 1$, consequently, $p\left( \left. X_k,Y^k \right|Z^k \right)$ in \eqref{Eq.AppUpNk0} is a Gaussian density function from the normalized product of $\pi _k\left( \left. N_k=0 \right|X_k,Z^{k-1} \right) $ and $p\left( \left. X_k,Y^k \right|Z^{k-1} \right)$. As a result, $p\left( \left. X_k,Y^k \right|Z^k \right)$ can be characterized by the following conditional mean and covariance
				\begin{align}
					\!\!\left[\!\! \begin{array}{c}
						\tilde{x}_{k|k}\\
						\tilde{y}^{k|k}\\
					\end{array} \!\!\right] \!=\!\left( D_k+I \right) ^{-1}&\!\left(\! D_k\!\left[\!\! \begin{array}{c}
						g_k\left( Z^{k-1} \right)\\
						0_{n_y\left( k+1 \right) \times 1}\\
					\end{array} \!\!\right] \!+\!\left[\!\! \begin{array}{c}
						\tilde{x}_{k|k-1}\\
						\tilde{y}^{k|k-1}\\
					\end{array} \!\!\right] \right) \!,\nonumber\\
					P_{k|k}=&\left( D_k+I \right) ^{-1}P_{k|k-1},\nonumber
				\end{align}
				where
				\begin{align}
					D_k=P_{k|k-1}\left[ \begin{matrix}
						f_{k}^{-1}\left( Z^{k-1} \right)&		0_{n_x\times n_y\left( k+1 \right)}\\
						0_{n_y\left( k+1 \right) \times n_x}&		0_{n_y\left( k+1 \right) \times n_y\left( k+1 \right)}\\
					\end{matrix} \right] . \nonumber
			\end{align}
			If $N_k=1$, then we have,
				\begin{align}
					&p\left( \left. X_k,Y^k \right|Z^k \right)\nonumber\\ =&p\left( \left. Y^k \right|N_k=1,X_k,Z^{k-1} \right) \delta \left( Z_k-X_k \right) \nonumber
					\\
					=&\frac{P\left( \left. N_k=1 \right|X_k,Y^k,Z^{k-1} \right) p\left( \left. Y^k \right|X_k,Z^{k-1} \right)}{\pi _k\left( \left. N_k=1 \right|X_k,Z^{k-1} \right)}\delta \left( Z_k-X_k \right) \nonumber
					\\
					=&p\left( \left. Y^k \right|X_k,Z^{k-1} \right) \delta \left( Z_k-X_k \right) ,\nonumber
				\end{align}
				where the last equality holds since the sampling policy $\pi_{k}$ is fully dependent on $X_{k}$ and $Z^{k-1}$, i.e., $P\left( \left. N_k=1 \right|X_k,Y^k,Z^{k-1} \right)=\pi _k\left( \left. N_k=1 \right|X_k,Z^{k-1} \right)$. Since $p\left( X_{k},Y^{k} \middle| Z^{k-1} \right)$ is a joint Gaussian density function, the conditional density $p\left( Y^{k} \middle| X_{k},Z^{k-1} \right)$ is a Gaussian density function of $Y^{k}$. According to the conditional density computation rule \cite{petersen2008matrix}, we have
				\begin{align}
					\tilde{y}^{k|k}=\tilde{y}^{k|k-1}+P_{k|k-1}^{yx}\left( P_{k|k-1}^{xx} \right) ^{-1}\left( X_k-\tilde{x}_{k|k-1} \right) .\nonumber
				\end{align}
				Also, given $Z_k\neq \emptyset$, $X_k$ is deterministic, which can be considered as a special Gaussian variable with the mean $Z_k$ and zero covariance. As a result, $p\left( \left. X_k,Y^k \right|Z^k \right)$ can be regarded as a joint Gaussian density function with $\tilde{x}_{k|k}=Z_{k}$ and
				\begin{align}
					P_{k|k}=\left[ \begin{matrix}
						0_{n_x\times n_x}&		0_{n_x\times n_y\left( k+1 \right)}\\
						0_{n_y\left( k+1 \right) \times n_x}&		P_{k|k}^{yy}\\
					\end{matrix} \right] ,\nonumber
				\end{align}
				where
				\begin{align}
					P_{k|k}^{yy}=P_{k|k-1}^{yy}-P_{k|k-1}^{yx}\left( P_{k|k-1}^{xx} \right) ^{-1}P_{k|k-1}^{xy}.\nonumber
			\end{align}
			\begin{figure*}
				\begin{align}\label{Eq.AppProbNk}
					\setcounter{equation}{36}
					&P\left( N_k=0 \middle| Z^{k-1} \right) \nonumber \\=&\int{P\left( N_k=0 \middle| x_k,Z^{k-1} \right) p\left( x_k \middle| Z^{k-1} \right) dx_k} \nonumber
					\\
					=&\sqrt{\left| 2\pi f_k\left( Z^{k-1} \right) \right|}\int{\mathcal{N} \left( x_k;g_k\left( Z^{k-1} \right) ,f_k\left( Z^{k-1} \right) \right) \mathcal{N} \left( x_k;\tilde{x}_{k|k-1},P_{k|k-1}^{xx} \right) dx_k} \nonumber
					\\
					=&\sqrt{\frac{\left| f_k\left( Z^{k-1} \right) \right|}{\left| \left( f_k\left( Z^{k-1} \right) +P_{k|k-1}^{xx} \right) \right|}}\exp \left( -\frac{1}{2}\left( g_k\left( Z^{k-1} \right) -\tilde{x}_{k|k-1} \right) ^{\top}\left( f_k\left( Z^{k-1} \right) +P_{k|k-1}^{xx} \right) ^{-1}\left( g_k\left( Z^{k-1} \right) -\tilde{x}_{k|k-1} \right) \right).
				\end{align}
				\hrule
				\begin{align}\label{Eq.AppInf}
					\mathsf{E}\left[ \frac{\log p\left( Y^k \middle| Z^k \right)}{\log p\left( Y^k \middle| Z^{k-1} \right)} \middle| Z^{k-1} \right] =&\int{\int{P\left( N_k=1 \middle| x_k,Z^{k-1} \right) p\left( x_k,y^k \middle| Z^{k-1} \right) \log p\left( y^k \middle| N_k=1,x_k,Z^{k-1} \right)}}dx_kdy^k \nonumber
					\\
					\quad+&P\left( N_k=0 \middle| Z^{k-1} \right) \int{p\left( y^k \middle| N_k=0,Z^{k-1} \right) \log p\left( y^k \middle| N_k=0,Z^{k-1} \right)}dy^k  \nonumber
					\\
					-&\int{p\left( y^k \middle| Z^{k-1} \right) \log p\left( y^k \middle| Z^{k-1} \right)}dy^k.
				\end{align}
				\hrule
			\end{figure*}
			
			We next prove that $p\left( \left. X_{k+1},Y^{k+1} \right|Z^k \right) $ is Gaussian. 
				If $N_k$ is 0, then we have \eqref{Eq.GauNkis0}, where $(a)$ follows the Markov chain $Z^k\to \left\{ X_k, Y^k\right\} \to \left\{ X_{k+1}, Y^{k+1}\right\}$. According to Theorem 3.4.2 in \cite{Krishnamurthy_2016}, the marginal result in \eqref{Eq.GauNkis0} is a joint Gaussian density function with the following mean and covariance,
				\begin{align} \label{Eq.GauKToKP1_1}
					\setcounter{equation}{31}
					\left[ \begin{array}{c}
						\tilde{x}_{k+1|k}\\
						\tilde{y}^{k+1|k}\\
					\end{array} \right] =A_{k+1}\left[ \begin{array}{c}
						\tilde{x}_{k|k}\\
						\tilde{y}^{k|k}\\
					\end{array} \right] ,
				\end{align}
				\begin{align} \label{Eq.GauKToKP1_2}
					P_{k+1|k}=A_{k+1}P_{k|k}A_{k+1}^{\top}+\varSigma _{k+1},
				\end{align}
				where
				\begin{align} \label{Eq.GauKToKP1_3}
					A_{k+1}=\left[ \begin{matrix}
						A&		0_{n\times kn_y}\\
						0_{n_y \left(k+1\right) \times n_x}&		I_{n_y\left( k+1 \right) \times n_y\left( k+1 \right)}\\
					\end{matrix} \right] ,
				\end{align}
				\begin{align}  \label{Eq.GauKToKP1_4}
					\!\!\!\!\!\!\!\!\!	\varSigma _{k+1}=\left[ \begin{matrix}
						Q&		0_{n\times n_y\left( k+1 \right)}\\
						0_{n_y\left( k+1 \right) \times n}&		0_{n_y\left( k+1 \right) \times n_y\left( k+1 \right)}\\
					\end{matrix} \right] .
				\end{align}
				When \( N_k = 1 \), we obtain \eqref{Eq.AppProYkP1}, which is analogous to the case where \( N_k = 0 \) in \eqref{Eq.GauNkis0}. Consequently, the same update rule for \( p\left( X_{k+1}, Y^{k+1} \middle| Z^k \right) \) follows as in \eqref{Eq.GauKToKP1_1}–\eqref{Eq.GauKToKP1_4}, except that \( \tilde{x}_{k|k} = Z_k \).
			\section{Proof of Lemma \ref{Lm.GaussianOpt}}\label{App.Lm.GaussianOpt}
			We first define the one-step loss as,
			\begin{align}
				l_k\left( \pi_k,Z^{k-1} \right) =&\mathsf{E}\left[ \left( X_k-\tilde{X}_k \right) ^{\top}\left( X_k-\tilde{X}_k \right) \middle| Z^{k-1} \right]\nonumber \\ +&\mathsf{E}\left[ \frac{\log p\left( Y^k \middle| Z^k \right)}{\log p\left( Y^k \middle| Z^{k-1} \right)} \middle| Z^{k-1} \right] .\nonumber
			\end{align}
			Obviously, the optimization problem is equivalent to 
			\begin{align}
				\min_{\left\{ \pi _k \right\} _{k=0}^{K}} L\left( \pi \right) =\min_{\left\{ \pi _k \right\} _{k=0}^{K}} \mathsf{E}\left[ \sum_{k=0}^K{l_k\left( \pi_k,Z^{k-1} \right)} \right] .\nonumber
			\end{align}
			
			We next show that $l_k\left( \pi_k,Z^{k-1} \right)$ can be analytically expressed with $\tilde{x}_{k|k-1}$ and $P_{k|k-1}$. First, for the conditional mean-squared error, we have,
			\begin{align}
				&\mathsf{E}\left[ \left( X_k-\tilde{X}_k \right) ^{\top}\left( X_k-\tilde{X}_k \right) \middle| Z^{k-1} \right] \nonumber\\=&P\!\left(\! N_k\!=\!0 \middle| Z^{k-1} \!\right) \!\mathsf{E}\!\left[\! \left(\! X_k-\tilde{X}_k \!\right) ^{\top}\!\!\left(\! X_k-\tilde{X}_k\! \right) \middle| N_k=0,Z^{k-1} \!\right] \nonumber
				\\
				=&P\left( N_k=0 \middle| Z^{k-1} \right) P_{k|k}^{xx}.\nonumber
			\end{align}
			According to the update rules in Lemma \ref{Lm.GaussianBelief}, if $N_k=0$, then $P_{k|k}$ can be written as
				\small
				\begin{align}
					& \left[\! \begin{matrix}
						P_{k|k}^{xx}&		\!\!P_{k|k}^{xy}\\
						P_{k|k}^{yx}&	\!\!	P_{k|k}^{yy}\\
					\end{matrix} \!\right] \nonumber \\
					=&\!\left(\! P_{k|k-1}\left[ \begin{matrix}
						f_{k}^{-1}\left( Z^{k-1} \right)&		0_{n_x\times n_y\left( k+1 \right)}\\
						0_{n_y\left( k+1 \right) \times n_x}&		0_{n_y\left( k+1 \right) \times n_y\left( k+1 \right)}\\
					\end{matrix} \right] +I \!\right) ^{-1}\!P_{k|k-1}\nonumber
					\\
					=&\!\left[\! \begin{matrix}
						P_{k|k-1}^{xx}f_{k}^{-1}\left( Z^{k-1} \right) +I&		0_{n_x\times n_y\left( k+1 \right)}\\
						P_{k|k-1}^{yx}f_{k}^{-1}\left( Z^{k-1} \right)&		I\\
					\end{matrix} \!\right] ^{-1}\!\left[\! \begin{matrix}
						P_{k|k-1}^{xx}&		\!\!P_{k|k-1}^{xy}\\
						P_{k|k-1}^{yx}&	\!\!	P_{k|k-1}^{yy}\\
					\end{matrix} \!\right], \nonumber.
				\end{align}
				\normalsize
				Based on the inverse rule of block matrices in \cite{petersen2008matrix}, the covariance update equations are given by
				\begin{align}
					P_{k|k}^{xx} &= f_k\left( Z^{k-1} \right) \left( f_k\left( Z^{k-1} \right) + P_{k|k-1}^{xx} \right)^{-1} P_{k|k-1}^{xx}, \nonumber
				\end{align}
				and
				\begin{align}
					P_{k|k}^{yy} &= P_{k|k-1}^{yy} - P_{k|k-1}^{yx} \left( f_k\left( Z^{k-1} \right) + P_{k|k-1}^{xx} \right)^{-1} P_{k|k-1}^{xy}. \nonumber
				\end{align}
				Thus, the conditional expectation of the squared estimation error is given by
				\begin{align}
					&\mathsf{E}\left[ \left( X_k-\tilde{X}_k \right) ^{\top}\left( X_k-\tilde{X}_k \right) \middle| Z^{k-1} \right]\nonumber \\ =&P\!\left(\! N_k=0 \middle| Z^{k-1} \!\right) \!f_k\left( Z^{k-1} \right)\! \left(\! f_k\left( Z^{k-1} \right) +P_{k|k-1}^{xx} \!\right) ^{-1}\!P_{k|k-1}^{xx}, \nonumber
				\end{align}
				where $P\left( N_k=0 \middle| Z^{k-1} \right)$ is given by \eqref{Eq.AppProbNk}. 
			
			As for the second term, \emph{i.e.}, the conditional mutual information in $l_k$, we first decompose it into three terms, as shown in \eqref{Eq.AppInf}. 
			\begin{figure*}
				\begin{align}\label{Eq.MUILoss1}
					\setcounter{equation}{38}
					&\int{\int{P\left( N_k=1 \middle| x_k,Z^{k-1} \right) p\left( x_k,y^k \middle| Z^{k-1} \right) \log p\left( y^k \middle| N_k=1,x_k,Z^{k-1} \right)}}dx_kdy^k \nonumber
					\\
					&=\int{p\left( x_k \middle| Z^{k-1} \right) P\left( N_k=1 \middle| x_k,Z^{k-1} \right) \int{p\left( y^k \middle| N_k=1,x_k,Z^{k-1} \right) \log p\left( y^k \middle| N_k=1,x_k,Z^{k-1} \right)}}dy^kdx_k \nonumber
					\\
					&\overset{(a)}{=}-\int{p\left( x_k \middle| Z^{k-1} \right) P\left( N_k=1 \middle| x_k,Z^{k-1} \right)}\left( \log \sqrt{\left| 2\pi P_{k|k}^{yy} \right|}+\frac{kn_y}{2} \right) dx_k \nonumber
					\\
					&\overset{(b)}{=}-\int{p\left( x_k \middle| Z^{k-1} \right) P\left( N_k=1 \middle| x_k,Z^{k-1} \right)}dx_k\left( \log \sqrt{\left| 2\pi \left( P_{k|k-1}^{yy}-P_{k|k-1}^{yx}\left( P_{k|k-1}^{xx} \right) ^{-1}P_{k|k-1}^{xy} \right) \right|}+\frac{kn_y}{2} \right) \nonumber
					\\
					&=-\int{p\left( x_k \middle| Z^{k-1} \right) \left( 1-P\left( N_k=0 \middle| x_k,Z^{k-1} \right) \right)}dx_k\left( \log \sqrt{\left| 2\pi \left( P_{k|k-1}^{yy}-P_{k|k-1}^{yx}\left( P_{k|k-1}^{xx} \right) ^{-1}P_{k|k-1}^{xy} \right) \right|}+\frac{kn_y}{2} \right) \nonumber
					\\
					&=-\left( 1-P\left( N_k=0 \middle| Z^{k-1} \right) \right) \left( \log \sqrt{\left| 2\pi \left( P_{k|k-1}^{yy}-P_{k|k-1}^{yx}\left( P_{k|k-1}^{xx} \right) ^{-1}P_{k|k-1}^{xy} \right) \right|}+\frac{kn_y}{2} \right) ,
				\end{align}
				\hrule
				\begin{align} \label{Eq.AppMUILoss2}
					&P\left( N_k=0 \middle| Z^{k-1} \right) \int{p\left( y^k \middle| N_k=0,Z^{k-1} \right) \log p\left( y^k \middle| N_k=0,Z^{k-1} \right)}dy^k \nonumber
					\\
					&=-P\left( N_k=0 \middle| Z^{k-1} \right) \left( \log \sqrt{\left| 2\pi P_{k|k}^{yy} \right|}+\frac{kn_y}{2} \right) \nonumber
					\\
					&=-P\left( N_k=0 \middle| Z^{k-1} \right) \left( \log \sqrt{\left| 2\pi \left(P_{k|k-1}^{yy}-P_{k|k-1}^{yx}\left( f_k\left( Z^{k-1} \right) +P_{k|k-1}^{xx} \right) ^{-1}P_{k|k-1}^{xy}\right) \right|}+\frac{kn_y}{2} \right).
				\end{align}
				\hrule
			\end{figure*}
			
			For the first term of the conditional mutual information in \eqref{Eq.AppInf}, we demonstrate that it is a function of $P_{k|k-1}$, as given in \eqref{Eq.MUILoss1}. Here, step $(a)$ follows from the entropy of the conditional Gaussian density $p\left( y^k \middle| N_k = 1, x_k, Z^{k-1} \right)$, step $(b)$ is derived using the update rules for the conditional variance $P_{k|k}^{yy}$, and $kn_y$ denotes the dimension of $Y^k$.
			\begin{figure*}
				\begin{align}\label{Eq.AppDisGrad1}
					\nabla _{\theta}\mathsf{E}\left[ \sum_{k=0}^K{l_D\left( X_k,\tilde{\pi}_{k,\varphi} \right)} \right] =\mathsf{E}\left[ \sum_{k=0}^K{\left( \mathrm{J}_{\tilde{\pi}_{k,\varphi}}\left( \theta \right)  \right) ^{\top}\nabla _{\tilde{\pi}_{k,\varphi}}l_D\left( X_k,\tilde{\pi}_{k,\varphi} \right)}+\left( \sum_{k=0}^K{l_D\left( X_k,\tilde{\pi}_{k,\varphi} \right)} \right) \left( \sum_{k=0}^K{\nabla _{\theta}\log \pi _{\theta}\left( N_k \middle| X^k,N^{k-1} \right)} \right) \right] .
				\end{align}
				\hrule
			\end{figure*}
			
			Similarly, we can compute the second term of the conditional mutual information in \eqref{Eq.AppInf} using \eqref{Eq.AppMUILoss2}.
			
			As for the last term of the conditional mutual information in \eqref{Eq.AppInf}, since $p\left( y^k \middle| Z^{k-1} \!\right)$ is a Gaussian density with the covariance $P_{k|k-1}^{yy}$, we have
			\begin{align}
				\int{\!p\!\left( y^k \middle| Z^{k-1} \!\right)\! \log p\!\left( y^k \middle| Z^{k-1} \!\right)}dy^k\!=-\!\log \! \sqrt{\left|\! 2\pi P_{k|k-1}^{yy} \!\right|}-\frac{kn_y}{2}. \nonumber
			\end{align}
			By reorganizing three terms, we finally obtain the one-step loss \eqref{Eq.OneStepLoss}. Since each term is conditional on $Z^{k-1}$, and determined by $f_k$ and $g_k$, we denote the one-step loss with $l_k\left(f_k, g_k, Z^{k-1} \right)$.
			\section{Proof of Lemma \ref{Lm.PG}}\label{App.Lm.PG}
			To update $\theta$, we fix $\phi$ and compute the gradient of the objective function  \eqref{Eq.ETParamOpt} for the sampler, which consists of two terms: the distortion and mutual information. 
			
			\textbf{1) The policy gradient of the distortion term:} We first consider the policy gradient of the distortion term as follows,
				\begin{align}
					&\nabla _{\theta}\mathsf{E}\left[ \sum_{k=0}^K{l_D\left( X_k,\tilde{\pi}_{k,\varphi} \right)} \right] \nonumber
					\\
					=&\mathsf{E}\!\left[\! \sum_{k=0}^K{\nabla _{\theta}l_D\!\left( X_k,\tilde{\pi}_{k,\varphi} \right)}+\left( \sum_{k=0}^K{\!l_D\left( X_k,\tilde{\pi}_{k,\varphi} \right)} \right)\! \nabla _{\theta}\log p_{\theta}\left( \tau \right) \!\right] \nonumber.
				\end{align}
				where $$\nabla _{\theta}l_D\left( X_k,\tilde{\pi}_{k,\varphi} \right) =\left( \mathrm{J}_{\tilde{\pi}_{k,\varphi}}\left( \theta \right) \right) ^{\top}\nabla _{\tilde{\pi}_{k,\varphi}}l_D\left( X_k,\tilde{\pi}_{k,\varphi} \right), $$ $\tau =\left\{ Y^K,X^K,N^K \right\} $, and $\mathrm{J}_{\tilde{\pi}_{k,\varphi}}\left(\theta\right)$ is the Jocabian matrix, which is an implicit function of $\theta$. To compute the policy gradient, we need to obtain the formulation of $\nabla _{\theta}\log p_{\theta}\left( \tau \right)$ and $\mathrm{J}_{\tilde{\pi}_{k,\varphi}}\left(\theta\right)$.
			
			\textbf{a) $\mathbf{\nabla _{\theta}\log p_{\theta}\left( \tau \right)}$: } With 
			\begin{align}
				p_{\theta}\!\left( \tau \right)\! =\!\prod_{k=0}^K{\!p\left(\! Y_k \middle| Y_{k-1} \!\right) p\!\left(\! X_k \middle| X_{k-1},Y_{k-1} \!\right) \pi _{\theta}\!\left(\! N_k \middle| X^k,N^{k-1} \!\right)},\nonumber
			\end{align}
			we have 
			\begin{align}
				\nabla _{\theta}\log p_{\theta}\left( \tau \right) =\sum_{k=0}^K{\nabla _{\theta}\log \pi _{\theta}\left( N_k \middle| X^k,N^{k-1} \right)},\nonumber
			\end{align}
			thus, the gradient turns to be \eqref{Eq.AppDisGrad1}.
			
			\textbf{b) $\mathbf{\mathrm{J}_{\tilde{\pi}_{k,\varphi}}\left( \theta \right) }$: }  
				To compute \( \mathrm{J}_{\tilde{\pi}_{k,\varphi}}\left( \theta \right) \), we first apply the chain rule, yielding  
				\begin{equation}  
					\mathrm{J}_{\tilde{\pi}_{k,\varphi}}\left( \theta \right) = \mathrm{J}_{\tilde{\pi}_{k,\varphi}}\left( \varphi \right) \mathrm{J}_{\varphi}\left( \theta \right),  
				\end{equation}  
				where \( \mathrm{J}_{\varphi}\left(\theta\right) \) is the Jacobian matrix. Since the mapping \( \varphi(\theta) \) is unknown, we cannot directly compute this Jacobian matrix. Inspired by \cite{fiez2020implicit}, we next derive \( \mathrm{J}_{\varphi}\left( \theta \right) \) using the implicit function theorem. 
			
			As \( \varphi \) represents the best response to \( \theta \), it satisfies
				\begin{align}
					\setcounter{equation}{41}
					\nabla _{\varphi}\tilde{L}\left( \theta ,\varphi \right) =\nabla _{\varphi}\mathsf{E}\left[ \sum_{k=0}^K{l_D\left( X_k,\tilde{\pi}_{k,\varphi} \right)} \right] =0.
				\end{align}
				Applying the implicit function theorem, we obtain  
				\begin{align}
					\nabla _{\theta}\nabla _{\varphi}\mathsf{E}\!\left[\! \sum_{k=0}^K{l_D\!\left(\! X_k,\tilde{\pi}_{k,\varphi} \!\right)} \!\right] \!+\!\nabla _{\varphi}^{2}\mathsf{E}\!\left[\! \sum_{k=0}^K{l_D\left( X_k,\tilde{\pi}_{k,\varphi} \right)} \!\right] \!\mathrm{J}_{\varphi}\left( \theta \right) \!=\!0. \nonumber
				\end{align}
				Thus, \( \mathrm{J}_{\varphi}\left( \theta \right) \) can be derived as shown in \eqref{Eq.AppJacMat1}.  
			\begin{figure*}
				\begin{align}\label{Eq.AppJacMat1}
					\setcounter{equation}{42}
					\mathrm{J}_{\varphi}\left( \theta \right) &=-\left( \nabla _{\varphi}^{2}\mathsf{E}\left[ \sum_{k=0}^K{l_D\left( X_k,\tilde{\pi}_{k,\varphi} \right)} \right] \right) ^{-1}\nabla _{\theta}\nabla _{\varphi}\mathsf{E}\left[ \sum_{k=0}^K{l_D\left( X_k,\tilde{\pi}_{k,\varphi} \right)} \right] \nonumber
					\\
					&=-\left( \nabla _{\varphi}^{2}\mathsf{E}\left[ \sum_{k=0}^K{l_D\left( X_k,\tilde{\pi}_{k,\varphi} \right)} \right] \right) ^{-1}\nabla _{\theta}\mathsf{E}\left[ \sum_{k=0}^K{\nabla _{\varphi}l_D\left( X_k,\tilde{\pi}_{k,\varphi} \right)} \right] \nonumber
					\\
					&=-\left( \nabla _{\varphi}^{2}\mathsf{E}\left[ \sum_{k=0}^K{l_D\left( X_k,\tilde{\pi}_{k,\varphi} \right)} \right] \right) ^{-1}\mathsf{E}\left[ \left( \sum_{k=0}^K{\nabla _{\varphi}l_D\left( X_k,\tilde{\pi}_{k,\varphi} \right)} \right) \otimes \left( \nabla _{\theta}\log p_{\theta}\left( \tau \right) \right)  \right] .
				\end{align}
				\hrule
			\end{figure*}
			
			\textbf{2) The gradient of the mutual information term:} 
			We finally study the gradient of the mutual information term as follows,
			\begin{align}
				\nabla _{\theta}I_{\theta}\left( Z^K;Y^K \right) =&\nabla _{\theta}\mathsf{E}\left[ \sum_{k=0}^K{\log \frac{p_{\theta}\left( Y^k \middle| Z^k \right)}{p_{\theta}\left( Y^k \middle| Z^{k-1} \right)}} \right] \nonumber
				\\
				=&\mathsf{E}\left[ \left( \sum_{k=0}^K{\log \frac{p_{\theta}\left( Y^k \middle| Z^k \right)}{p_{\theta}\left( Y^k \middle| Z^{k-1} \right)}} \right) \nabla _{\theta}\log p_{\theta}\left( \tau \right) \right] \nonumber\\&\;+\mathsf{E}\left[ \sum_{k=0}^K{\nabla _{\theta}\log \frac{p_{\theta}\left( Y^k \middle| Z^k \right)}{p_{\theta}\left( Y^k \middle| Z^{k-1} \right)}} \right] . \nonumber
			\end{align}
			where
			\begin{align}
				\mathsf{E}\left[ \nabla _{\theta}\log p_{\theta}\left( Y^k \middle| Z^k \right) \right] &=\mathsf{E}\left[ \frac{\nabla _{\theta}p_{\theta}\left( Y^k \middle| Z^k \right)}{p_{\theta}\left( Y^k \middle| Z^k \right)} \right] \nonumber
				\\
				&=\mathsf{E}\left[ \int{\nabla _{\theta}p_{\theta}\left( y^k \middle| Z^k \right)}dy^k \right] \nonumber
				\\
				&=\mathsf{E}\left[ \nabla _{\theta}\int{p_{\theta}\left( y^k \middle| Z^k \right)}dy^k \right] \nonumber
				\\
				&=0, \nonumber
			\end{align}
			and
			\begin{align}
				\mathsf{E}\left[ \nabla _{\theta}\log p_{\theta}\left( Y^k \middle| Z^{k-1} \right) \right] &=\mathsf{E}\left[ \frac{\nabla _{\theta}p_{\theta}\left( Y^k \middle| Z^{k-1} \right)}{p_{\theta}\left( Y^k \middle| Z^{k-1} \right)} \right] \nonumber
				\\
				&=\mathsf{E}\left[ \int{\nabla _{\theta}p_{\theta}\left( Y^k \middle| Z^{k-1} \right)}dy^k \right] \nonumber
				\\
				&=\mathsf{E}\left[ \nabla _{\theta}\int{p_{\theta}\left( Y^k \middle| Z^{k-1} \right)}dy^k \right] \nonumber
				\\
				&=0, \nonumber
			\end{align}
			Therefore, we have
			\begin{align}
				&\nabla _{\theta}I_{\theta}\left( Z^K;Y^K \right)\nonumber\\=&\mathsf{E}\!\left[\! \left(\! \sum_{k=0}^K{\!\log \frac{p_{\theta}\left( Y^k \middle| Z^k \right)}{p_{\theta}\left( Y^k \middle| Z^{k-1} \right)}} \!\right)\!\! \left(\! \sum_{k=0}^K{\!\nabla _{\theta}\log \pi _{\theta}\left( N_k \middle| X^k,N^{k-1} \right)} \!\right) \!\right] . \nonumber
			\end{align}
			
			\section{Proof of Lemma \ref{Lm.OptProblem} } \label{App.Lm.OptProblem}
			The mutual information can be expanded into multiple terms along time, \emph{i.e.},
				\begin{align}
					&I\left( Z^K;Y^K \right) \nonumber
					\\
					\overset{\left( a \right)}{=}&\sum_{k=0}^K{I\left( Z_k;Y^K \middle| Z^{k-1} \right)} \nonumber
					\\
					\overset{\left( b \right)}{=}&\sum_{k=0}^K{I\left( Z_k;Y^k \middle| Z^{k-1} \right) +I\left( Z_k;Y_{k+1}^{K} \middle| Z^{k-1},Y^k \right)}\nonumber
					\\
					\overset{\left( c \right)}{=}&\sum_{k=0}^K{I\left( Z_k;Y^k \middle| Z^{k-1} \right)} \nonumber
					\\ = & \sum_{k=0}^K\mathsf{E}\!\left[ \log \frac{p\left( Y^k \middle| Z^k \right)}{p\left( Y^k \middle| Z^{k-1} \right)} \right] \nonumber .
				\end{align}
				where $(a), (b)$ follow from the chain rule for mutual information, and $(c)$ is due to the fact that $Y^{K}_{k+1} \to \left(Z^{k-1},Y^k\right) \to Z_k$ is a Markov chain. Therefore, we have
				\begin{align}
					\!\!&\sum_{k=0}^{K}\mathsf{E} \left[\textit{l}_{D}\left(X_k,\tilde{\pi}^{\star}_k\left(Z^{k}\right)\right)\right]+\lambda I \left(Z^{K};Y^{K}\right)\nonumber\\ =&\sum_{k=0}^K{\mathsf{E}\left[ l_D\left( X_k,\tilde{\pi}_{k}^{\star}\left( Z^k \right) \right) \right] +\lambda \mathsf{E}\left[ \log \frac{p\left( Y^k \middle| Z^k \right)}{p\left( Y^k \middle| Z^{k-1} \right)} \right]}, \nonumber
					\\=&\mathsf{E}\!\!\left[\! \sum_{k=0}^K\!{\mathsf{E}\!\left[ l_D\!\left( X_k,\tilde{\pi}_{k}^{\star}\left( Z^k \right) \!\right) \!\middle| Z^{k-1} \right] \!\!+\!\lambda \mathsf{E}\!\!\left[ \! \log\! \frac{p\left( Y^k \middle| Z^k \right)}{p\!\left(\! Y^k \middle| Z^{k-1} \right)} \middle| Z^{k-1} \right]} \!\right] \!. \nonumber
			\end{align}
			\begin{figure*}
				\begin{align}\label{Eq.AppOneDistEq}
					\setcounter{equation}{43}
					\mathsf{E}\left[ l_D\left( X_k,\tilde{\pi}_{k}^{\star}\left( Z^k \right) \right) \middle| Z^{k-1} \right] 
					=&P\left( N_k=0 \middle| Z^{k-1} \right) \int{p\left( x_k \middle| N_k=0,Z^{k-1} \right) l_D\left( x_k,\tilde{\pi}_{k}^{\star}\left( Z_k=\!\emptyset,Z^{k-1} \right) \!\right)\! dx_k}\nonumber
					\\
					=&\int{\int{\int{a_k\left( N_k=0 \middle| x_k,m_k \!\right) b_k\left( x_k,m_k,y^k|Z^{k-1} \right) l_D\left( x_k,\tilde{\pi}_{k}^{\star}\left( Z_k=\emptyset,Z^{k-1} \right) \!\right)}dx_kdm_kdy^k}}\nonumber
					\\
					\overset{\bigtriangleup}{=}&\tilde{l}_D\left( \mathcal{A} _k,b_k,\tilde{\pi}_{k}^{\star}\left( Z_k=\emptyset,Z^{k-1} \right) \right) ,
				\end{align}
				\hrule
				\begin{align}\label{Eq.AppMIYKEq}
					&\mathsf{E}\left[ \log p\left( Y^k \middle| Z^k \right) \middle| Z^{k-1} \right] =\int{p\left( y^k \middle| N_k=0,Z^{k-1} \right) P\left( N_k=0 \middle| Z^{k-1} \right) \log p\left( y^k \middle| N_k=0,Z^{k-1} \right) dy^k} \nonumber
					\\
					&+\int{p\left( y^k \middle| N_k=1,x_k,Z^{k-1} \right) p\left( x_k \middle| N_k=1,Z^{k-1} \right) P\left( N_k=1 \middle| Z^{k-1} \right) \log p\left( y^k \middle| N_k=1,x_k,Z^{k-1} \right) dx_kdy^k},
				\end{align}
				\hrule
				\begin{align}\label{Eq.AppMIYKEq2}
					&\mathsf{E}\left[ \log p\left( Y^k \middle| Z^k \right) \middle| Z^{k-1} \right] \nonumber \\ &=\!\!\!\int{\!\!\!\left( \!\int{\!\!\!\int{\!\! a_k \! \left( N_k=0 \middle| x_k,m_k \right) p\!\left( x_k,m_k,y^k \middle| Z^{k-1} \right)}dx_kdm_k}\! \right)\! \log \! \left(\!\! \frac{\int\!\!{\int\!\!{a_k\left( N_k=0 \middle| x_k,m_k \right) p\left( x_k,m_k,y^k \middle| Z^{k-1} \right)}dx_kdm_k}}{\int\!\!{\int\!\!{\int\!\!{a_k\!\left( N_k=0 \middle| x_k,m_k \right)\! b_k\!\left( x_k,m_k,y^k|Z^{k-1}\! \right)}dx_kdm_kdy^k}}} \!\!\right) dy^k}\nonumber
					\\
					&+\!\!\int{\!\!\left(\! \int{\!\!a_k\!\left( N_k=1 \middle| x_k,m_k \right) p\!\left( x_k,m_k,y^k \middle| Z^{k-1} \right) dm_k}\!\! \right)\! \log\! \left( \frac{\int{a_k\left( N_k=1 \middle| x_k,m_k \right) p\left( x_k,m_k,y^k \middle| Z^{k-1} \right) dm_k}}{\int{\!\int{\!a_k\left( N_k=1 \middle| x_k,m_k \right) b_k\left( x_k,m_k,y^k|Z^{k-1} \right) dm_kdy^k}}} \right) dx_kdy^k},
				\end{align}
				\hrule
				\begin{align}\label{Eq.AppMIYKM1Eq}
					\mathsf{E}\left[ \log p\left( Y^k \middle| Z^{k-1} \right) \middle| Z^{k-1} \right] &=\int{p\left( y^k \middle| Z^{k-1} \right) \log p\left( y^k \middle| Z^{k-1} \right) dy^k} \nonumber
					\\
					&=\int{\!\left( \int{\!\!\int{\!b_k\left( x_k,m_k,y^k|Z^{k-1} \right)}dx_kdm_k} \!\right)\! \log \!\left(\! \int{\!\!\int{\!b_k\left( x_k,m_k,y^k|Z^{k-1} \right)}dx_kdm_k} \right) dy^k},
				\end{align}
				\hrule
				\begin{align}\label{Eq.AppMILossEq}
					&l_I\left( \mathcal{A} _k,b_k \right) =-\int{\!\left( \int{\!\!\int{\!b_k\left( x_k,m_k,y^k|Z^{k-1} \right)}dx_kdm_k}\! \right) \!\log \!\left( \!\int{\!\!\int{\!b_k\left( x_k,m_k,y^k|Z^{k-1} \right)}dx_kdm_k} \right) dy^k} \nonumber
					\\
					&+\!\!\!\int{\!\!\!\left( \!\int{\!\!\!\int{\!\!a_k\!\left( N_k=0 \middle| x_k,m_k \right) p\!\left( x_k,m_k,y^k \middle| Z^{k-1} \right)}dx_kdm_k}\! \right) \!\log \!\left( \!\!\frac{\int{\!}\!\int{\!}\!a_k\left( N_k=0 \middle| x_k,m_k \right) p\left( x_k,m_k,y^k \middle| Z^{k-1} \right) dx_kdm_k}{\int{\!}\!\int{\!}\!\int{\!}\!a_k\!\left( N_k=0 \middle| x_k,m_k \right) \!b_k\!\left( x_k,m_k,y^k|Z^{k-1}\! \right) dx_kdm_kdy^k}\!\! \right) dy^k} \nonumber
					\\
					&+\!\!\int{\!\!\left( \!\int{\!\!a_k\!\left( N_k=1 \middle| x_k,m_k \right) p\!\left( x_k,m_k,y^k \middle| Z^{k-1} \right) dm_k}\!\! \right) \!\log \!\left( \frac{\int{a_k\left( N_k=1 \middle| x_k,m_k \right) p\left( x_k,m_k,y^k \middle| Z^{k-1} \right) dm_k}}{\int{\!\int{\!a_k\left( N_k=1 \middle| x_k,m_k \right) b_k\left( x_k,m_k,y^k|Z^{k-1} \right) dm_kdy^k}}} \right) dx_kdy^k}. 
				\end{align}
				\hrule
			\end{figure*}
			\section{Proof of Lemma \ref{Lm.beliefLoss}}\label{App.Lm.beliefLoss}
			Let $b_k$ be the belief state, \emph{i.e.}, $b_k\left( x_k,m_k,y^{k} \right) =p\left( x_k,m_k,y^{k}|Z^{k-1} \right) $. Given the policy collection $\mathcal{A}_k$ and the sent information $Z_k$, if $N_k$ is zero, then we have $M_{k+1}=\{M_k,X_k\}$, and
			$b_{k+1}$ can be updated as follows
			\begin{align}
				&b_{k+1}\left( x_{k+1},m_{k+1},y^{k+1} \right) \nonumber\\ =& p\left( x_{k+1},m_k,x_k,y^{k+1}|N_k=0,Z^{k-1} \right) \nonumber
				\\
				=&\frac{p\!\left(\! x_{k+1} \middle| x_k,y_k \!\right) p\!\left(\! y_{k+1} \middle| y_k \!\right) b_k\!\left(\! x_k,m_k,y^k|Z^{k-1} \!\right) a_k\!\left(\! N_k\!=\!0 \middle| x_k,m_k \!\right)}{P\left( N_k=0 \middle| Z^{k-1} \right)}\nonumber
				\\
				=&\frac{p\!\left(\! x_{k+1} \middle| x_k,y_k \!\right) p\!\left(\! y_{k+1} \middle| y_k \!\right) b_k\!\left(\! x_k,m_k,y^k|Z^{k-1} \!\right) a_k\!\left(\! N_k\!=\!0 \middle| x_k,m_k \!\right)}{\int{\int{\int{a_k\left( N_k=0 \middle| x_k,m_k \right) b_k\left( x_k,m_k,y^k|Z^{k-1} \right)}dx_kdm_kdy^k}}}. \nonumber
			\end{align}  
			If $N_k$ is one, then we have $M_{k+1}=M_k$, 
			and $b_{k+1}$ can be updated via
			\small
			\begin{align}
				&b_{k+1}\left( x_{k+1},m_{k+1},y^{k+1} \right)\nonumber\\ =&p\left( x_{k+1},m_k,y^{k+1}|N_k=1,X_k,Z^{k-1} \right) \nonumber
				\\
				=&\frac{p\!\left(\! x_{k+1} \middle| X_k,y_k \!\right) p\!\left(\! y_{k+1} \middle| y_k \!\right) b_k\!\left(\! X_k,m_k,y^k|Z^{k-1} \!\right) a_k\!\left(\! N_k=1 \middle| X_k,m_k \!\right)}{P\left( N_k=1 \middle| X_k,Z^{k-1} \right) p\left( X_k \middle| Z^{k-1} \right)}\nonumber
				\\
				=&\frac{p\!\left(\! x_{k+1} \middle| X_k,y_k \!\right) p\!\left(\! y_{k+1} \middle| y_k \!\right) b_k\!\left(\! X_k,m_k,y^k|Z^{k-1} \!\right) a_k\!\left(\! N_k=1 \middle| X_k,m_k \!\right)}{\int{\int{a_k\left( N_k=1 \middle| X_k,m_k \right) b_k\left( X_k,m_k,y^k|Z^{k-1} \right) dm_kdy^k}}}.\nonumber
			\end{align}  
			\normalsize
			
			In \eqref{Eq.AppOneDistEq}, we show the conditional reconstruction error is a function of $\mathcal{A}_k$, $b_k$.
			
			Next, we analyze the conditional mutual information. $\mathsf{E}\left[ \log p\left( Y^k \middle| Z^k \right) \middle| Z^{k-1} \right]$ can be written as \eqref{Eq.AppMIYKEq}, where
			\small
			\begin{align}
				&p\left( y^k \middle| N_k=0,Z^{k-1} \right) \nonumber
				\\
				=&\frac{\int{\!\int{\!\pi_k\left(\! N_k=0 \middle| x_k,m_k,Z^{k-1} \right) p\left( x_k,m_k,y^k \middle| Z^{k-1} \right)}dx_kdm_k}}{P\left( N_k=0 \middle| Z^{k-1} \right)} \nonumber
				\\
				=&\frac{\int{\!\int{\!a_k\left(\! N_k=0 \middle| x_k,m_k \right) p\left( x_k,m_k,y^k \middle| Z^{k-1} \right)}dx_kdm_k}}{P\left( N_k=0 \middle| Z^{k-1} \right)} \nonumber
				\\
				=&\frac{\int{\int{a_k\left( N_k=0 \middle| x_k,m_k \right) p\left( x_k,m_k,y^k \middle| Z^{k-1} \right)}dx_kdm_k}}{\int{\int{\int{a_k\left( N_k=0 \middle| x_k,m_k \right) b_k\left( x_k,m_k,y^k|Z^{k-1} \right)}dx_kdm_kdy^k}}}, \nonumber
			\end{align}
			\normalsize
			and
			\begin{align}
				&p\left( y^k \middle| N_k=1,x_k,Z^{k-1} \right) \nonumber\\=&\frac{\int{a_k\left( N_k=1 \middle| x_k,m_k \right) p\left( x_k,m_k,y^k \middle| Z^{k-1} \right) dm_k}}{p\left( x_k \middle| N_k=1,Z^{k-1} \right) P\left( N_k=1 \middle| Z^{k-1} \right)} \nonumber
				\\
				=&\frac{\!\int{a_k\left( N_k=1 \middle| x_k,m_k \right) p\left( x_k,m_k,y^k \middle| Z^{k-1} \right) dm_k}}{\int{\!\!\int{\!\!a_k\!\left( N_k=1 \middle| x_k,m_k \right) \!b_k\!\left( x_k,m_k,y^k|Z^{k-1} \right) dm_kdy^k}}}. \nonumber
			\end{align}
			Thus, we have \eqref{Eq.AppMIYKEq2}, which depends on $b_k$ and $\mathcal{A}_k$. Similarly, $\mathsf{E}\left[ \log p\left( Y^k \middle| Z^{k-1} \right) \middle| Z^{k-1} \right]$ is a function of $b_k$ as shown in \eqref{Eq.AppMIYKM1Eq}.
			
			By reorganizing these terms, we show $l_I\left( \mathcal{A} _k,b_k \right)$ depends on $\mathcal{A}_k$ and $b_k$ in \eqref{Eq.AppMILossEq}.
			
		\end{document}